\documentclass[final,onefignum,onetabnum,reqno]{amsart}

\usepackage[colorlinks=true,allcolors=blue]{hyperref}
\hypersetup{hypertexnames=false}

\usepackage[foot]{amsaddr}
\usepackage{amssymb,amsfonts,latexsym,amsmath}
\usepackage{amsthm}
\usepackage{thmtools}
\usepackage[capitalise]{cleveref}

\usepackage[square,numbers]{natbib}
\usepackage{bm}
\usepackage{bbold}

\usepackage{mathtools}

\crefname{equation}{}{}

\newtheorem{theorem}{Theorem}[section]
\newtheorem{lemma}[theorem]{Lemma}
\crefname{lemma}{Lemma}{Lemmas}
\Crefname{lemma}{Lemma}{Lemmas}
\newtheorem{corollary}[theorem]{Corollary}
\crefname{corollary}{Corollary}{Corollaries}
\Crefname{corollary}{Corollary}{Corollaries}
\newtheorem{proposition}[theorem]{Proposition}
\crefname{proposition}{Proposition}{Propositions}
\Crefname{proposition}{Proposition}{Propositions}

\theoremstyle{remark}
\newtheorem{remark}{Remark}
\crefname{remark}{Remark}{Remarks}
\Crefname{remark}{Remark}{Remarks}

\renewcommand{\le}{\leqslant}
\renewcommand{\ge}{\geqslant}

\let\phi=\varphi 
\renewcommand{\epsilon}{\varepsilon}
\undef\ul
\newcommand{\ol}[1]{\overline{#1}}
\newcommand{\ul}[1]{\underline{#1}}
\newcommand{\wt}[1]{\widetilde{#1}}
\newcommand{\wh}[1]{\widehat{#1}}

\DeclareMathOperator{\dist}{dist}

\undef\Span
\DeclareMathOperator{\Span}{Span}
\undef\div
\DeclareMathOperator{\div}{div}
\DeclareMathOperator{\supp}{supp}

\DeclareMathOperator{\diam}{diam}
\DeclareMathOperator{\Tr}{Tr}

\newcommand{\RR}{\mathbb{R}}
\newcommand{\CC}{\mathbb{C}}
\newcommand{\ZZ}{\mathbb{Z}}
\newcommand{\NN}{\mathbb{N}}
\newcommand{\PP}{\mathbb{P}}

\newcommand{\vGamma}{{\bm{\Gamma}}} 

\newcommand{\SO}{\mathrm{SO}}
\newcommand{\per}{\mathrm{per}} 
\newcommand{\cl}{\mathrm{cl}} 
\newcommand{\LL}{\mathrm{LL}} 
\newcommand{\xc}{\mathrm{xc}} 
\newcommand{\TF}{\mathrm{TF}} 
\newcommand{\LO}{\mathrm{LO}} 
\newcommand{\UEG}{\mathrm{UEG}} 
\newcommand{\NUEG}{\mathrm{NUEG}} 
\newcommand{\GS}{\mathrm{GS}} 
\newcommand{\Mo}{\mathrm{Mo}} 
\newcommand{\SCE}{\mathrm{SCE}} 

\newcommand{\wconv}{\rightharpoonup} 
\newcommand{\iden}{\mathbb{1}} 
\newcommand{\lapl}{\bm{\Delta}} 
\newcommand{\grad}{\bm{\nabla}} 
\newcommand{\tetra}{\mathbb{\Delta}} 
\newcommand{\dd}{\mathrm{d}} 

\newcommand{\Lloc}{L_{\mathrm{loc}}} 
\newcommand{\Hloc}{H_{\mathrm{loc}}} 
\newcommand{\dm}{\mathcal{D}} 

\def\Xint#1{\mathchoice
{\XXint\displaystyle\textstyle{#1}}%
{\XXint\textstyle\scriptstyle{#1}}%
{\XXint\scriptstyle\scriptscriptstyle{#1}}%
{\XXint\scriptscriptstyle\scriptscriptstyle{#1}}%
\!\int}
\def\XXint#1#2#3{{\setbox0=\hbox{$#1{#2#3}{\int}$ }
\vcenter{\hbox{$#2#3$ }}\kern-.6\wd0}}

\def\dashint{\Xint-}

\newcommand{\TC}{\mathcal{T}}
\newcommand{\EC}{\mathcal{E}}
\newcommand{\Cc}{\mathcal{C}}
\newcommand{\LC}{\mathcal{L}}
\newcommand{\OC}{\mathcal{O}}

\newcommand{\SF}{\mathfrak{S}}

\title[Non-uniform Electron Gas]{The non-uniform electron gas}

\author{Mih\'aly A. Csirik$^{*,1}$
 \and Andre Laestadius$^{1,2}$}
\address[*]{Corresponding author, email: \url{csirik@gmail.com}}
\address[1]{Department of Computer Science, Oslo Metropolitan University, Norway}
\address[2]{Hylleraas Centre for Quantum Molecular Sciences, Department of Chemistry, University of Oslo, Norway}

\begin{document}

\begin{abstract}
The non-uniform (or inhomogeneous) electron gas has received much attention in many-body quantum mechanics and quantum chemistry in the early days of density functional theory, mainly as a theoretical device to construct gradient approximations via linear response theory. 
In this article, motivated by the recent works of Lewin, Lieb and Seiringer, we propose a definition of the quantum (resp. classical) non-uniform electron gas through the use of the grand-canonical Levy--Lieb functional (resp. the grand-canonical strictly correlated electrons functional), establish these systems as rigorous thermodynamic limits and analyze their basic properties.
The non-uniformity of the gas comes from an arbitrary lattice-periodic background density.
\end{abstract}

\maketitle

\section{Introduction}

Before the widespread acceptance of the name ``density functional theory'' early researchers of the field referred to what we call DFT today as 
the theory of the ``inhomogeneous electron gas'' \cite{hohenberg1964inhomogeneous,lundqvist1983theory}.
As far as we can tell, the (weakly) inhomogeneous electron gas is regarded a ``fictitious system'', or a stepping stone, and not a ``real'' physical system like the uniform electron gas (UEG).
Furthermore, it is always understood in a perturbative sense, i.e. through its linear response \cite{giuliani2008quantum,engel2011density}. These density perturbations are measured relative to the constant density of the homogeneous gas. In \cite{engel2011density} ``both the case of a localized perturbation as well as some periodic structure'' are claimed to be covered. The treatment of a localized perturbation in the thermodynamic limit remains somewhat unclear. Hence, we will consider the case of a lattice-periodic inhomogeneity in this work.
Rather than relying on linear response theory, we treat the inhomogeneous electron gas as a genuinely nonlinear (i.e. non‑perturbative) system. The weakly inhomogeneous electron gas may be considered within this nonlinear framework as a perturbation of the homogeneous gas by a slowly varying inhomogeneity.

Henceforth, we shall adopt the term \emph{non-uniform electron gas} (NUEG) in order to better align with the existing mathematical physics literature pertaining to the UEG~\cite{lewin2018statistical}. The main reason for this is that we will also employ density functionals to define the NUEG energy. The alternative path would be to consider the thermodynamic limit of an electronic ground-state problem like in the case of the Jellium~\cite{lieb1975thermodynamic}. In the Jellium model the external potential is obviously constant. However, it is unclear what external potential (or whether such potential exists) would generate a prescribed lattice-periodic inhomogeneity as a ground-state density. This issue is related to the so-called $v$-representability problem, which is poorly understood at the present---there is no analytic description of the $v$-representability set except in 1D on a finite interval with Neumann boundary conditions~\cite{Corso2026,corso-lae2025}.
The approach of \cite{lewin2018statistical} elegantly avoids these complications by defining the UEG energy through the use of density functionals and demanding the density to be constant. 
The price to pay is that we lose the connection with the ground-state problem and have to deal with the constrained optimization problem, which may not even have the usual Schr\"odinger equation as its Euler--Lagrange equation. The said connection is still unresolved for the quantum UEG, but is settled for the classical case \cite{cotar2017equality,lewin2019floating}.

Our definition of the NUEG in the 3D Coulomb case involves a fixed lattice-periodic \emph{inhomogeneity} $\zeta:\RR^3\to\RR_+$ having natural integrability and regularity properties. 
This $\zeta$ replaces the constant density $\rho_0$ in the
definitions of \cite{lewin2018statistical} based on the strictly-correlated electrons (SCE) functional (classical case) and the grand-canonical Levy--Lieb functional \cite{lewin2019universal}
(quantum case). In order to obtain a well-behaved energy per volume, we average the energy over the translations and rotations of $\zeta$. More precisely,
the (3D Coulomb) indirect energy per volume of the classical NUEG with inhomogeneity $\zeta$ in a bounded domain $\Omega\subset\RR^3$ is taken to be
\begin{multline*}
e_\Omega^\cl(\zeta)=\frac{1}{|\Omega|}\int_{\SO(3)}\dd R \dashint_{\RR^3}\dd a\,\Bigg[ \inf_{\substack{\PP\,\text{GC prob.}\\\rho_\PP=\iden_\Omega \zeta(R(\cdot-a))}} 
\sum_{n\ge 2} \sum_{1\le j<k\le n} \int_{\RR^{3n}} \frac{\PP_n(\dd x_1,\ldots,\dd x_n)}{|x_j-x_k|}\\
-\frac{1}{2}\iint_{\RR^3\times\RR^3} \frac{\iden_\Omega(x) \zeta(R(x-a))\iden_\Omega(y) \zeta(R(y-a))}{|x-y|}\, \dd x\dd y \Bigg]
\end{multline*}
in the classical case and analogously in the quantum case, which we denote by $e_{\Omega,\eta_\delta}^\hbar(\zeta)$. Here, a regularization function $\eta_\delta$ must be used, because quantum densities cannot be cut off sharply.
 Noting that the $a$-integrand above is lattice-periodic and locally integrable, the mean value over $\RR^3$ exists and equals the mean value over 
a unit cell of the lattice. This is a ``floating crystal-like'' construction, where
the lattice is allowed to float in the container $\Omega$.
Under the sole assumption that $\zeta^{4/3}$ is locally integrable 
(and that $\sqrt{\zeta}$ is in $\Hloc^1$ in the quantum case) we can prove that the thermodynamic limit of $e_\Omega^\cl(\zeta)$ (resp. $e_{\Omega,\eta_\delta}^\hbar(\zeta)$)
exists for rather general domain sequences. We denote the limiting functional by $e_\NUEG^\cl(\zeta)$ (resp. $e_\NUEG^\hbar(\zeta)$). In the quantum case, we also offer a slightly different definition based on a transition region near the boundary where the density is allowed to ``relax'' in an arbitrary manner analogously to \cite{lewin2018statistical}. 
We prove the equivalence of the two definitions.

Clearly, by putting $\zeta\equiv\rho_0$ we recover the UEG energy exactly, even at finite volume. 
We consider the slowly varying limits $e_\NUEG^{\cl/\hbar}(\zeta(\lambda\cdot))$ as $\lambda\to 0$ and establish the rate of its convergence to the UEG energy 
dictated by the local density approximation (LDA)
both in the classical-, and the quantum case. These follow from an adaptation of the method of \cite{lewin2020local}. 

In addition to establishing the NUEG as a physically reasonable system with expected properties, our work provides a complementary perspective on DFT---in fact, instead of integrable \emph{densities}, the functionals $e_\NUEG^{\cl/\hbar}(\zeta)$ introduced involve locally integrable \emph{inhomogeneities} $\zeta$ that are \emph{not} globally integrable. It would be interesting to extend the allowed set of inhomogeneities to a broader class, such as to some space of almost periodic functions.
On the technical side, we improve a very useful spatial decoupling estimate in \cref{impdecup} that appeared first in \cite{lewin2020local}, which is a crucial tool in the quantum case. 
\vspace{.3em}\\
\noindent\emph{Organization of the paper.} In \cref{sec:Prel}, we recall the necessary definitions and facts for our discussion. In \cref{sec:MR}, we present our main results in detail. \cref{clnuegsec} contains the definition of the classical non-uniform electron gas through its thermodynamic limit (\cref{clnuegthm}).
We also establish the local density approximation for slowly varying inhomogeneities (\cref{cllda}). The proofs of these results are provided in \cref{clproofs}.
The quantum case is covered in \cref{qnuegsec}, our main results being \cref{thermolimthm} and \cref{qlda}, which are analogs of the classical case. The proofs for the quantum gas may be found in \cref{qproofs}.
Finally, the Appendices contain the proofs of some auxiliary results. 
\vspace{.3em}\\
\noindent\emph{Acknowledgments.}
The authors are grateful to Erik I. Tellgren for many helpful discussions.
A.L. and M.A.Cs. have received funding from the ERC-2021-STG
under grant agreement No. 101041487 REGAL. A.L. was also funded by the Research
Council of Norway through CoE Hylleraas Centre for Quantum Molecular Sciences
Grant No. 262695. 

\section{Preliminary notions}\label{sec:Prel}

\subsection{Notations}

Throughout the paper, we set $C_1=(-\frac{1}{2},\frac{1}{2})^d$ for the origin-centered $d$-dimensional open unit cube. Its scalings will be denoted as $C_\ell=\ell C_1$ for any side length $\ell>0$. 
We will make use of the mean value notation
$$
\dashint_A u=\frac{1}{|A|}\int_A u
$$
for any $A\subset\RR^d$ of finite measure and $u: \RR^d\to\CC$. 
The group of orientation-preserving rotations of $\RR^d$ is denoted by $\SO(d)$. The group of all permutations of $\{1,\ldots,N\}$ is denoted by $\SF_N$. 

\subsection{Lattice-periodic functions}
Let $\LC\subset\RR^d$ be a lattice, i.e. is a discrete additive subgroup of $\RR^d$.
A unit cell $\Lambda\subset\RR^d$ of the lattice $\LC$ is a bounded open set such that its lattice translates $(\Lambda+l)_{l\in\LC}$ form a tiling of $\RR^d$, i.e. a pairwise disjoint family of sets such that $\bigcup_{l\in\LC} \ol{(\Lambda+l)}=\RR^d$.
A function $u:\RR^d\to\CC$ is said to be $\LC$-periodic if $u(x+l)=u(x)$ for all $l\in\LC$ and almost all $x\in\RR^d$. Such functions are almost everywhere determined by their values in a unit cell $\Lambda$.

We will repeatedly make use of the following elementary fact, the proof of which is provided in \cref{furthersec} for the convenience of the Reader.
\begin{theorem}\label{permeanthm}
For any $\LC$-periodic complex-valued function $u\in\Lloc^1(\RR^d)$, the mean value of $u$ over $\RR^d$ exists and equals its mean value over a unit cell $\Lambda\subset\RR^d$ of $\LC$, i.e.
\begin{equation}\label{permean}
\dashint_{\RR^d} u:=\lim_{L\to\infty} \dashint_{C_L+a} u =\dashint_{\Lambda} u,
\end{equation}
independently and uniformly in $a\in\RR^d$.
\end{theorem}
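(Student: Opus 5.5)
The plan is to compare the cube $C_L+a$ with the union of the lattice cells it contains. Fix a unit cell $\Lambda$ of $\LC$ and set $D=\diam\Lambda<\infty$. By periodicity and translation invariance of Lebesgue measure, $\int_{\Lambda+l}u=\int_\Lambda u$ for every $l\in\LC$. Also, $|\Lambda|>0$, and the closures $\ol{\Lambda+l}$ cover $\RR^d$ while the open cells are pairwise disjoint. Hence the boundary of $\Lambda$ must be Lebesgue-null, at least up to the null overlaps. I would justify this carefully: for instance, every point of the fundamental domain $\RR^d/\LC$ lies in exactly one open translate up to a null set, and the measure of $\ol\Lambda\setminus\Lambda$ is controlled by the covolume. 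As a consequence, $\RR^d=\bigcup_l(\Lambda+l)$ up to a null set, and $|\Lambda|$ equals the covolume of $\LC$.

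For the main estimate, fix $L$ and $a$, and split the index set. Let $I=\{l\in\LC:\Lambda+l\subset C_L+a\}$ and $B=\{l\in\LC:(\Lambda+l)\cap(C_L+a)\neq\emptyset,\ l\notin I\}$. Every cell indexed by $B$ meets the boundary of the cube, so it lies in the $D$-neighbourhood of $\partial(C_L+a)$. That neighbourhood has volume at most $c_d L^{d-1}D$ for $L\ge D$. Disjointness of the cells then gives $\#B\,|\Lambda|\le c_dL^{d-1}D$. Similarly, the cells indexed by $I$ fill $C_L+a$ up to the boundary cells, so $\#I\,|\Lambda|=L^d+O(L^{d-1})$, with the implicit constant depending only on $d$ and $D$.

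Next, I would write
\[
\int_{C_L+a}u=\#I\int_\Lambda u+\sum_{l\in B}\int_{(\Lambda+l)\cap(C_L+a)}u,
\]
and bound the last sum in absolute value by $\#B\int_\Lambda|u|$, again using periodicity of $|u|$. Since $u\in\Lloc^1$ and $\Lambda$ is bounded, $\int_\Lambda|u|<\infty$. Dividing by $L^d$ yields
\[
\Big|\dashint_{C_L+a}u-\dashint_\Lambda u\Big|\le \frac{C(d,D)}{L}\Big(\Big|\int_\Lambda u\Big|+\int_\Lambda|u|\Big)\cdot\frac{1}{|\Lambda|},
\]
and this bound is independent of $a$. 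This gives both existence of the limit and uniformity in $a$; independence of the choice of unit cell follows because the limit does not involve $\Lambda$.

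I expect the main obstacle to be the measure-theoretic bookkeeping for a general bounded open unit cell. One issue is showing that the boundary is null, so that the open cells cover $\RR^d$ up to a null set. The other is the counting bounds, which rely only on disjointness together with the diameter bound. If the null-boundary claim proves awkward, I would instead avoid it as follows. The function $\sum_l\iden_{\Lambda+l}$ is periodic, takes values in $\{0,1\}$, and integrates over a fundamental parallelepiped $P$ of $\LC$ to $|\Lambda|$. Using $P$, whose boundary is evidently null, as an auxiliary cell, the same argument shows that the mean over cubes equals $\dashint_P u$. I would then check that $\dashint_\Lambda u=\dashint_P u$ by the standard rearrangement $\int_\Lambda u=\sum_l\int_{(\Lambda\cap(P-l))+l}u$, provided $|\Lambda|=|P|$. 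Finally, $|\Lambda|=|P|$ itself follows by applying the cube count to $u\equiv1$.
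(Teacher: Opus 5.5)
\textbf{The gap: null boundary of $\Lambda$.} Your argument is the paper's argument. You split the cells meeting $C_L+a$ into interior and boundary cells, control the boundary cells by the volume of a neighbourhood of $\partial(C_L+a)$, and use periodicity on the full cells. Treating the shift $a$ directly, instead of proving $a=0$ and replacing $u$ by $u(\cdot+a)$, makes no real difference. The genuine gap is the step you single out as the main obstacle, and it cannot be closed. The definition of a unit cell (disjoint open translates whose closures cover $\RR^d$) does \emph{not} force $|\partial\Lambda|=0$. Equivalently, it does not force the open translates $\Lambda+l$ to cover $\RR^d$ up to a null set.

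\textbf{Counterexample.} Let $d=1$ and $\LC=\ZZ$. Let $K\subset[\frac14,\frac34]$ be a fat Cantor set (closed, nowhere dense, $|K|=m>0$), and put $\Lambda=(0,1)\setminus K$. Then $\Lambda$ is open and bounded, and its integer translates are pairwise disjoint. Since $K$ has empty interior, $\ol{\Lambda}=[0,1]$, so $\Lambda$ is a unit cell. Now take the periodic function $u=\iden_{K+\ZZ}$. Then $\dashint_\Lambda u=0$, while $\dashint_{C_L+a}u\to m$.

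So the following both fail in general:
\begin{itemize}
\item the decomposition $\int_{C_L+a}u=\#I\int_\Lambda u+\sum_{l\in B}\int_{(\Lambda+l)\cap(C_L+a)}u$;
\item the bound $\#I\,|\Lambda|\ge L^d-O(L^{d-1})$.
\end{itemize}
The theorem as stated is false for such $\Lambda$. Your fallback through the parallelepiped $P$ hits the same wall. The rearrangement gives $\dashint_\Lambda u=\dashint_P u$ only if $|\Lambda|=|P|$. Deriving $|\Lambda|=|P|$ from a cube count over $\Lambda$-cells with $u\equiv 1$ presupposes that the open cells fill the cube up to a null set, which is circular. In the example, $|\Lambda|=1-m<1=|P|$.

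\textbf{The fix.} Add the hypothesis $|\partial\Lambda|=0$. It holds for parallelepipeds, Wigner--Seitz cells, and any cell with Lipschitz boundary. Since each $\Lambda+l$ is open and disjoint from $\Lambda$, no point of $\partial\Lambda$ lies in any $\Lambda+l$. Hence the set not covered by the open translates is exactly $\bigcup_{l}(\partial\Lambda+l)$, which is null under the extra hypothesis. With that, your proof goes through as written, including the $a$-independent bound you state.

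The paper's proof tacitly makes the same assumption when it writes $\dashint_{C_L}u$ as a sum of integrals over cells. So once the hypothesis is made explicit, your proposal is complete and coincides with it.

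\textbf{A minor point.} If $\Lambda$ is disconnected, a cell indexed by $B$ need not meet $\partial(C_L+a)$. However, every point of such a cell is still within distance $D$ of $\partial(C_L+a)$: join it by a segment to a point of the same cell on the other side of the cube's boundary. So your count of $\#B$ stands.
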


For a fixed lattice $\LC$, we will adopt the notation $L_\per^p(\LC)$ for the space of $\LC$-periodic functions such that $u\in \Lloc^p(\RR^d)$ (equivalently, $u\in L^p(\Lambda)$ for a unit cell $\Lambda$ of $\LC$). The continuous inclusions $L^\infty_\per(\LC)\subset L^p_\per(\LC)\subset L^q_\per(\LC)\subset L^1_\per(\LC)$ hold true for all $1<q\le p<\infty$. Similarly, we introduce the Sobolev space $H^1_\per(\LC)$.

\subsection{Classical density functional theory}
Following \cite{di2025grand}, we say that $\PP=(\PP_n)_{n\ge 0}$ is a \emph{grand-canonical probability} if 
$$
\PP_0 + \sum_{n\ge 1} \PP_n(\RR^{dn})=1,
$$
where $\PP_0\in[0,1]$ and $\PP_n$ is a finite positive symmetric Borel measure on $(\RR^d)^n$ for all $n\ge 1$. The symmetry of $\PP_n$ means that 
$$
\PP_n(B_{\sigma(1)}\times\ldots \times B_{\sigma(n)})=\PP_n(B_1\times\ldots\times B_n)
$$
for every permutation $\sigma\in\SF_n$.
The \emph{density} of $\PP$ is then the marginal
$$
\rho_\PP(B)=\PP_1(B) + \sum_{n\ge 2} n\PP_n(B,\RR^{d(n-1)}).
$$
Suppose that $0<s<d$. The Riesz energy of a grand-canonical probability $\PP$ is given by 
$$
\Cc_s(\PP)=\sum_{n\ge 2}\int_{\RR^{dn}} \sum_{1\le j<k\le n}\frac{1}{|x_j-x_k|^s}\, \dd\PP_n(x_1,\ldots,x_n).
$$
Next, for any $\rho\in L^1(\RR^d)\cap L^{\frac{2d}{2d-s}}(\RR^d)$ we introduce the (classical) indirect energy 
$$
\boxed{E^\cl(\rho)=F_\SCE(\rho)-D_s(\rho)}
$$
where the \emph{(grand-canonical) strictly correlated electrons functional} $F_\SCE(\rho)$ is defined
as
$$
F_\SCE(\rho)=\inf_{\substack{\PP\,\text{GC prob.}\\\rho_\PP=\rho}} \Cc_s(\PP),
$$
and direct term is given by the positive quadratic form
$$
D_s(\rho)=\frac{1}{2}\iint_{\RR^{2d}} \frac{\rho(x)\rho(y)}{|x-y|^s}\,\dd x\dd y<\infty.
$$
In the 3D Coulomb case we will drop the subscripts from $\Cc_s$ and $D_s$ and simply write $\Cc$ and $D$. 
It is immediate that $E^\cl(\rho)$ is isometry-invariant: $E^\cl(\rho(R(\cdot-a)))=E^\cl(\rho)$ for every $R\in SO(d)$ and $a\in\RR^d$. 
The Lieb--Oxford inequality and a simple trial state construction involving pure tensor products give the bounds
\begin{equation}\label{Eclapriori}
-c_\LO(d,s) \int_{\RR^d} \rho^{1+s/d}\le  E^\cl(\rho)\le 0,
\end{equation}
where $c_\LO(d,s)>0$ is a universal constant that depends only on $d$ and $s$. 
Henceforth, we shall work with densities $\rho\in L^1(\RR^d)\cap L^{1+s/d}(\RR^d)$, which automatically have $\rho\in L^{\frac{2d}{2d-s}}(\RR^d)$ as required above. 
Aside from the preceding \emph{a priori} bound, the most important property of the indirect energy is its subadditivity
\begin{equation}\label{Eclsubadd}
E^\cl(\rho_1+\rho_2)\le E^\cl(\rho_1) + E^\cl(\rho_2)
\end{equation}
for every $\rho_1,\rho_2\in L^1(\RR^d)\cap L^{1+s/d}(\RR^d)$. Notice that the densities $\rho_1$ and $\rho_2$ may overlap. 
Another useful property of the indirect energy is its simple scaling law,
\begin{equation}\label{Eclscale}
E^\cl(\lambda\rho(\lambda^{1/d}\cdot)) = \lambda^{s/d} E^\cl(\rho)
\end{equation}
for every $\lambda>0$. In the quantum case, neither \cref{Eclsubadd} nor \cref{Eclscale} holds. 

\subsection{Grand-canonical Levy--Lieb functional}
In this section, we introduce notations and recall basic facts about the quantum version of the above functionals. Throughout the paper when discussing the quantum case, we shall restrict ourselves to the 3D Coulomb case.
Also, we only consider spinless electrons for simplicity of notations. 

Let $\dm$ denote the set of fermionic Fock space states on $L^2(\RR^3)$ with finite kinetic energy having the additional property that they commute with the number operator. For a state $\vGamma=(\Gamma_n)_{n\ge 0}$ in $\dm$ we denote its one-particle reduced density matrix by
$$
\gamma_\vGamma(x,y)=\sum_{n\ge 1} n \int_{\RR^{3(n-1)}} \Gamma_n(x,x_2,\ldots,x_n;y,x_2,\ldots,x_n)\,\dd x_2\ldots \dd x_n.
$$
Further, we denote the density of $\vGamma$ by $\rho_\vGamma(x)=\gamma_\vGamma(x,x)$. This
has $\rho_\vGamma\in L^1(\RR^3,\RR_+)$ and 
by the Hoffmann--Ostenhof inequality $\grad\sqrt{\rho_\vGamma}\in L^2(\RR^3)$. 

For any $\vGamma\in\dm$ let us introduce the notations
\begin{equation*}
\begin{aligned}
\TC(\vGamma)
&= \Tr_{L^2(\RR^3)} (-\lapl \gamma_\vGamma), \\ 
\Cc(\vGamma) &=\sum_{n\ge 2}\sum_{1\le j<k\le n} \Tr_{L^2(\RR^{3n})}\left( \frac{1}{|x_j-x_k|} \Gamma_n \right)
\end{aligned}
\end{equation*}
for the \emph{kinetic-}, and the \emph{Coulomb energy of the state} $\vGamma$. Then the \emph{grand-canonical Levy--Lieb functional} is defined as
$$
F_\LL^\hbar(\rho)=\inf_{\substack{\vGamma\in\dm\\\rho_\vGamma=\rho}} \Big[ \hbar^2\TC(\vGamma) + \Cc(\vGamma) \Big]
$$
for every $\rho\in L^1(\RR^3,\RR_+)$ such that $\grad\sqrt{\rho}\in L^2(\RR^3)$, and $+\infty$ otherwise. By dropping the Coulomb interaction, we define the \emph{mixed-state kinetic energy functional} $T^\hbar(\rho)$.
The infimum in $F_\LL^\hbar(\rho)$ (resp. $T^\hbar(\rho)$) is attained \cite{lewin2019universal}. Moreover, for every fixed $\rho$, the function $\hbar\mapsto F_\LL^{\sqrt{\hbar}}(\rho)$ is a minimum of affine functions (in $\hbar$) of positive slope, hence it is concave increasing.

Using $F_\LL^\hbar(\rho)$, we can define the \emph{(quantum) indirect energy} 
$$
\boxed{E^\hbar(\rho)=F_\LL^\hbar(\rho)-D(\rho)}
$$
Notice that contrary to the classical indirect energy $E^\cl(\rho)$, the quantum variant may be positive. However, it can be shown that the \emph{exchange-correlation energy} 
$$
E_\xc^\hbar(\rho)=E^\hbar(\rho)-T^\hbar(\rho)
$$
is nonpositive. Various \emph{a priori} bounds on $E^\hbar(\rho)$ are recalled in \cref{qknownbd} below.

There is no known simple scaling relation for $E^\hbar(\rho)$ analogous to \cref{Eclscale}, because the kinetic energy scales differently from the Coulomb energy. However, if we allow $\hbar$ to change, the following obvious fact holds
\begin{equation}\label{Ehbarscale}
E^\hbar(\lambda\rho(\lambda^{1/3}\cdot))=\lambda^{1/3}E^{\lambda^{1/6}\hbar}(\rho),
\end{equation}
for all $\lambda>0$.
As mentioned above, $E^\hbar(\rho)$ is not subadditive. A substitute for the subadditivity property 
pioneered by \cite{lewin2020local} is discussed in \cref{impdecoupsec} below.

\section{Main results}\label{sec:MR}

\subsection{Classical non-uniform electron gas}\label{clnuegsec}

Let $0<s<d$ and fix a \emph{(classical) inhomogeneity} $\zeta\in L_\per^{1+s/d}(\LC)$. Setting $\rho=\iden_\Omega\zeta(\cdot-a)$ in $E^\cl(\rho)$, the bounds \cref{Eclapriori} imply that the function $a\mapsto |\Omega|^{-1} E^\cl(\iden_\Omega\zeta(\cdot-a))$ 
 is in $L_\per^1(\LC)$, hence its mean value over $\RR^d$ is well-defined and finite (see \cref{permeanthm}).
  Further, replacing $\zeta$ by $\zeta(R\cdot)$ and averaging over
 $R\in\SO(d)$, we arrive at our definition\footnote{Our definition is vaguely motivated by the abstract setting of \cite{HAINZL2009454}, where only an average translation invariance is assumed (see Assumption (A3), ibid.) about the energy.} of the \emph{classical indirect energy per volume}
$$
\boxed{e_\Omega^\cl(\zeta)=\int_{\SO(d)}\dd R \dashint_{\RR^d}\dd a\, \frac{E^\cl\bigl(\iden_\Omega \zeta(R(\cdot-a))\bigr)}{|\Omega|}}
$$
Clearly, $e_\Omega^\cl(\zeta)$ is isometry-invariant with respect to both $\Omega$ and $\zeta$ (separately) and verifies the \emph{a priori} bound
\begin{equation}\label{ecllobound}
-c_\LO(d,s) \dashint_{\RR^d} \zeta^{1+s/d}\le e_\Omega^\cl(\zeta)\le 0. 
\end{equation}
Physically speaking, our definition describes the energy of a ``floating crystal'' where the lattice-periodic inhomogeneity $\zeta$ 
is allowed to translate and rotate freely in a ``container'' $\Omega$, so as to neutralize the energy fluctuations coming from the boundary layer of lattice cells. 
Notice that the definition is lattice-independent in the sense that it is meaningful for every lattice-periodic inhomogeneity, and the definition remains exactly the same.
This is because only the mean value over $\RR^d$ is used, and as long as the integrand is lattice-periodic with respect to \emph{some} lattice, the mean value makes sense.

The isometry-invariance of the indirect energy per volume allows us to deduce the existence of its thermodynamic limit using standard methods.
To state our result, we recall a geometric condition for proving thermodynamic limits. Let $\Omega\subset\RR^d$ be a bounded open and connected set. Let $\kappa:[0,t_0)\to\RR_+$ be such that $\kappa(t)\to 0$ as $t\to 0+$. 
The set $\Omega$ is said to have \emph{$\kappa$-regular boundary (in the sense of Fisher \cite{fisher1964free})}, if for every $0\le t<t_0$
$$
|\{x\in\RR^d : \dist(x,\partial\Omega)\le |\Omega|^{1/d}t\}|\le \kappa(t)|\Omega|.
$$
A sequence of bounded domains $\{\Omega_N\}\subset\RR^d$ is said have a uniformly $\kappa$-regular boundary, if $\Omega_N$ is has a $\kappa$-regular boundary with the same $\kappa$ for all $N$.

\begin{theorem}[Classical non-uniform electron gas]\label{clnuegthm} 
Fix an inhomogeneity $\zeta\in L_\per^{1+s/d}(\LC)$.
Let $\{\Omega_N\}\subset\RR^d$ be a sequence of bounded domains having uniformly $\kappa$-regular boundary and $|\Omega_N|\to\infty$.
Then the following thermodynamic limit exists
\begin{equation}\label{clnueglim}
\lim_{N\to\infty} e_{\Omega_N}^\cl(\zeta)=e_\NUEG^\cl(\zeta)
\end{equation}
and is independent of the sequence $\{\Omega_N\}$.
\end{theorem}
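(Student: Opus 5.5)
The plan is the standard subadditivity argument for thermodynamic limits, as in the classical uniform gas case of Lewin--Lieb--Seiringer, first along cubes and then extended to general domains. Throughout, the only tools are the subadditivity \cref{Eclsubadd}, the a priori bounds \cref{Eclapriori}, the isometry invariance of $E^\cl$, and \cref{permeanthm}.

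\emph{Step 1: limit along cubes.} Set $f(\Omega)=|\Omega|e^\cl_\Omega(\zeta)$.
\begin{itemize}
\item If $\Omega=\bigcup_j\Omega_j$ up to a null set, with disjoint $\Omega_j$, then $\iden_\Omega\zeta(R(\cdot-a))=\sum_j \iden_{\Omega_j}\zeta(R(\cdot-a))$. Applying \cref{Eclsubadd} pointwise in $(R,a)$ and then averaging (linearity of the mean value, using \cref{permeanthm}) gives $f(\Omega)\le\sum_j f(\Omega_j)$.
\item Since $e_\Omega^\cl$ is invariant under isometries of $\Omega$, this holds even though the $\Omega_j$ are translated and rotated copies of a fixed domain.
\item Tile $C_{k\ell}$ by $k^d$ translates of $C_\ell$. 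Then $e^\cl_{C_{k\ell}}\le e^\cl_{C_\ell}$, and $e^\cl_{C_\ell}$ is bounded below by \cref{ecllobound}.
\item For general $L$, write $L=k\ell+r$ with $0\le r<\ell$ and split $C_L$ into $k^d$ cubes of side $\ell$ plus a boundary layer $B$ of volume $O(\ell L^{d-1})$. Using $E^\cl\le 0$ on the layer, we get $e^\cl_{C_L}\le (k\ell/L)^d e^\cl_{C_\ell}+0$.
\item Since $e^\cl_{C_\ell}\le 0$, the factor $(k\ell/L)^d\to1$ is harmless. Hence $\limsup_L e^\cl_{C_L}\le e^\cl_{C_\ell}$ for every $\ell$, so $\lim_L e^\cl_{C_L}=\inf_\ell e^\cl_{C_\ell}=:e^\cl_\NUEG(\zeta)$ exists.
\end{itemize}

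\emph{Step 2: general domains, upper bound.} Tile $\RR^d$ by cubes of side $\ell$ and keep those contained in $\Omega_N$. By $\kappa$-regularity, the cubes meeting $\partial\Omega_N$ have total volume at most $\kappa(\sqrt d\,\ell|\Omega_N|^{-1/d})|\Omega_N|\to0$ relative to $|\Omega_N|$. Subadditivity plus nonpositivity on the leftover part gives
\[
e^\cl_{\Omega_N}\le\frac{\#\{\text{interior cubes}\}\ell^d}{|\Omega_N|}\,e^\cl_{C_\ell}.
\]
Hence $\limsup_N e^\cl_{\Omega_N}\le e^\cl_{C_\ell}$ for every $\ell$, and therefore $\limsup_N e^\cl_{\Omega_N}\le e^\cl_\NUEG$.

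\emph{Step 3: lower bound, the main obstacle.} Subadditivity only gives upper bounds, so the approach is to put $\Omega_N$ inside a large cube and bound the complement from below.
\begin{itemize}
\item Pick a cube $C_L\supset\Omega_N$. In general $|C_L\setminus\Omega_N|$ is not small, so first tile $C_L$ (for $L$ a large multiple of $|\Omega_N|^{1/d}$, or rather of a fixed size) with many translated and rotated copies of $\Omega_N$. Since $\Omega_N$ need not tile, use the Fisher-type packing instead.
\item Take a huge cube $C_L$ and place inside it disjoint copies of $\Omega_N$, one per cell of a grid of side $|\Omega_N|^{1/d}(1+t)$ adapted to its diameter. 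The complement $A$ then consists of the gaps.
\item Subadditivity gives $f(C_L)\le M f(\Omega_N)+f(A)$. Bounding $f(A)\ge -c_\LO\dashint\zeta^{1+s/d}\,|A|$ by \cref{ecllobound} (with $\Omega$ replaced by $A$; the definition only needs $A$ bounded) yields
\[
e^\cl_\NUEG\le \frac{M|\Omega_N|}{|C_L|}\,e^\cl_{\Omega_N}+c\,\frac{|A|}{|C_L|}.
\]
\item The difficulty is that $|A|/|C_L|$ is not small unless $\Omega_N$ nearly fills its grid cell. That would require a shape assumption, whereas $\kappa$-regularity only controls the boundary.
\end{itemize}

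I would therefore first try the following reversed covering, which is the genuine obstacle I would check carefully.
\begin{itemize}
\item Approximate $\Omega_N$ from inside by small cubes $C_\ell$ as in Step 2. Use isometry invariance of $e^\cl_{C_\ell}$ and the convergence $e^\cl_{C_\ell}\to e^\cl_\NUEG$ from Step 1.
\item Compare $f(\Omega_N)$ with the sum over inner cubes via a reverse inequality valid up to an error. The natural error term is the interaction between the cubes, and the averaging over translations $a$ should cancel its mean. This mimics the floating crystal argument: averaging over $a$ makes the average density exactly $\dashint\zeta$ in each cube.
\item If this fails, fall back on the duality/Graf--Schenker-type tetrahedral averaging used by Lewin--Lieb--Seiringer for the UEG. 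It gives $f(\Omega)\ge\sum$ (simplex pieces) $-\;o(|\Omega|)$ via the Graf--Schenker inequality.
\end{itemize}
That inequality is stated for Coulomb systems with Fisher-regular domains. Its interaction correction vanishes after averaging, and rotation averaging of $\zeta$ makes the simplex energies depend only on the simplex size. That is precisely why the definition averages over $\SO(d)$ and translations.

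Independence of the sequence then follows, since both bounds identify the limit as $\inf_\ell e^\cl_{C_\ell}$.
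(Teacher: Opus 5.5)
Your Steps 1 and 2 are correct and essentially the paper's. Your identification $e^\cl_\NUEG(\zeta)=\inf_\ell e^\cl_{C_\ell}(\zeta)$ over all cubes is even slightly cleaner than the dyadic monotonicity used there.

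The gap is Step 3: you never prove $\liminf_N e^\cl_{\Omega_N}(\zeta)\ge e^\cl_\NUEG(\zeta)$.

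\textbf{The packing attempt.} The Lieb--Oxford \emph{lower} bound on $f(A)$ cannot be inserted into $f(C_L)\le Mf(\Omega_N)+f(A)$, because $f(A)$ sits on the larger side. Only upper bounds on $f(A)$ are usable there. With the trivial one, $f(A)\le 0$, you get $e^\cl_{\Omega_N}\ge (|C_L|/M|\Omega_N|)\,e^\cl_\NUEG$. That bound does require the gaps to be negligible, which $\kappa$-regularity cannot give.

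\textbf{The fallbacks.} They do not repair this. You neither have nor prove a reverse inequality $E^\cl(\rho_1+\rho_2)\ge E^\cl(\rho_1)+E^\cl(\rho_2)-\text{(error)}$ whose error cancels on average over $a$. Averaging over the position of $\zeta$ does not by itself control the cross-correlations of the optimal $\PP$. The Graf--Schenker inequality is a 3D Coulomb tool, whereas the theorem is stated for all $0<s<d$.

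\textbf{The missing idea.} The complement does not have to be small. Fill it with cubes and bound its energy from \emph{above} by subadditivity, so it contributes roughly $e^\cl_\NUEG$ per unit volume. Concretely, as in the paper:
\begin{itemize}
\item A connected $\kappa$-regular domain has $\diam\Omega_N\le C|\Omega_N|^{1/d}$. So $\Omega_N$ fits in a cube $Q_N$ with $|Q_N|\le C'|\Omega_N|$ and side a multiple of $\ell$.
\item Let $\Omega_N''$ be the union of the grid cubes of side $\ell$ contained in $Q_N\setminus\Omega_N$, and let $R_N$ be the rest of $Q_N\setminus\Omega_N$. Up to a null set, $R_N$ lies in grid cubes meeting $\partial\Omega_N$, hence $|R_N|\le\kappa(\sqrt d\,\ell|\Omega_N|^{-1/d})|\Omega_N|$.
\item Use subadditivity, $E^\cl\le 0$ on $R_N$, and $e^\cl_{Q_N}\ge\inf_L e^\cl_{C_L}=e^\cl_\NUEG$ (from your Step 1). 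This gives $|Q_N|e^\cl_\NUEG\le|\Omega_N|e^\cl_{\Omega_N}+|\Omega_N''|e^\cl_{C_\ell}$.
\item Write $|Q_N|=|\Omega_N|+|\Omega_N''|+|R_N|$ and use $e^\cl_{C_\ell}\ge e^\cl_\NUEG$ and $e^\cl_\NUEG\le 0$. This yields
\[
e^\cl_{\Omega_N}(\zeta)\ge e^\cl_\NUEG(\zeta)-C'\bigl(e^\cl_{C_\ell}(\zeta)-e^\cl_\NUEG(\zeta)\bigr)+\kappa\bigl(\sqrt d\,\ell|\Omega_N|^{-1/d}\bigr)e^\cl_\NUEG(\zeta).
\]
\end{itemize}
Letting $N\to\infty$ and then $\ell\to\infty$ gives the lower bound. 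The same device, cubes in the gaps estimated from above, would also have rescued your packing construction.
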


In the 3D Coulomb case, we can also derive a convergence rate estimate for dilated tetrahedra, see \cref{clconvrate}. Before proceeding, we make a few remarks.

\begin{remark}
\noindent (i) Using the scaling relation $E^\cl(\lambda\rho(\lambda^{1/d}\cdot))=\lambda^{s/d} E^\cl(\rho)$ we have that
$$
e_\NUEG^\cl(\lambda \zeta(\lambda^{1/d}\cdot))=\lambda^{1+s/d} e_\NUEG^\cl(\zeta).
$$
The first observation we make is that it is enough to know $e_\NUEG^\cl(\zeta)$ for $\zeta$ having $\dashint\zeta=1$ . In fact, we may write $\zeta=\rho_0\zeta_0(\rho_0^{1/d}\cdot)$, where $\dashint\zeta_0=\dashint \zeta_0(\rho_0^{1/d}\cdot)=1$ and $\rho_0=\dashint\zeta$ are uniquely determined,
so that
$$
e_\NUEG^\cl(\zeta)=\rho_0^{1+s/d} e_\NUEG^\cl(\zeta_0).
$$
Roughly speaking, $e_\NUEG^\cl(\zeta)$ depends on $\dashint\zeta$ in the usual, expected manner. However, it depends on the ``shape'' of $\zeta$ in a nontrivial way. 
\vspace{.3em}\\
\noindent (ii) Without loss we can assume that the lattice $\LC$ has a unit cell $\Lambda\subset\RR^d$ with normalized volume $|\Lambda|=1$. In fact, we can replace an $\LC$-periodic $\zeta$ such that $\dashint\zeta=1$ by $\zeta(|\Lambda|^{-1/d}\cdot)$ which is $(|\Lambda|^{-1/d}\LC)$-periodic, and also has mean value 1. 
\vspace{.3em}\\
\noindent (iii)
As expected, when $\zeta\equiv \rho_0$ is constant, then $\zeta_0\equiv 1$ and we recover the uniform electron gas energy 
$e_\NUEG^\cl(\rho_0)=c_\UEG\rho_0^{1+s/d}$,
where the negative constant $c_\UEG=e_\NUEG^\cl(1)$ is the classical UEG energy as defined in \cite{lewin2018statistical}. Lieb and Narnhofer \cite{lieb1975thermodynamic} gave an estimate on the Jellium ground-state energy in 3D Coulomb case, which implies
$c_\UEG \ge - \frac{3}{5} \left(\frac{9\pi}{3}\right)^{1/3}\approx -1.4508$.
\vspace{.3em}\\
\noindent (iv)
It is clear that the real function $\rho_0\mapsto c_\UEG\rho_0^{1+s/d}$ is subadditive since it is concave and vanishes at $\rho_0=0$. More generally, the subadditivity property of $E^\cl(\rho)$ is inherited by the functional $\zeta\mapsto e_\NUEG^\cl(\zeta)$, i.e. we have 
$$
e_\NUEG^\cl(\zeta+\zeta')\le e_\NUEG^\cl(\zeta)+e_\NUEG^\cl(\zeta')
$$
for all  $\zeta,\zeta'\in L_\per^{1+s/d}(\LC)$.
\end{remark}

Next, we show how the NUEG-, and UEG energies are related when the inhomogeneity is slowly varying. 
Here, we restrict ourselves to the 3D Coulomb case.

\begin{theorem}[Local density approximation]\label{cllda}
Suppose that $d=3$ and $s=1$. Let $p>3$ and $0<\theta<1$ such that $\theta p\ge 4/3$. For every $\epsilon>0$ and every inhomogeneity $\zeta$ such that $\grad\zeta^\theta\in\Lloc^p(\RR^3)$ the bound
\begin{equation}\label{clldaest}
\left| e_\NUEG^\cl(\zeta) - c_\UEG \dashint_{\RR^3} \zeta^{4/3} \right| \le \epsilon \dashint_{\RR^3} \bigl(\zeta + \zeta^{4/3}\bigr) 
+ \frac{C_{p,\theta}}{\epsilon^b} \dashint_{\RR^3} |\grad\zeta^\theta|^p 
\end{equation}
holds true, where $b=\min\{2p-1,(1+3\theta)p-4\}$ and the positive constant $C_{p,\theta}$ only depends on $p$ and $\theta$. It may be bounded as $C_{p,\theta}\le 2.71 b \left(\frac{10}{\theta}\right)^b$.
\end{theorem}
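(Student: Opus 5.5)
The plan is to go through finite volume: prove the estimate for $e_\Omega^\cl(\zeta)$ with $\Omega=C_L$ a large cube, and then let $L\to\infty$ using \cref{clnuegthm}. The key input is the LDA estimate of Lewin--Lieb--Seiringer \cite{lewin2020local}. In the 3D Coulomb case it states that for every density $\rho\ge0$ with $\grad\rho^\theta\in L^p(\RR^3)$ and every $\epsilon>0$,
$$
\left|E^\cl(\rho)-c_\UEG\int_{\RR^3}\rho^{4/3}\right|\le \epsilon\int_{\RR^3}\bigl(\rho+\rho^{4/3}\bigr)+\frac{C_{p,\theta}}{\epsilon^b}\int_{\RR^3}|\grad\rho^\theta|^p,
$$
with the same exponent $b$ and an explicit constant. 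We cannot apply it directly to $\rho=\iden_\Omega\zeta(R(\cdot-a))$, since the sharp cutoff destroys the gradient bound. Instead we redo its proof, which uses a tiling with tetrahedra or cubes of side $\ell$, and exploit the averaging over translations and rotations.

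\textbf{Upper bound.} Cover $C_L$ by small cubes of side $\ell$ (or tetrahedra, as in \cite{lewin2020local}). Subadditivity \cref{Eclsubadd} gives $E^\cl(\iden_{C_L}\zeta_{R,a})\le\sum_j E^\cl(\iden_{Q_j}\zeta_{R,a})$. On each small cell, compare $\zeta_{R,a}$ with its local average. The standard argument of \cite{lewin2020local} does this by splitting $\zeta$ into a locally constant part plus a remainder; we use subadditivity again together with the a priori bound \cref{Eclapriori} on the remainder, and the UEG finite-volume energy (or its convergence rate, cf.\ \cref{clconvrate}) on the constant part. The error on each cell is controlled, via the Poincaré/Sobolev inequality for $\zeta^\theta$ (here $p>3$ gives Hölder continuity), by $\ell^{p}\int_{Q_j}|\grad\zeta^\theta|^p$ in suitable combination. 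The finite-size UEG error is of order $\ell^{-1}\int\zeta$. Then optimize $\ell$ in terms of $\epsilon$ to produce the $\epsilon^{-b}$. Averaging over $(R,a)$ and dividing by $|C_L|$ turns all the integrals into means $\dashint$, using \cref{permeanthm}.

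\textbf{Lower bound.} Follow the Graf--Schenker-type argument of \cite{lewin2020local}. Write the Coulomb interaction as an average over translated and rotated tetrahedral tilings of a short-range part, plus a positive-type remainder. This gives a lower bound of $E^\cl(\rho)$ by the average of local energies on tetrahedra, minus an error of order $\ell^{-1}\int\rho$. Apply this with $\rho=\iden_{C_L}\zeta_{R,a}$. Interior tetrahedra are handled exactly as in the upper bound. Those meeting $\partial C_L$ have total volume $O(\ell L^2)$, and their contribution is bounded below by $-c_\LO\int\rho^{4/3}$ over a boundary layer; after averaging in $a$ this is $O(\ell/L)\dashint\zeta^{4/3}$, which vanishes as $L\to\infty$. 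The translation/rotation average in our definition fits naturally with the averaging over tilings.

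The main obstacle is the local comparison on a single tetrahedron: bounding $|E^\cl(\iden_\Delta\zeta)-c_\UEG\int_\Delta\zeta^{4/3}|$ with the correct powers. This needs the condition $\theta p\ge4/3$, which converts $\zeta^{4/3}$ oscillations into $|\grad\zeta^\theta|^p$ via Young's inequality, producing both branches of $b=\min\{2p-1,(1+3\theta)p-4\}$. Here I would copy the computation of \cite{lewin2020local} essentially verbatim, tracking constants to get $C_{p,\theta}\le 2.71b(10/\theta)^b$. Finally, pass to the limit $L\to\infty$ in the averaged finite-volume inequality. The right-hand side is already expressed via periodic means, and the left-hand side converges to $e_\NUEG^\cl(\zeta)$ by \cref{clnuegthm}.
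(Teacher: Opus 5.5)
Your architecture matches the paper's. You tile at scale $\ell=1/\epsilon$. In the upper bound you use subadditivity and $E\le 0$ to freeze $\zeta$ on each interior tetrahedron, the upper bound of \cref{clconvrate} for constant densities, and Morrey plus Young for the oscillation. In the lower bound you use Graf--Schenker, Lieb--Oxford on boundary tetrahedra, averaging over isometries, and then let the big domain grow. (The frozen value must be the minimum $\ul{\zeta}$ on the cell, not the average, so that the remainder is a nonnegative density to which \cref{Eclsubadd} and $E\le0$ apply.)

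\textbf{The gap.} You claim that interior tetrahedra in the lower bound are handled exactly as in the upper bound. There subadditivity points the wrong way, and the natural substitute is \cref{replmax}: $E(\iden_\Delta\zeta)\ge E(\iden_\Delta\ol{\zeta})\ge c_{\UEG}\,\ol{\zeta}^{4/3}|\Delta|$ with $\ol{\zeta}=\max_{\ol{\Delta}}\zeta$. This leaves the error $|c_{\UEG}|\int_\Delta(\ol{\zeta}^{4/3}-\zeta^{4/3})$. In the upper bound the corresponding error obeys $\zeta(x)^{4/3}-\ul{\zeta}^{4/3}\le C\zeta(x)^{4/3-\theta a}(\zeta(x)^\theta-\ul{\zeta}^\theta)^a$. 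It is weighted by $\zeta(x)$, and that is what lets Young's inequality return $\epsilon\int_\Delta(\zeta+\zeta^{4/3})$. Now the weight is $\ol{\zeta}$, which is not controlled by $\zeta(x)$. If $\zeta$ is negligible on $\Delta$ except for a small bump, the error $\approx|\Delta|\ol{\zeta}^{4/3}$ scales with the volume of $\Delta$, while all admissible error terms only see the bump. For instance, take $p=6$, $\theta=1/2$ (so $b=11$) and a bump of unit width with $\zeta^\theta\approx\epsilon^4$ inside a tetrahedron of side $1/\epsilon$. The error is then of order $\epsilon^{23/3}$, against admissible terms of order $\epsilon^{9}$. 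So oscillation control alone cannot give the single-tetrahedron comparison you identify as the main obstacle. This is precisely the obstruction the paper flags.

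\textbf{The missing idea} is the split into \emph{simple} and \emph{main} tetrahedra, taken from \cite{lewin2020local,jex2025classical}. A tetrahedron is simple if $c_{\LO}\int_\Delta\zeta^{4/3}\le\epsilon\int_\Delta(\zeta+\zeta^{4/3})+A\epsilon^{-b}\int_\Delta|\grad\zeta^\theta|^p$. There the Lieb--Oxford bound alone suffices, since $c_{\UEG}\int_\Delta\zeta^{4/3}\le0$. In particular every $\Delta$ with $\ol{\zeta}\le(\epsilon/c_{\LO})^3$ is simple, because $\int_\Delta\zeta^{4/3}\le\ol{\zeta}^{1/3}\int_\Delta\zeta$; this disposes of the example above. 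On a main tetrahedron, combine the reverse inequality with $\ol{\zeta}>(\epsilon/c_{\LO})^3$ and Morrey's inequality. For $A$ chosen suitably this yields $\ol{\zeta}^\theta-\ul{\zeta}^\theta\le 2^{-1/p}\ol{\zeta}^\theta$, hence $\ol{\zeta}\le C\zeta(x)$ throughout $\Delta$. Only then does your max-replacement plus Morrey--Young argument close, with a worse constant. Since $A$ must be fixed uniformly, this requires $\epsilon$ to be bounded. That is why the paper first treats $\epsilon\ge c_{\LO}-|c_{\UEG}|$ directly via \cref{ecllobound}.
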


We can conclude that for slowly varying inhomogeneities, the NUEG energy is well-approximated by the UEG energy via the LDA. Therefore, in this scenario a perturbative treatment (leading to the ``weakly non-uniform electron gas'') might be justified. This goal is not pursued in the present work.

Replacing $\zeta$ by $\zeta(\lambda\cdot)$ in the above theorem we obtain the convergence rate
$$
\left|e_\NUEG^\cl(\zeta(\lambda\cdot))-c_\UEG \dashint_{\RR^3} \zeta^{4/3}\right|\le C\lambda^{1/2} \left(\dashint_{\RR^3} \bigl(\zeta + \zeta^{4/3}\bigr)\right)^{1-1/2p}
\left(\dashint_{\RR^3} |\grad\zeta^\theta|^p \right)^{1/2p}.
$$
This form shows that indeed for a nonconstant inhomogeneity we get an energy different from the UEG energy, even in the slowly varying limit. 

The method of proof of \cref{cllda} is essentially the same as the one in \cite[Appendix A]{lewin2020local}. It is an important open problem to improve the exponent $\epsilon$ and the constant $C_{p,\theta}$.
We would like to point out that a more general result may be derived without the regularity assumption by following the ideas of \cite{jex2025classical}. 

\subsection{Quantum non-uniform electron gas}\label{qnuegsec}
Next, we introduce our definition of the quantum non-uniform gas and consider some of its basic properties. As mentioned above, we ignore spin for simplicity as (collinear) spin is straightforward to handle, but it would make proofs much more tedious. 

We call a function $\zeta:\RR^3\to\RR_+$ a \emph{(quantum) inhomogeneity} if $\sqrt{\zeta}\in H_\per^1(\LC)$. 
Recall that for such functions, the mean values $\dashint_{\RR^3} \zeta$ and $\dashint_{\RR^3} |\grad\sqrt{\zeta}|^2$ are well-defined and finite. 
By the Sobolev inequality, we have $\int_\Lambda \zeta^p<\infty$
for all $1\le p\le 3$. In particular, $\dashint_{\RR^3} \zeta^{5/3}$ and $\dashint_{\RR^3} \zeta^{4/3}$ are finite.

Next, we define the non-uniform electron gas energy per finite volume. We offer two definitions following \cite{lewin2018statistical,lewin2020local} which turn out to be equivalent.

\subsubsection{Definition based smeared indicators}
Recall the $N$-representability criterion from \cite{lieb1983density}, which says that a 
density $\rho\in L^1(\RR^3,\RR_+)$ with $\int_{\RR^3}\rho=N\in\NN$ comes from an $N$-particle wavefunction with finite kinetic energy precisely if $\sqrt{\rho}\in H^1(\RR^3)$. So in the quantum case, we need to cut off the inhomogeneity $\zeta$ smoothly near the boundary of our domain.

Let $0\le \eta\in C_c^\infty(\RR^3)$ be radial with $\supp\eta\subset B_1$ and $\int_{\RR^3}\eta=1$. Define
$\eta_\delta(x)=\delta^{-3} \eta(\delta^{-1}x)$.
Then $\supp\eta_\delta\subset B_{\delta}$ and $\int_{\RR^3} \eta_\delta=1$. We say that $\eta_\delta$ is a \emph{regularization function} (or mollifier) with smearing parameter $\delta>0$. While the finite-volume quantities are clearly dependent on the particular choice of the regularization function, this dependence will disappear when passing to the thermodynamic limit. 

Fix a bounded domain $\Omega\subset\RR^3$ and an inhomogeneity $\zeta$.
Clearly, $\sqrt{(\iden_{\Omega}*\eta_\delta) \zeta}\in H^1(\RR^3)$ so we may consider the function $f_{\Omega,\eta_\delta,\zeta}: \RR^3\to \RR$ given by
\begin{equation}\label{fdef}
f_{\Omega,\eta_\delta,\zeta}(a)= \frac{E^\hbar\bigl((\iden_{\Omega-a}*\eta_\delta)\zeta\bigr)}{|\Omega|}=\frac{E^\hbar\bigl((\iden_{\Omega}*\eta_\delta)\zeta(\cdot-a)\bigr)}{|\Omega|}.
\end{equation}
In the second equality, the translation-invariance of the Levy--Lieb functional $F_\LL^\hbar(\rho)$ and of the Coulomb self-energy $D(\rho)$ was used.

As $\zeta$ is $\LC$-periodic, it is obvious that $f_{\Omega,\eta_\delta,\zeta}$ is $\LC$-periodic. 
Moreover, we have the \emph{a priori} estimate $\dashint_{\RR^3} |f_{\Omega,\eta_\delta,\zeta}(a)|\,\dd a<\infty$ (see \cref{fabsapriori} below).
Using \cref{permeanthm}, we can deduce that the mean value of $f_{\Omega,\eta_\delta,\zeta}$ over $\RR^3$ exists and is finite.
Next, we average over all rotations $R\in\SO(3)$ in $\dashint_{\RR^3} f_{\Omega,\eta_\delta,\zeta(R\cdot)}$ to make our energy rotationally invariant as well. In summary, we arrive at
\begin{equation}\label{evoldef}
\boxed{e_{\Omega,\eta_\delta}^\hbar(\zeta)=\int_{\SO(3)}\dd  R \dashint_{\RR^3}\dd a\, \frac{E^\hbar\bigl((\iden_{\Omega}*\eta_\delta)\zeta(R(\cdot-a))\bigr)}{|\Omega|}}
\end{equation}
which by construction is isometry-invariant with respect to both $\Omega$ and $\zeta$, separately.
In \cref{aprioricoro} below, we state an \emph{a priori} bound on $e_{\Omega,\eta_\delta}^\hbar(\zeta)$.
Clearly, our definition collapses to the usual indirect energy per volume of the \emph{uniform} electron gas (cf. \cite{lewin2020local}) by setting $\zeta\equiv \rho_0$, i.e. 
$e_{\Omega,\eta_\delta}^\hbar(\rho_0)=|\Omega|^{-1}E^\hbar\bigl((\iden_{\Omega}*\eta_\delta)\rho_0\bigr)$.

\subsubsection{Definition based on a transition region}
Instead of using a smeared indicator function to achieve smooth cutoff, it is also reasonable to do the following. Let $\Omega\subset\RR^3$ be a bounded connected domain. For any $s>0$, introducing the inner set
$$
\Omega^{s-}=\{ x\in\Omega : \dist(x,\partial\Omega)\ge s \}
$$
and the outer set
$$
\Omega^{s+}=\Omega \cup \{x\in\RR^3 : \dist(x,\partial\Omega)\le s \},
$$
we can give another definition of the indirect energy per volume,
\begin{equation}\label{evoltrans}
\boxed{\wt{e}_{\Omega,s}^\hbar(\zeta)=\int_{\SO(3)}\dd  R \dashint_{\RR^3}\dd a\, \inf_{\substack{\sqrt{\rho}\in H^1(\RR^3)\\ \zeta(R(\cdot-a))\iden_{\Omega^{s-}}\le \rho \le \zeta(R(\cdot-a))\iden_{\Omega^{s+}} }} \frac{E^\hbar(\rho)}{|\Omega|}}
\end{equation}
Taking $\rho=(\iden_{\Omega}*\eta_\delta)\zeta(R(\cdot-a))$ for a sufficiently small $\delta>0$, we may use an \emph{a priori} upper bound (see \cref{basicbdprop} below), and for the lower bound using the Lieb--Oxford inequality, we obtain that the $a$-integrand is summable, hence again by \cref{permeanthm}
the above mean value exists and finite. 

Intuitively speaking, the above minimization allows the density to ``relax'' in the transition region $\Omega^{s+}\setminus\Omega^{s-}$. We will require that the scale $s$ of this transition region goes to infinity in the thermodynamic limit in such a way that the ratio $s/|\Omega|^{1/3}$ becomes negligible, for a regular domain sequence $\Omega$.\footnote{Recall that for regular domains $\diam(\Omega)\le C|\Omega|^{1/3}$, see \cite{fisher1964free}.} 
The $a$-integrand is again $\LC$-periodic because $\zeta$ is.
Clearly, $\wt{e}_{\Omega,s}^\hbar(\zeta)$ is isometry-invariant in both $\zeta$ and $\Omega$. 
The conceptual benefit of this definition over the previous one is that the ``shape'' of the
transition is allowed to be arbitrary. 

\subsubsection{Thermodynamic limit and equivalence of definitions}
Next, we consider the existence and equivalence of the thermodynamic limits for both definitions.

\begin{theorem}[Quantum non-uniform electron gas]\label{thermolimthm}
Fix an inhomogeneity $\sqrt{\zeta}\in H^1_\per(\LC)$.
Let $\{\Omega_N\}\subset\RR^3$ be a sequence of bounded connected domains with $|\Omega_N|\to\infty$, such that $\Omega_N$ has uniformly $\kappa$-regular boundary with $\kappa(t)=Ct$. 
\begin{itemize}
\item[(i)] Let $\{\delta_N\}\subset\RR_+$ be any sequence such that $\delta_N/|\Omega_N|^{1/3}\to 0$ and $\delta_N|\Omega_N|^{1/3}\to\infty$.
Then the following thermodynamic limit exists
\begin{equation}\label{nueglim}
\lim_{N\to\infty} e^\hbar_{\Omega_N,\eta_{\delta_N}}(\zeta)=e_\NUEG^\hbar(\zeta),
\end{equation}
and is independent of the sequences $\{\Omega_N\}$, $\{\delta_N\}$ and the regularization function $\eta$. 
\item[(ii)] Let $\{s_N\}\subset\RR_+$ be a sequence such that $s_N\to\infty$ and $s_N/|\Omega_N|^{1/3}\to 0$. Then the following thermodynamic limit exists
\begin{equation}\label{thlimtr}
\lim_{N\to\infty} \wt{e}_{\Omega_N,s_N}^\hbar(\zeta)=e_\NUEG^\hbar(\zeta),
\end{equation}
with the same limiting $e_\NUEG^\hbar(\zeta)$ as in (i). Again, the limit is independent of the sequences $\{\Omega_N\}$ and $\{s_N\}$.
\end{itemize}
\end{theorem}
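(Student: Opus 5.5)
We follow the strategy of \cite{lewin2018statistical,lewin2020local} for the uniform gas and carry the translation/rotation average through every step. The backbone is an approximate subadditivity for the averaged energy along a reference family, namely dilated simplices (tetrahedra) $\tetra_L = L\tetra_1$ or cubes $C_L$. Because of isometry-invariance, the average over $(R,a)$ of $E^\hbar\bigl((\iden_\Omega*\eta_\delta)\zeta(R(\cdot-a))\bigr)$ depends only on the shape of $\Omega$ up to isometry. When a large region $\Omega$ is packed with (rotated) copies of a smaller reference set, each copy therefore contributes $|\Omega'|\,e^\hbar_{\Omega',\eta_\delta}(\zeta)$ after averaging. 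This is the role of the ``floating crystal'' averaging: it removes the mismatch between the lattice of $\zeta$ and the tiling.

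\textbf{Steps 1 and 2 (a priori bounds and limit along the reference family).} First I collect the a priori bounds (\cref{aprioricoro}, \cref{basicbdprop}, Lieb--Oxford). They give $|e^\hbar_{\Omega,\eta_\delta}(\zeta)|\le C\dashint(\zeta+\zeta^{5/3}+\zeta^{4/3}+|\grad\sqrt\zeta|^2)$, uniformly for $\delta$ of order one to large. Next I prove existence of the limit along the reference family. Fix a tiling of $\tetra_{L'}$ (with $L'\gg L$) by congruent copies of $\tetra_L$ up to a boundary layer, and write $(\iden_{\tetra_{L'}}*\eta_\delta)\zeta$ as the sum of the pieces $(\iden_{\tetra_L^{(j)}}*\eta_\delta)\zeta$ plus a boundary remainder. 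In the classical case subadditivity \cref{Eclsubadd} gives the comparison directly. In the quantum case I would instead use the improved decoupling estimate (\cref{impdecup}) from \cref{impdecoupsec}, a substitute for subadditivity. Its error is controlled by the overlap of the smeared pieces and by kinetic localization, giving terms of the form $C(L^2\delta+L^2/\delta)\dashint(\zeta+\dots)$ per tile. Dividing by $L^3$ and choosing $\delta$ with $\delta/L\to0$ and $\delta L\to\infty$ (exactly the hypothesis in (i)), all errors become $o(1)$. After averaging over $(R,a)$ this yields
\[
e_{\tetra_{L'}}\le e_{\tetra_L}+o_L(1)
\]
as $L'/L\to\infty$. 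Combined with the uniform lower bound, a standard argument shows the limit exists along $\tetra_L$.

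\textbf{Steps 3 and 4 (general domains and independence).} For general $\kappa$-regular domains $\Omega_N$, I pack $\Omega_N$ with tetrahedra of side $L$, with $1\ll L\ll|\Omega_N|^{1/3}$. Here $\kappa(t)=Ct$ controls the lost boundary volume, which gives the upper bound. For the lower bound, I pack a large tetrahedron with copies of $\Omega_N$ plus small tetrahedra, following Fisher's method and using the uniform a priori bounds on leftover pieces. Independence from $\eta$ and $\{\delta_N\}$ then follows by comparing two smearings at the same scale: the densities differ only in a layer of width $\max(\delta,\delta')$, which costs $o(|\Omega|)$ by decoupling plus the a priori bounds. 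For (ii), the upper bound comes from taking $\rho=(\iden_{\Omega}*\eta_{\delta})\zeta(R(\cdot-a))$ with $\delta\le s$ in the infimum, so that $\wt e_{\Omega,s}\le e_{\Omega,\eta_\delta}$. The lower bound needs a competitor: given any admissible $\rho$, I glue it to pieces as in the packing argument via decoupling, and compare $\Omega^{s-}$-type inner sets with smeared tetrahedra. The layer $\Omega^{s+}\setminus\Omega^{s-}$ has volume $O(s|\Omega|^{2/3})=o(|\Omega|)$.

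\textbf{Main obstacle.} I expect the main difficulty to be Step 2 in the quantum case, namely obtaining decoupling errors that are uniform after averaging over $(R,a)$. The estimate must involve only integrals of $\zeta$, $\zeta^{5/3}$ and $|\grad\sqrt\zeta|^2$ over thin layers, whose averages over $a$ equal (layer volume)$\times$(cell mean) by \cref{permeanthm}. This is what turns pointwise-in-$a$ errors into clean $o(|\Omega|)$ bounds.
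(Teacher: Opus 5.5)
There is a genuine gap: your plan has no valid source of \emph{lower} bounds in the quantum case, and it misreads what \cref{impdecup} gives.

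\textbf{What \cref{impdecup} actually gives.} The smeared pieces there form a partition of unity only after averaging over $\tau$. The Coulomb mismatch term $(*)$ is therefore controlled only after a further average over rotations and dilations. The resulting inequality is
\[
\limsup_{\ell'\to\infty} e_{\ell'\tetra,\eta_{\delta'}}(\zeta)\le \dashint_{1-\alpha}^{1+\alpha}\frac{\dd t}{t^4}\, e_{t\ell\tetra,\eta_{t\delta}}(\zeta)+o(1)+C\delta^2\log(\alpha^{-1})\dashint_{\RR^3}\zeta^2 ,
\]
with $o(1)$ as $\delta/\ell\to0$, $\ell\delta\to\infty$. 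It is not $e_{\tetra_{L'}}\le e_{\tetra_L}+o(1)$. A one-sided bound by a dilation average does not force convergence: a bounded function of $\ell$ equal to $0$ except for dips to $-1$ on a null set of scales satisfies it and does not converge. So your ``standard argument'' does not apply.

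The last error term also does not vanish under $\delta/\ell\to0$, $\ell\delta\to\infty$ alone. The smearing of the \emph{small} tetrahedra must tend to zero and be chosen independently of $\delta_N$; the paper takes $\ell=|\Omega_N|^{1/6}$, $\delta=|\Omega_N|^{-1/12}$. The hypotheses on $\delta_N$ enter only at $\partial\Omega_N$.

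\textbf{Why your lower-bound steps fail.} The same deficiency breaks each of them.
\begin{itemize}
\item A Fisher-style packing of a big tetrahedron by copies of $\Omega_N$ needs an approximate subadditivity for smeared pieces of arbitrary shape. \cref{impdecup} is built on the periodic tetrahedral tiling and does not provide this. Even if it did, you would again get only a dilation-averaged statement.
\item In (ii), ``gluing'' an arbitrary admissible $\rho$ produces upper bounds, whereas the $\liminf$ needs a bound from below valid for every admissible density.
\item Your independence argument comparing two smearings likewise needs one of the two energies bounded from below.
\end{itemize}

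\textbf{The missing ingredient.} It is the decoupling lower bound \cref{lboundthm} (smeared Graf--Schenker plus IMS), which holds for every state.
\begin{itemize}
\item Localize onto small smeared tetrahedra lying inside $\Omega_N$, or inside $\Omega_N^{s_N-}$ for (ii), where every admissible density coincides with $\zeta(R(\cdot-a))$.
\item Bound the boundary tiles from below by Lieb--Oxford, and average over $(R,a)$.
\item This yields the monotonicity estimate \cref{monolemm} for arbitrary regular domains and arbitrary admissible densities.
\item Its $t$-averaged form \eqref{monoavg} is exactly what removes the dilation average from the \cref{impdecup} bound (\cref{tetrathermo}(iii)).
\end{itemize}
This is what proves existence of the limit along tetrahedra and supplies the $\liminf$ inequalities in (i) and (ii). Your upper bound in (ii), via $\wt{e}_{\Omega,s}\le e_{\Omega,\eta_\delta}$ with a fixed $\delta\le s$, is fine once (i) is established.
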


Again, we make a few remarks.

\begin{remark}\label{qnuegrmrk}
\noindent (i) By dropping the Coulomb interaction, we obtain a similar result for the kinetic energy per volume $\tau^\hbar_{\Omega,\eta_{\delta}}(\zeta)$ and therefore 
the physically most interesting quantity, the \emph{exchange-correlation energy per volume} 
$$
e^{\hbar,\xc}_{\Omega,\eta_{\delta}}(\zeta)=e^\hbar_{\Omega,\eta_{\delta}}(\zeta)- \tau^\hbar_{\Omega,\eta_{\delta}}(\zeta)
$$
also admits a thermodynamic limit. 
\vspace{.3em}\\
\noindent (ii) The following \emph{a priori} bounds hold true (see \cref{aprioricoro}). For every $0< \epsilon\le 1/15$ and every $\sqrt{\zeta}\in H^1_\per(\LC)$ there holds
$$
e_\NUEG^\hbar(\zeta)\le (1+\epsilon) c_{\TF}\hbar^2  \dashint_{\RR^3} \zeta^{5/3}+ \frac{38\hbar^2}{15}\frac{1}{\epsilon} \dashint_{\RR^3} |\grad \sqrt{\zeta}|^2,
$$
where $c_\TF=\frac{3}{5}(2\pi)^2 (4\pi/3)^{-2/3}$ is the Thomas--Fermi constant. 
For every $0<\epsilon\le 3/5$ and every $\sqrt{\zeta}\in H^1_\per(\LC)$ we have
$$
e_\NUEG^\hbar(\zeta)\ge (1-\epsilon) c_{\TF}\hbar^2  \dashint_{\RR^3} \zeta^{5/3}  - c_{\LO}  \dashint_{\RR^3} \zeta^{4/3}-\frac{20\hbar^2}{27}\frac{1}{\epsilon} \dashint_{\RR^3} |\grad \sqrt{\zeta}|^2.
$$
\vspace{.3em}\\
\noindent (iii) In the uniform case, the above two bounds become
$$
c_{\TF}\hbar^2\rho_0^{5/3}-c_\LO\rho_0^{4/3}\le e_\UEG^\hbar(\rho_0)\le c_{\TF}\hbar^2\rho_0^{5/3}.
$$
\end{remark}

\subsubsection{Further results}
Next, we consider the quantum analog of \cref{cllda}. 

\begin{theorem}[Local density approximation]\label{qlda}
Let $p>3$ and $0<\theta<1$ such that $2\le p\theta\le 1 + p/2$. For every $\epsilon>0$ and inhomogeneity $\zeta$ such that for all $\grad\zeta^\theta\in\Lloc^p(\RR^3)$ the
bound
\begin{align*}
\left| e_\NUEG^\hbar(\zeta) - \dashint_{\RR^3} e_\UEG^\hbar(\zeta(x))\,\dd x\right|&\le \epsilon \dashint_{\RR^3} \bigl( \zeta + \zeta^2 \bigr)\\
&+\frac{C(1+\epsilon)}{\epsilon} \dashint_{\RR^3} |\grad\sqrt{\zeta}|^2 + \frac{C}{\epsilon^{4p-1}} \dashint_{\RR^3} |\grad\zeta^\theta|^p
\end{align*}
holds true for some universal constant $C>0$ which only depends on  $p$, $\theta$ and $\hbar$.
\end{theorem}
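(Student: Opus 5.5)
We will follow the strategy of \cite[Appendix A and Section 4]{lewin2020local}, where an analogous LDA bound is proved for integrable densities, and adapt it to the periodic, averaged setting. First we reduce to a finite-volume statement. By \cref{thermolimthm}(i), $e_\NUEG^\hbar(\zeta)=\lim_N e^\hbar_{\Omega_N,\eta_{\delta_N}}(\zeta)$ for, say, dilated cubes $\Omega_N=C_{L_N}$. For each fixed $R\in\SO(3)$ and $a\in\RR^3$ we therefore aim to prove the LDA estimate for the single integrable density
\[
\rho_{R,a}=(\iden_{\Omega_N}*\eta_{\delta_N})\,\zeta(R(\cdot-a)),
\]
namely
\[
\Bigl|E^\hbar(\rho_{R,a})-\int e^\hbar_\UEG(\rho_{R,a})\Bigr|\le \epsilon\int(\rho_{R,a}+\rho_{R,a}^2)+\frac{C(1+\epsilon)}{\epsilon}\int|\grad\sqrt{\rho_{R,a}}|^2+\frac{C}{\epsilon^{4p-1}}\int|\grad\rho_{R,a}^\theta|^p.
\]
The right-hand side is then bounded by the corresponding expressions for $\zeta$ on $\Omega_N$, plus boundary contributions coming from the gradient of the cutoff. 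We divide by $|\Omega_N|$ and average over $a$ and $R$. By \cref{permeanthm}, $|\Omega_N|^{-1}\int_{\Omega_N}g(\zeta(R(x-a)))\,\dd x$ averaged over $a$ equals $\dashint_{\RR^3} g(\zeta)$ exactly. The boundary layer has volume $O(|\Omega_N|^{2/3}\delta_N)$, and the cutoff gradients are of order $\delta_N^{-1}$. Since $\delta_N\to\infty$ with $\delta_N/|\Omega_N|^{1/3}\to0$, both contributions vanish in the limit. Finally, $e_\UEG^\hbar$ is continuous, so $\int e_\UEG^\hbar(\rho_{R,a})/|\Omega_N|$ converges to $\dashint e_\UEG^\hbar(\zeta)$.

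The core is the finite-volume LDA bound, which we prove by tiling. We tile $\RR^3$ by cubes $C_\ell+k$, $k\in\ell\ZZ^3$, and average over translations of the tiling. We use a smooth partition of unity $\chi_k^2$ with overlaps of width $\sim\ell$, and for the upper bound we localize $\rho=\sum_k\chi_k^2\rho$. On each cube we replace $\chi_k^2\rho$ by a piecewise-constant density on a finer scale, or by its mean on the cube. The cost of this replacement is controlled by a Poincaré-type bound in terms of $\int|\grad\rho^\theta|^p$; this step is where the constraints $p>3$ and $2\le p\theta\le 1+p/2$ enter, through Sobolev/Morrey embedding. For the lower bound and for recombining the cube energies we use the improved decoupling estimate \cref{impdecup}, the substitute for subadditivity. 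Its error terms are of the form $\epsilon\int(\rho+\rho^2)$ plus a kinetic localization error $\epsilon^{-1}\int|\grad\sqrt\rho|^2$, which is where the middle term of the statement comes from. On each cube with nearly constant density we use the finite-volume convergence rate to $e_\UEG^\hbar$ from \cite{lewin2020local}, with the known dependence on the cube size, scaled via \cref{Ehbarscale}.

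We then optimize the cube size $\ell$ as a function of $\epsilon$. Balancing the UEG convergence error, of relative size $\sim\ell^{-1}$ or a power of it, against the gradient error, of size $\ell^{p}\int|\grad\rho^\theta|^p$ up to density powers, gives $\ell\sim\epsilon^{\alpha}$ and yields the factor $\epsilon^{-(4p-1)}$. We expect the main obstacle to be this step: obtaining the decoupling and UEG-convergence errors in a form that is \emph{linear} in $\int(\rho+\rho^2)$ and local, so that they survive the averaging. Any superadditive error, such as a Coulomb cross term between neighbouring cubes, would not average to a $\dashint$ quantity. We would first try to transcribe \cite[Theorem 1.1 proof]{lewin2020local} verbatim and check that every error is an integral of a local function of $\rho$ and $\grad\rho$. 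The periodic averaging step is then mechanical.
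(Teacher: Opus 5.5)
Your reduction is sound, and for the lower bound it takes a genuinely different and shorter route than the paper. The finite-volume estimate you want for $\rho_{R,a}=(\iden_{\Omega_N}*\eta_{\delta_N})\zeta(R(\cdot-a))$ is precisely the LDA theorem of \cite{lewin2020local} for integrable densities: same conditions $p>3$, $2\le p\theta\le 1+p/2$, same three error terms, up to the $\hbar$-dependence of $C$. So it can simply be cited.

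What remains is bookkeeping. Every error term is a local integral, so after dividing by $|\Omega_N|$ and averaging over $(R,a)$ it becomes the corresponding mean value plus a boundary-layer contribution of relative size $O(\delta_N|\Omega_N|^{-1/3})$. Then $e^\hbar_{\Omega_N,\eta_{\delta_N}}(\zeta)\to e^\hbar_\NUEG(\zeta)$ by \cref{thermolimthm}(i).

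The paper applies \cite{lewin2020local} as a black box only for the upper bound (its Eq.~(75) on one smeared tetrahedron, then averaged). For the lower bound it decouples a large smeared tetrahedron with \cref{lboundthm}, averaged over dilations. It bounds the boundary pieces by Lieb--Oxford and re-proves the local bound \eqref{qldalow} on interior tetrahedra via the simple/main dichotomy of the classical proof.

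The advantage of the paper's route is that the big cutoff equals $1$ on interior pieces, so it is never differentiated inside the $\theta$-term. Yours is modular but must pay for exactly that term. Indeed $\grad\rho_{R,a}^\theta$ contains $\zeta^\theta\grad u^\theta$ with $u=\iden_{\Omega_N}*\eta_{\delta_N}$, so you need two facts:
\begin{itemize}
\item $\|\grad u^\theta\|_\infty\le C/\delta_N$. For $\theta\ge 1/2$ this follows from $|\grad u|^2\le 2\|D^2u\|_\infty u$. For smaller $\theta$ it is a condition on $\eta$, which you may impose because $e^\hbar_\NUEG$ does not depend on $\eta$.
\item $\dashint_{\RR^3}\zeta^{\theta p}<\infty$. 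This holds because $\grad\zeta^\theta\in\Lloc^p$ with $p>3$ makes $\zeta$ continuous, hence bounded.
\end{itemize}
The product-rule constants go into $C$, while the term $\epsilon\dashint_{\RR^3}(\zeta+\zeta^2)$ stays exact since $0\le u\le 1$. On the layer, what you need for the $e^\hbar_\UEG$ term is the a priori bound $|e^\hbar_\UEG(\rho)|\le C(\rho^{4/3}+\rho^{5/3})$, not continuity.

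There is no need to re-derive the integrable-density bound, and your sketch of such a re-derivation would not work as written.
\begin{itemize}
\item \cref{impdecup} is an upper bound only and cannot serve for the lower bound. The lower bound requires the Graf--Schenker/IMS decoupling of \cref{lboundthm}.
\item On the lower-bound side, where the density is small you cannot compare a cell's energy with that of a constant density at a cost controlled by $\int|\grad\rho^\theta|^p$ alone. This is why the simple/main dichotomy is needed.
\item The obstacle you anticipate, nonlocal cross terms that would not average to a mean value, does not arise. The cited estimate already has purely local error terms.
\end{itemize}
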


Plugging in $\zeta(\lambda\cdot)$ and $\epsilon=\lambda^\alpha$, we find the convergence rate 
\begin{multline*}
\left| e_\NUEG^\hbar(\zeta(\lambda\cdot)) - \dashint_{\RR^3} e_\UEG^\hbar(\zeta(x))\,\dd x\right|\le \lambda^{1/10} \dashint_{\RR^3} \bigl( \zeta + \zeta^2 \bigr)\\
+ C\lambda^{19/20}\left( \dashint_{\RR^3} |\grad\sqrt{\zeta}|^2 + \dashint_{\RR^3} |\grad\zeta^\theta|^p \right)
\end{multline*}
for the slowly varying limit $\lambda\to 0$. 

Next, we consider the question of obtaining $e_\NUEG^\cl(\zeta)$ as the semiclassical limit of $e_\NUEG^\hbar(\zeta)$ as $\hbar\to 0$. Note first that by scaling relation \cref{Ehbarscale}, we have for all $\lambda>0$
$$
e_{\Omega,\eta_\delta}^\hbar(\zeta)=\lambda^{4/3} e_{\lambda^{1/3}\Omega,\eta_{\lambda^{1/3}\delta}}^{\lambda^{1/6}\hbar}(\lambda^{-1}\zeta(\lambda^{-1/3}\cdot)).
$$
This suggests to write $\zeta=\rho_0\zeta_0(\rho_0^{1/3}\cdot)$, where $\dashint\zeta_0=\dashint \zeta_0(\rho_0^{1/3}\cdot)=1$ and $\rho_0=\dashint\zeta=\lambda$, so that in the thermodynamic limit
$$
e_{\NUEG}^\hbar(\zeta)=\rho_0^{4/3} e_\NUEG^{\rho_0^{1/6}\hbar}(\zeta_0)=\rho_0^{4/3} \wh{e}_\NUEG(\rho_0^{1/3}\hbar^2;\zeta_0).
$$
Here, we have set $\wh{e}_\NUEG(\mu;\zeta_0)=e_\NUEG^{\sqrt{\mu}}(\zeta_0)$ for all $\mu\ge 0$ and $\dashint\zeta_0=1$. 
For fixed $\zeta_0$, the function $\mu\mapsto\wh{e}_\NUEG(\mu;\zeta_0)$ is concave and increasing.

\begin{proposition}[Semiclassical bound]\label{semiclassthm}
For every normalized inhomogeneity $\sqrt{\zeta_0}\in H^1_\per(\LC)$ with $\dashint\zeta_0=1$, there holds
$$
\liminf_{\mu\to 0} \wh{e}_\NUEG(\mu;\zeta_0)\ge e_\NUEG^\cl(\zeta_0).
$$
\end{proposition}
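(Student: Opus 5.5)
The plan is to reduce the statement to a finite-volume comparison between the quantum and classical indirect energies, and then pass to the thermodynamic limit. The key elementary fact is that for every density $\rho$ with $\sqrt\rho\in H^1(\RR^3)$ and every $\hbar\ge 0$ we have $E^\hbar(\rho)\ge E^\cl(\rho)$. Indeed, take any state $\vGamma\in\dm$ with $\rho_\vGamma=\rho$. Its diagonal $n$-body densities $\PP_n(x_1,\dots,x_n)=\Gamma_n(x_1,\dots,x_n;x_1,\dots,x_n)$, together with $\PP_0=\Gamma_0$, form a grand-canonical probability with density $\rho$ and with $\Cc(\PP)=\Cc(\vGamma)$. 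Since $\TC(\vGamma)\ge0$, this gives $F_\LL^\hbar(\rho)\ge F_\SCE(\rho)$, and subtracting $D(\rho)$ yields the claim.

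First, fix $\mu>0$. We write $\hbar=\sqrt\mu$ and use definition (i) of \cref{thermolimthm} with $\Omega_N=C_{L_N}$ and a smearing sequence $\delta_N$ as in that theorem. By the comparison above, for every $R$ and $a$,
\[
E^{\sqrt\mu}\bigl((\iden_{\Omega_N}*\eta_{\delta_N})\zeta_0(R(\cdot-a))\bigr)\ge E^\cl\bigl((\iden_{\Omega_N}*\eta_{\delta_N})\zeta_0(R(\cdot-a))\bigr).
\]
Averaging gives $e^{\sqrt\mu}_{\Omega_N,\eta_{\delta_N}}(\zeta_0)\ge e^\cl_{\Omega_N,\eta_{\delta_N}}(\zeta_0)$, where the right-hand side denotes the classical average defined with the smeared indicator in place of $\iden_\Omega$. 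The left-hand side converges to $\wh e_\NUEG(\mu;\zeta_0)$. It therefore suffices to show that the smeared classical energy per volume converges to $e^\cl_\NUEG(\zeta_0)$. The resulting inequality $\wh e_\NUEG(\mu;\zeta_0)\ge e^\cl_\NUEG(\zeta_0)$ then holds for every $\mu>0$, and taking $\liminf_{\mu\to0}$ gives the proposition, even without the monotonicity in $\mu$.

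The main step is to control the smeared classical energy. The function $\chi_N=\iden_{\Omega_N}*\eta_{\delta_N}$ satisfies $0\le\chi_N\le1$. It equals $1$ on $\Omega_N^{\delta_N-}$ and vanishes outside $\Omega_N^{\delta_N+}$. We decompose
\[
\chi_N\zeta_0=\iden_{\Omega_N^{\delta_N-}}\zeta_0+(\chi_N-\iden_{\Omega_N^{\delta_N-}})\zeta_0 .
\]
Subadditivity \cref{Eclsubadd} gives an upper bound by the classical energy of the inner set plus the energy of the boundary layer. The wrong direction is the one we need, however: we want a lower bound on $E^\cl(\chi_N\zeta_0)$. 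So we instead decompose $\iden_{\Omega_N^{\delta_N+}}\zeta_0=\chi_N\zeta_0+(\iden_{\Omega_N^{\delta_N+}}-\chi_N)\zeta_0$ and apply subadditivity to get
\[
E^\cl(\chi_N\zeta_0)\ge E^\cl(\iden_{\Omega_N^{\delta_N+}}\zeta_0)-E^\cl\bigl((\iden_{\Omega_N^{\delta_N+}}-\chi_N)\zeta_0\bigr)\ge E^\cl(\iden_{\Omega_N^{\delta_N+}}\zeta_0),
\]
using $E^\cl\le0$ from \cref{Eclapriori}. After averaging over $R$ and $a$ and dividing by $|\Omega_N|$, the right-hand side is $\frac{|\Omega_N^{\delta_N+}|}{|\Omega_N|}e^\cl_{\Omega_N^{\delta_N+}}(\zeta_0)$. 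Since $\delta_N/|\Omega_N|^{1/3}\to0$, the ratio tends to $1$. The sets $\Omega_N^{\delta_N+}$, being enlarged cubes, form a uniformly regular sequence, so \cref{clnuegthm} gives convergence to $e^\cl_\NUEG(\zeta_0)$.

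Hence $\liminf_N e^\cl_{\Omega_N,\eta_{\delta_N}}(\zeta_0)\ge e^\cl_\NUEG(\zeta_0)$, and this one-sided bound is all that is needed. The only step I expect to require care is the regularity of the enlarged domains, which is uniform in $N$ and routine for cubes. Measurability of the integrands in $(R,a)$ and the interchange with averaging are already justified by the \emph{a priori} bounds used to define the quantities.
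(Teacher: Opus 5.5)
Your proof is correct. It differs from the paper's in how the smooth cutoff is removed on the classical side.

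The paper applies the Graf--Schenker inequality (\cref{gsineq}) to $E^\cl\bigl((\iden_{\ell'\tetra}*\eta_{\delta'})\zeta\bigr)$ and decouples it into small sharp tetrahedra. The tetrahedra lying in $(\ell'-\delta')\tetra$, where the smeared indicator equals one, reproduce $e^\cl_{\ell\tetra}(\zeta)$. The boundary ones are bounded by Lieb--Oxford. One then lets $\ell'\to\infty$ and $\ell\to\infty$.

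You instead use that $E^\cl$ is nonincreasing under pointwise domination of densities, which is the same mechanism as \cref{replmin} and follows from subadditivity and $E^\cl\le 0$. So $\chi_N\zeta_0\le\iden_{\Omega_N^{\delta_N+}}\zeta_0$ immediately bounds the smeared energy from below by the sharp-cutoff energy of the enlarged domain, and you then invoke \cref{clnuegthm} for the rounded cubes.

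Your route is shorter and needs neither a decoupling inequality nor a boundary Lieb--Oxford estimate. The price is that it relies on the general-domain classical limit plus a Fisher-regularity check for $\Omega_N^{\delta_N+}$ (routine, e.g.\ via the Steiner formula). The paper only uses the limit along tetrahedra.

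You can stay within tetrahedra if you prefer. Since $B_{r\ell}\subset\ell\tetra$ for some fixed $r>0$ and $\tetra$ is convex, $\ell\tetra+B_\delta\subset(\ell+\delta/r)\tetra$. Dominating $\iden_{\ell\tetra}*\eta_\delta$ by $\iden_{(\ell+\delta/r)\tetra}$ and using the lower bound of \cref{clconvrate} gives the explicit finite-volume estimate $e^{\hbar}_{\ell\tetra,\eta_\delta}(\zeta_0)\ge(1+\delta/(r\ell))^3e^\cl_\NUEG(\zeta_0)$. You then conclude with the tetrahedral quantum limit.

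Finally, you never need the intermediate smeared classical average as a separately defined, measurable quantity. Composing the two pointwise inequalities for each fixed $(R,a)$ and then averaging the outer terms, which are already well defined, is enough.
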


It remains an open problem to obtain the opposite bound. The trial state construction
based on the regularization strategy of \cite{lewin2018statistical} does not work here as 
that would result in a mollified inhomogeneity. 

Finally, we mention that the functional $\zeta\mapsto e_\NUEG^\hbar(\zeta)$ inherits the weak*-lower semicontinuity property of the grand-canonical Levy--Lieb functional \cite{lewin2019universal}. See \cref{enueglsc} near the end of the paper for the precise statement and its proof.

\section{Proofs for the classical case}\label{clproofs}
The rest of the paper is devoted to the proofs. We first provide proofs for our results concerning the classical NUEG.
Throughout this section, we omit the designation ``$\cl$'' from $E^\cl(\rho)$ and $e_\Omega^\cl(\zeta)$ for clarity.

\subsection{Known bounds}
The following estimate on the indirect Riesz energy will be used repeatedly in the sequel. 
\begin{theorem}[Lieb--Oxford inequality]\label{lobound}
Let $0<s<d$. For any symmetric grand-canonical probability $\PP$, the bound
\begin{align*}
\Cc_s(\PP)-D_s(\rho_\PP)\ge - c_\LO(d,s) \int_{\RR^d} \rho_\PP^{1+s/d}
\end{align*}
holds true with some universal constant $c_\LO(d,s)>0$.
\end{theorem}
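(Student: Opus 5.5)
This is the classical Lieb--Oxford inequality in its grand-canonical Riesz form, and I would reduce it to the canonical $N$-particle statement. The reduction is a decomposition of $\PP$ into its $n$-particle sectors; the main tool on each sector is Onsager's electrostatic lemma with smeared charges.

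\textbf{Step 1: reduction to fixed particle number.} Write $\PP=(\PP_n)$ and define the partial densities $\rho_n$ as the one-body marginals of $\PP_n$, multiplied by $n$. Then $\rho_\PP=\sum_n\rho_n$ and $\Cc_s(\PP)=\sum_n \Cc_s(\PP_n)$. The target inequality is not linear in $\rho$, so I would not simply sum canonical inequalities. Instead I would prove a pointwise statement: for every configuration $(x_1,\dots,x_n)$ and every $\rho\in L^1\cap L^{1+s/d}$,
\begin{equation*}
\sum_{j<k}\frac{1}{|x_j-x_k|^s}\ \ge\ -D_s(\rho)+\sum_{j=1}^n\int\frac{\rho(y)}{|x_j-y|^s}\,\dd y - C\sum_{j=1}^n \rho_{\rm loc}(x_j)^{s/d},
\end{equation*}
where $\rho_{\rm loc}$ is a local (maximal-function type) version of $\rho$. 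Integrating this against $\PP_n$, summing over $n$ and taking $\rho=\rho_\PP$ gives
\begin{equation*}
\Cc_s(\PP)\ge -D_s(\rho_\PP)+2D_s(\rho_\PP)-C\int\rho_\PP\,\rho_{\rm loc}^{s/d}.
\end{equation*}
This identity is exact because the middle term is linear in $\PP$. The last integral is then controlled by $\int\rho_\PP^{1+s/d}$ through the Hardy--Littlewood maximal inequality.

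\textbf{Step 2: the pointwise bound (Onsager).} I would replace each point $x_j$ by a uniform charge on a ball $B(x_j,r_j)$. Since the Riesz kernel is positive definite for $0<s<d$, the interaction energy of the measure $\sum_j\mu_j-\rho$ is nonnegative. Expanding it produces three kinds of terms:
\begin{itemize}
\item the smeared pair interactions, which are at most $\sum_{j<k}|x_j-x_k|^{-s}$ by superharmonicity;
\item the self-energies, each of order $r_j^{-s}$;
\item the cross terms $\int\rho\,(|\cdot|^{-s}*\mu_j)$, which differ from the point-charge terms only by an integral of $\rho$ over $B(x_j,r_j)$ with weight $|x_j-y|^{-s}$.
\end{itemize}
Superharmonicity holds for $s\ge d-2$; for smaller $s$ I would instead use a positive-definite radial profile à la Fefferman--de la Llave.

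\textbf{Step 3: choice of radii.} I would choose $r_j$ so that $\int_{B(x_j,r_j)}\rho$ is of order one, as in the Lieb--Solovej--Yngvason argument. Equivalently, $r_j\sim(M\rho(x_j))^{-1/d}$, where $M\rho$ is the maximal function. With this choice both the self-energy and the cross-term error are $O\big(M\rho(x_j)^{s/d}\big)$.

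\textbf{Main obstacle.} Making the error summable is the crux. After integrating against $\PP$ it becomes $\int\rho\,(M\rho)^{s/d}$, and Hölder's inequality together with $L^{1+s/d}$-boundedness of $M$ (valid since $1+s/d>1$) bounds this by $C\int\rho^{1+s/d}$. The delicate part is Step 2 for small $s<d-2$, where the ball-averaging comparison fails and a different smearing profile is required.
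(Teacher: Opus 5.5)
Step 2 has a genuine gap: the superharmonicity range is reversed. Away from the origin $\Delta|x|^{-s}=s(s+2-d)|x|^{-s-2}$, so $|x|^{-s}$ is superharmonic exactly when $s\le d-2$, not $s\ge d-2$.

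\emph{The range $d-2<s<d$.} This covers every admissible $s$ when $d=1,2$, and $1<s<3$ when $d=3$. Here the kernel is strictly subharmonic off the origin. Take any radial probability measure $\mu_j\neq\delta_{x_j}$ supported in $\overline{B(x_j,r_j)}$. Then $(\mu_j*|\cdot|^{-s})(y)>|x_j-y|^{-s}$ for $y\notin\overline{B(x_j,r_j)}$. Consequently $\iint|x-y|^{-s}\,\dd\mu_j(x)\,\dd\mu_k(y)>|x_j-x_k|^{-s}$ as soon as $|x_j-x_k|>r_j+r_k$. So your first bullet is false in this range. The inequality $\sum_{j<k}|x_j-x_k|^{-s}\ge\sum_{j<k}D_s(\mu_j,\mu_k)$, on which the whole Onsager argument rests, breaks down, and no choice of radii or compactly supported radial profile repairs it.

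\emph{The range $s<d-2$.} Here the pair comparison does hold, so your stated reason for switching methods is wrong. The real obstruction is different: the uniform-ball potential lies strictly below the point potential everywhere outside the ball, by an amount of order $r_j^2|x_j-y|^{-s-2}$. Integrated against $\rho_\PP\otimes\rho_\PP$, this long-range deficit is not bounded by $\int\rho_\PP^{1+s/d}$; take $\rho_\PP$ constant on a ball of radius $L\to\infty$. Having discarded the corresponding gain in the pair terms, your argument cannot absorb it.

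So the uniform-ball Onsager argument works only at the Newtonian exponent $s=d-2$, where Newton's theorem also gives your third bullet. Your Fefferman--de la Llave alternative is assigned to the wrong regime and left unspecified.

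\emph{A route that works.} The paper itself only defers to \cite{lundholm2016fractional}. An argument covering all $0<s<d$ at once, and grand-canonical from the start, decomposes the kernel rather than the charges. Use $|x|^{-s}=c_{d,s}\int_0^\infty r^{-s-d-1}(\iden_{B_{r/2}}*\iden_{B_{r/2}})(x)\,\dd r$. Let $n_{r,z}$ be the number of particles in $B_{r/2}(z)$ and $\bar n_{r,z}=\int_{B_{r/2}(z)}\rho_\PP=\mathbb{E}[n_{r,z}]$. Then
\[
\Cc_s(\PP)-D_s(\rho_\PP)=\frac{c_{d,s}}{2}\int_0^\infty \frac{\dd r}{r^{s+d+1}}\int_{\RR^d}\Bigl(\mathbb{E}\bigl[n_{r,z}(n_{r,z}-1)\bigr]-\bar n_{r,z}^2\Bigr)\,\dd z .
\]
The integrand is at least $-\min\{\bar n_{r,z},\bar n_{r,z}^2\}$: use $\mathbb{E}[n(n-1)]\ge 0$ when $\bar n_{r,z}\le 1$, and $\mathbb{E}[n^2]\ge\bar n^2$ otherwise. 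Since $\bar n_{r,z}\le |B_r|\,M\rho_\PP(x)$ for $x\in B_{r/2}(z)$, the $r$-integral produces $-C\int\rho_\PP(M\rho_\PP)^{s/d}$. That integral converges at $0$ because $s<d$ and at $\infty$ because $s>0$. Your H\"older and maximal-function step then concludes.

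This uses neither superharmonicity nor Onsager's lemma. It also makes your Step 1 reduction unnecessary, since only the first two factorial moments of the particle counts under $\PP$ enter.
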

The proof is a straightforward modification of the one in \cite{lundholm2016fractional}. 
In the 3D Coulomb case, the constant $c_\LO=c_\LO(3,1)\le 1.58$ is the best known at present \cite{lewin2022improved}. The proof of \cite{lewin2022improved} works all the same in the grand-canonical setting.

For the next result, we recall the concept of localization of an $N$-particle symmetic probability $\PP$ on $\RR^{dN}$. See \cite{di2025grand} for more details. For a subset $A\subset\RR^d$, the \emph{localization} (or restriction) of $\PP$ onto $A$ is a grand-canonical symmetric probability $\PP|_A=((\PP|_A)_n)_{n\ge 0}$ and is given by 
\begin{align*}
(\PP|_A)_0&=\PP\bigl((\RR^d\setminus A)^N\bigr)\\
(\PP|_A)_n&\equiv \binom{N}{n} \PP\bigl(\cdot \times (\RR^d\setminus A)^{N-n}\bigr) \quad \text{on}\;A^n \quad (n=1,\ldots,N-1)\\
(\PP|_A)_N&=\PP \quad \text{on}\;A^N
\end{align*}
and $(\PP|_A)_n\equiv 0$ whenever $n\ge N+1$. 

Next, we specify a special tetrahedral tiling of $\RR^3$ that will be used throughout the proofs. Recall that $C_1=(-\frac{1}{2},\frac{1}{2})^3$ denotes the origin-centered open unit cube. Let $\tetra_j=T_j(\tetra)$ ($j=1,\ldots,24$) be a tiling of $C_1$ by 24 congruent open tetrahedra, where
the isometries $T_j$ are given by $T_j(x)=R_jx-z_j$ for appropriate $R_j\in \SO(3)$ and $z_j\in C_1$. Here, $\tetra$ is a ``reference'' tetrahedron with barycenter 0 and $T_j$ is the isometry that maps $\tetra$ to $\tetra_j$. 
Given a scale $\ell>0$, we may write 
$$
\RR^3=\bigcup_{z\in\ZZ^3}\ol{(C_\ell+\ell z)}=\bigcup_{z\in\ZZ^3} \bigcup_{j=1}^{24} \ol{( \ell \tetra_j + \ell z)}.
$$
The corresponding partition of unity is given by 
\begin{equation}\label{pouind}
\sum_{z\in\ZZ^3} \sum_{j=1}^{24} \iden_{\ell \tetra_j}(\cdot-\ell z)\equiv 1\;\text{(a.e.)}.
\end{equation}

We can now state the following very important result that allows us to spatially decouple the indirect 3D Coulomb energy on the tetrahedral tiling \cite{graf1995electrostatic,graf1995molecular,lewin2018statistical}.

\begin{theorem}[Graf--Schenker inequality]\label{gsineq}
For any $N$-particle symmetric probability density $\PP$ and any $\ell>0$ the bound 
\begin{multline*}
\Cc(\PP)-D(\rho_\PP)\ge \int_{\SO(3)}\dd R \dashint_{C_\ell}\dd\tau \sum_{z\in\ZZ^3}\sum_{j=1}^{24} 
\Big( \Cc(\PP|_{R^\top(\ell\tetra_j + \ell z + \tau)}) - D(\iden_{R^\top(\ell\tetra_j + \ell z + \tau)}\rho_{\PP}) \Big) \\
- \frac{c_\GS}{\ell} \int_{\RR^3} \rho_\PP
\end{multline*}
holds true with the constant 
$c_{\GS}=\pi\frac{1+2\sqrt{2}}{4}\approx 3.0068$. In particular, there exists an isometry $(R,\tau)\in \SO(3)\times C_\ell$ such that
\begin{equation}\label{gsmv}
\Cc(\PP)-D(\rho_\PP)\ge \sum_{z\in\ZZ^3}\sum_{j=1}^{24} 
\Big( \Cc(\PP|_{R^\top(\ell\tetra_j + \ell z + \tau)}) - D(\iden_{R^\top(\ell\tetra_j + \ell z + \tau)}\rho_{\PP}) \Big) 
- \frac{c_\GS}{\ell} \int_{\RR^3} \rho_\PP.
\end{equation}
\end{theorem}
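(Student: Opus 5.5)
The plan is to reduce the Graf--Schenker inequality to the electrostatic estimate of Graf and Schenker for $N$ point charges, applied pointwise in configuration space and then integrated against $\PP$. For a configuration $(x_1,\ldots,x_N)$ and a nonnegative measure $\mu$ (here $\mu=\rho_\PP$), the classical Graf--Schenker bound \cite{graf1995electrostatic} states that
\begin{multline*}
\sum_{j<k}\frac{1}{|x_j-x_k|}-\int\sum_i\frac{\dd\mu(y)}{|x_i-y|}+D(\mu)\ge \int_{\SO(3)}\dd R\dashint_{C_\ell}\dd\tau\sum_{z,j}\Big(\sum_{j'<k',\,x_{j'},x_{k'}\in A}\frac{1}{|x_{j'}-x_{k'}|}\\
-\int_A\sum_{x_i\in A}\frac{\dd\mu(y)}{|x_i-y|}+D(\iden_A\mu)\Big)-\frac{c_\GS}{\ell}N,
\end{multline*}
where $A=R^\top(\ell\tetra_j+\ell z+\tau)$. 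This holds because the averaged interaction kernel restricted to same-tetrahedron pairs, $W_\ell(x-y)=\int\dd R\dashint\dd\tau\sum_{A}\iden_A(x)\iden_A(y)/|x-y|$, makes $1/|x|-W_\ell$ positive definite (its Fourier transform is nonnegative, which is the key computation of Graf--Schenker for tetrahedra). The defect at the diagonal is $(1/|x|-W_\ell)$ evaluated at $x\to0$, which equals $c_\GS/\ell$ per particle.

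I would carry out the steps in the following order. First, I quote or rederive the positive definiteness of $Y_\ell:=|x|^{-1}-W_\ell$ and the value $\lim_{x\to0}Y_\ell(x)=c_\GS/\ell$, with $c_\GS=\pi(1+2\sqrt2)/4$, as in \cite{graf1995electrostatic,lewin2018statistical}. Second, I apply positivity to the signed measure $\sum_i\delta_{x_i}-\rho_\PP$. The double integral of $Y_\ell$ against this measure, minus the diagonal self-terms, yields the pointwise inequality above, with the right-hand side written via $W_\ell$ (the cross terms with $\rho_\PP$ split into tetrahedra because $W_\ell$ only couples points in the same $A$). Third, I integrate in $\PP(\dd x)$. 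The cross term integrates to $-2D(\iden_A\rho_\PP)$ on each tile, since $\int\sum_i\iden_A(x_i)\dots\,\dd\PP=\int_A\rho_\PP$. Combined with $+D(\iden_A\rho_\PP)$ this gives $-D(\iden_A\rho_\PP)$. On the left side the terms combine to $\Cc(\PP)-2D(\rho_\PP)+D(\rho_\PP)=\Cc(\PP)-D(\rho_\PP)$.

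Fourth, I identify $\int\sum_{x_{j'},x_{k'}\in A}|x_{j'}-x_{k'}|^{-1}\dd\PP$ with $\Cc(\PP|_A)$. This is immediate from the definition of the localization: the $n$-th component counts configurations with exactly $n$ points in $A$, weighted by $\binom Nn$ and by symmetry. The term $\frac{c_\GS}\ell N$ becomes $\frac{c_\GS}\ell\int\rho_\PP$. Fubini is justified by Lieb--Oxford-type finiteness, or by truncating the Coulomb kernel and using monotone convergence.

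The final claim \cref{gsmv} follows because an average over the probability space $\SO(3)\times C_\ell$ cannot exceed the supremum of its integrand. Hence some $(R,\tau)$ makes the integrand at least as large as the average, and that integrand is measurable and finite almost everywhere.

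The main obstacle is the Fourier positivity of $Y_\ell$ for the tetrahedral tiling and the evaluation of the constant. Here I would simply invoke \cite{graf1995electrostatic}, rescaled from $\ell=1$ by homogeneity. The constant's value with the cube made of 24 tetrahedra is taken from \cite[Lemma]{lewin2018statistical}. Everything else is bookkeeping.
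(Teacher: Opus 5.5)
Your reduction is essentially the same approach as the paper's. The paper gives no argument of its own: it cites Graf--Schenker and Lewin--Lieb--Seiringer and asserts the value of $c_\GS$. Your route is the standard one behind those references, and it is correct: positive definiteness of $Y_\ell=|x|^{-1}-W_\ell$ tested on $\nu=\sum_i\delta_{x_i}-\rho_\PP$, then integration against $\PP$ (giving $-2D+D=-D$ on both sides), then identification of same-tile pair energies with $\Cc(\PP|_A)$ through the definition of the localization, and finally choosing a good $(R,\tau)$.

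One correction concerns the constant.

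\begin{itemize}
\item Dropping the self-interaction terms from $\frac12\iint Y_\ell\,\dd\nu\,\dd\nu\ge 0$ costs $\frac N2 Y_\ell(0)$, not $N\,Y_\ell(0)$.
\item $\lim_{x\to0}Y_\ell(x)$ is not $c_\GS/\ell$. Indeed $W_\ell(x)=g(|x|/\ell)/|x|$, where $g$ is the rotation-averaged covariogram of $\tetra$ normalized by $|\tetra|$. Cauchy's projection formula gives $g(r)=1-\frac{|\partial\tetra|}{4|\tetra|}\,r+O(r^2)$.
\item With $|\tetra|=1/24$ and $|\partial\tetra|=\frac{1+2\sqrt2}{4}$, this yields $\ell\,Y_\ell(0)=\frac{3(1+2\sqrt2)}{2}$. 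The diagonal error is therefore $\frac{3(1+2\sqrt2)}{4}\,\frac N\ell\approx 2.87\,\frac N\ell$.
\end{itemize}

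Note that $c_\GS=\pi|\partial\tetra|$, so this error is smaller than $c_\GS N/\ell$ and the stated inequality still follows. You should, however, replace the identity you proposed to quote with this computation rather than cite it as given.
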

The constant $c_\GS$ is obtained by a straightforward but lengthy calculation, using the explicit formulas of \cite{graf1995electrostatic,graf1995molecular}. 
A similar estimate holds true for grand-canonical probabilities as well (with the same constant). The second term on the r.h.s. is a kind of localization error analogous to the term arising from the localization of the kinetic energy (i.e. the IMS formula). 

We will use the following simple observations to replace a continuous inhomogeneity $\zeta$ locally in a bounded domain $\Omega\subset\RR^d$. Let 
$$
\ul{\zeta}=\min_{\ol{\Omega}}\zeta,\quad\text{and}\quad \ol{\zeta}=\max_{\ol{\Omega}}\zeta
$$
denote the minimum and maximum of $\zeta$ over $\ol{\Omega}$. Then by the subadditivity and nonpositivity of $E(\rho)$, we have
\begin{equation}\label{replmin}
E(\iden_\Omega \zeta)\le E(\iden_\Omega\ul{\zeta}),
\end{equation}
and
\begin{equation}\label{replmax}
E(\iden_\Omega\ol{\zeta})\le E(\iden_\Omega\zeta).
\end{equation}

In order to control the fluctuations of the density locally, we need the following form 
of Morrey's inequality.
\begin{theorem}[Morrey's inequality on tetrahedra]\label{morrineq}
For every $p>3$ and $u\in W^{1,p}(\tetra)$, where $\tetra\subset\RR^3$ is a tetrahedron isometric to a dilation of the tetrahedron $\tetra_1$ above, the bound
$$
|u(x)-u(y)|\le c_\Mo |x-y|^{1-3/p} \|\grad u\|_{L^p},
$$
holds true for all $x,y\in\tetra$, where the constant $c_\Mo>0$ only depends on $p$. The estimate $c_\Mo\le 2\cdot 24^{1/p}\frac{p-1}{p+1}$ holds.
\end{theorem}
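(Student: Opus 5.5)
The plan is the classical Morrey argument, run with care for the constant and for the geometry of the tetrahedron. By density of $C^\infty(\ol{\tetra})$ in $W^{1,p}(\tetra)$, which holds since $\tetra$ is a Lipschitz (indeed convex) domain, and by uniform convergence in the estimate once it is proved for smooth functions, it suffices to treat $u\in C^1(\ol{\tetra})$. Both sides of the inequality scale in the same way under dilations $x\mapsto\ell x$: the left side is invariant, and $|x-y|^{1-3/p}\|\grad u\|_{L^p}$ picks up $\ell^{1-3/p}\cdot\ell^{-1+3/p}=1$. Both sides are also invariant under isometries. So we may assume $\tetra=\tetra_1$, the tetrahedron in the 24-piece tiling of the unit cube $C_1$, and in particular $|\tetra|=1/24$.

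\textbf{Step 1 (potential estimate on convex sets).} For a convex set $W$ of diameter $D$ and $x\in\ol W$, write $z=x+t\omega$ and integrate $\frac{\dd}{\dd t}u(x+t\omega)$ along segments. This gives the Gilbarg--Trudinger bound
$$
\Bigl|u(x)-\dashint_W u\Bigr|\le \frac{D^3}{3|W|}\int_W \frac{|\grad u(z)|}{|x-z|^{2}}\,\dd z .
$$

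\textbf{Step 2 (choice of averaging set).} Fix $x,y\in\tetra$ and set $r=|x-y|$. Take $W=\tetra\cap B_r(x)$, or more symmetrically $W=\tetra\cap B_{r}(\tfrac{x+y}{2})$. This set is convex, contains both points in its closure, and has diameter at most $2r$. The estimate of Step 1 is applied at $x$ and at $y$ with the same $W$, and the triangle inequality then produces the factor $2$ in $c_\Mo$.

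\textbf{Step 3 (H\"older).} Apply H\"older with exponents $p$ and $p'=p/(p-1)$:
$$
\int_W|\grad u|\,|x-z|^{-2}\,\dd z\le \|\grad u\|_{L^p}\Bigl(\int_{B_{2r}(x)}|x-z|^{-2p'}\,\dd z\Bigr)^{1/p'} .
$$
The integral is finite precisely because $p>3$, and it is a multiple of $r^{(3-2p')/p'}$. Combining this with $D^3\le 8r^3$ yields a bound of the form $C\,r^{3}|W|^{-1}r^{1-3/p}\cdot r^{-3+3/p}\,\|\grad u\|_{L^p}$. The $p$-dependent radial factor produces the rational expression in $p$. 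I expect that taking the averaging set to be $W$ itself rather than a ball, that is, bounding $\int_W|x-z|^{-2p'}$ via rearrangement against a ball of volume $|W|$, is what gives the cleaner factor $\frac{p-1}{p+1}$-type constant. At the same time it turns $|W|^{-1}$ into $|W|^{-1/p}$.

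\textbf{Main obstacle: a lower bound $|W|\gtrsim r^3$ uniform in $x,y\in\tetra$.} This is the step that needs the specific shape of the tetrahedron, since near a vertex the set $\tetra\cap B_r(x)$ is only a solid-angle fraction of the ball. My first attempt would be to use that $\tetra_1$ has a fixed minimal solid angle at its vertices, so that $|\tetra\cap B_r(x)|\ge c\,r^3$ for $r\le\diam\tetra$. If instead $r$ is comparable to $\diam\tetra$, I would take $W=\tetra$ itself, with $|W|=1/24$. With the rearrangement version of Step 3, where only $|W|^{-1/p}$ appears, the worst case $|W|\ge |\tetra|=1/24$ relative to the unit-cube normalization is exactly what would produce the factor $24^{1/p}$. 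So I would organize the two cases, short versus long separation, so that the cube/tetrahedron volume ratio $24$ is the only geometric constant that enters.
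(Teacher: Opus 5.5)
The paper gives no proof of its own here. It only points to the proof of \cite[Lemma 4.28]{adams2003sobolev} for the value of the constant.

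The qualitative part of your plan is sound and is a standard alternative to that cube-based argument. Its ingredients are:
\begin{itemize}
\item the dilation/isometry reduction;
\item the Gilbarg--Trudinger bound on the convex set $W=\tetra\cap B_r(x)$ with $r=|x-y|$;
\item H\"older's inequality, whose radial integral is finite exactly because $p>3$;
\item the uniform bound $|\tetra\cap B_r(x)|\ge c\,r^3$, which follows from the interior cone condition for the fixed shape $\tetra_1$.
\end{itemize}
Together these give $|u(x)-u(y)|\le c_\Mo(p)\,|x-y|^{1-3/p}\|\grad u\|_{L^p}$ with $c_\Mo$ depending only on $p$. No separate long-range case is needed, since $r\le\diam\tetra$ always.

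The gap is your last paragraph, which tries to reverse-engineer $c_\Mo\le 2\cdot 24^{1/p}\frac{p-1}{p+1}$. This cannot be done, for two reasons.

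First, your own computation does not lead there. Rearrangement gives $\int_W|x-z|^{-2p'}\,\dd z\le\frac{4\pi(p-1)}{p-3}R^{\frac{p-3}{p-1}}$ with $\frac{4\pi}{3}R^3=|W|$. So you get a factor $\bigl(\frac{p-1}{p-3}\bigr)^{1-1/p}$ and a power $|W|^{-\frac23-\frac1p}$, not $\frac{p-1}{p+1}$ and $|W|^{-1/p}$. Also, $W\subset\tetra$ forces $|W|\le 1/24$, so $|\tetra|=1/24$ is not a worst case. The lower bound on $|W|$ comes from the cone condition, not from the number $24$.

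Second, the target is false. It is bounded by $2\cdot 24^{1/3}$ uniformly in $p>3$. Take $\tetra$ with $\diam\tetra\le 1$ and a Lipschitz $u$, and let $p\downarrow 3$. Then $\|\grad u\|_{L^p}\to\|\grad u\|_{L^3}$ and $|x-y|^{1-3/p}\le 1$, so the claim would give $|u(x)-u(y)|\le 2\cdot 24^{1/3}\|\grad u\|_{L^3(\tetra)}$ for every Lipschitz $u$. This fails for the following functions. Let $B_{r_0}(x_0)\subset\tetra$ with $r_0<1/e$, set $u_k=\min\{k,\log\log(1/|x-x_0|)\}$ on $B_{r_0}(x_0)$, and extend by the constant $\log\log(1/r_0)$ outside. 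Each $u_k$ is Lipschitz and $\|\grad u_k\|_{L^3}^3\le 2\pi/\log^2(1/r_0)$ for all $k$. However, $u_k(x_0)-u_k(y)=k-\log\log(1/r_0)\to\infty$ for $y\in\tetra\setminus B_{r_0}(x_0)$.

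Hence every admissible $c_\Mo(p)$ must blow up as $p\downarrow 3$, as yours does. You should prove the theorem with the constant your argument actually yields, e.g.\ $c_\Mo\le C\bigl(\frac{p-1}{p-3}\bigr)^{1-1/p}$ with $C$ depending only on the shape of $\tetra_1$. The stated estimate $2\cdot 24^{1/p}\frac{p-1}{p+1}$ should be treated as erroneous rather than as something to reproduce.
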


The estimate on the constant is the one obtained in the proof of \cite[Lemma 4.28]{adams2003sobolev}.

\subsection{Thermodynamic limit for dyadic cubes}
To prove that the limit $e_{C_{2^N}}(\zeta)$ exists we first establish that the energy is nonincreasing for dyadic cubes.
\begin{lemma}\label{dyadmono}
$e_{C_{2^{N+1}}}(\zeta)\le e_{C_{2^N}}(\zeta)$ for every $N\ge 1$.
\end{lemma}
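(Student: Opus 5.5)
The plan is to use the subadditivity \cref{Eclsubadd} of $E$ together with the fact that $C_{2^{N+1}}$ is tiled, up to a null set, by $2^d$ translates of $C_{2^N}$. Write
$$
\iden_{C_{2^{N+1}}}=\sum_{k=1}^{2^d}\iden_{C_{2^N}+b_k}\quad\text{a.e.},
$$
where $b_k\in\{\pm 2^{N-1}\}^d$. For fixed $R\in\SO(d)$ and $a\in\RR^d$, \cref{Eclsubadd} (extended by induction to finitely many summands) gives
$$
E\bigl(\iden_{C_{2^{N+1}}}\zeta(R(\cdot-a))\bigr)\le\sum_{k=1}^{2^d}E\bigl(\iden_{C_{2^N}+b_k}\zeta(R(\cdot-a))\bigr).
$$
Each summand is finite by \cref{Eclapriori}, since $\iden_{C_{2^N}+b_k}\zeta(R(\cdot-a))\in L^1\cap L^{1+s/d}$ because $\zeta\in L^{1+s/d}_\per(\LC)\subset L^1_\per(\LC)$.

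Next we use translation invariance of $E$. Shifting by $b_k$ gives
$$
E\bigl(\iden_{C_{2^N}+b_k}\zeta(R(\cdot-a))\bigr)=E\bigl(\iden_{C_{2^N}}\zeta(R(\cdot-(a-b_k)))\bigr).
$$
We then take the mean value in $a$ over $\RR^d$. The map $a\mapsto E(\iden_{C_{2^N}}\zeta(R(\cdot-a)))$ is $R^\top\LC$-periodic and lies in $\Lloc^1$, by the argument given before the definition of $e^\cl_\Omega$. By \cref{permeanthm} its mean value over $\RR^d$ is the limit of averages over $C_L+a'$, uniformly in $a'$, so it is invariant under the shift $a\mapsto a-b_k$. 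We could also argue directly, since the mean over a unit cell of a periodic function is translation invariant. Each of the $2^d$ terms therefore has the same $a$-mean, namely that of $E(\iden_{C_{2^N}}\zeta(R(\cdot-a)))$. The left-hand side is also periodic and locally integrable, so its mean exists, and the inequality between integrands passes to mean values by monotonicity of the averages over $C_L$ in the limit.

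Finally we integrate over $R\in\SO(d)$ and divide by $|C_{2^{N+1}}|=2^d|C_{2^N}|$. This yields
$$
e_{C_{2^{N+1}}}(\zeta)\le\frac{2^d}{2^d|C_{2^N}|}\int_{\SO(d)}\dd R\dashint_{\RR^d}\dd a\,E\bigl(\iden_{C_{2^N}}\zeta(R(\cdot-a))\bigr)=e_{C_{2^N}}(\zeta).
$$

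The only delicate step is justifying that the $a$-mean is invariant under the shift by $b_k$, and that it exists at all, for each fixed $R$. The issue is that the rotated function has period lattice $R^\top\LC$, not $\LC$. \cref{permeanthm} applies to any lattice, so this causes no real difficulty. Measurability in $(R,a)$ can be taken from the well-definedness of the definition itself.
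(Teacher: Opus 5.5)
Your proposal is correct and is essentially the paper's proof: you tile $C_{2^{N+1}}$ by the $2^d$ translates $C_{2^N}+2^{N-1}z$, $z\in\{\pm 1\}^d$, apply subadditivity, and absorb each shift into $\zeta$ using the translation invariance of $E$ and of the mean value over $\RR^d$. Your extra remarks on the $R^\top\LC$-periodicity of the $a$-integrand and on passing the pointwise inequality to mean values only make explicit what the paper uses tacitly.
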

\begin{proof}
The big cube $C_{2^{N+1}}$ is essentially composed of $2^d$ smaller cubes of side length $2^N$ with centers placed at the vertices of $C_{2^N}$. More precisely, 
$$
\ol{C_{2^{N+1}}}=\bigcup_{z\in\{\pm 1\}^d} \ol{\bigl(C_{2^N}+2^{N-1}z\bigr)}.
$$
Hence, using the subadditivity and the isometry invariance of $E(\rho)$,
\begin{align*}
e_{C_{2^{N+1}}}(\zeta)&=\int_{\SO(d)}\dd R \dashint_{\RR^d}\dd a\, \frac{E\left(\sum_{z\in\{\pm 1\}^d} \iden_{C_{2^N}+2^{N-1}z} \zeta(R(\cdot-a)) \right)}{|C_{2^{N+1}}|}\\
&\le \frac{1}{2^d} \sum_{z\in\{\pm 1\}^d}  \int_{\SO(d)}\dd R \dashint_{\RR^d}\dd a\, \frac{E\left(\iden_{C_{2^N}+2^{N-1}z} \zeta(R(\cdot-a)) \right)}{|C_{2^{N}}|}\\
&=\frac{1}{2^d}\sum_{z\in\{\pm 1\}^d} \int_{\SO(d)}\dd R \dashint_{\RR^d}\dd a\,  \frac{E\left(\iden_{C_{2^N}} \zeta(R(\cdot+2^{N-1}z-a)) \right)}{|C_{2^{N}}|}\\
&=\frac{1}{2^d}\sum_{z\in\{\pm 1\}^d} \int_{\SO(d)}\dd R \dashint_{\RR^d}\dd a\, \frac{E\left(\iden_{C_{2^N}} \zeta(R(\cdot-a)) \right)}{|C_{2^{N}}|}=e_{C_{2^N}}(\zeta).
\end{align*}
In the next to last step we used the translational invariance of the mean value
to absorb $2^{N-1}z$ in the argument of $\zeta$.
\end{proof}

Hence, the numerical sequence $\{e_{C_{2^{N+1}}}(\zeta)\}_{N\ge 1}$ is nonincreasing and bounded from below according to \cref{ecllobound}. 
Therefore, the limit
$$
e_\NUEG^\cl(\zeta)=\lim_{N\to\infty} e_{C_{2^{N}}}(\zeta)
$$
exists and is finite.

\subsection{Thermodynamic limit for general domains}
First, we prove the upper bound via an inner approximation by dyadic cubes. Let 
$$
J_0^{n,N}=\{z\in\ZZ^d : (C_{2^{n}} + 2^n z) \subset \Omega_N\}
$$ 
and 
$$
\Omega_N'=\bigcup_{z\in J_0^{n,N}} \bigl(C_{2^{n}} + 2^n z\bigr)\subset \Omega_N.
$$
By the subadditivity and nonpositivity of $E(\rho)$, we have
\begin{align*}
\frac{E(\iden_{\Omega_N}\zeta)}{|\Omega_N|}&\le \frac{E(\iden_{\Omega_N'}\zeta)}{|\Omega_N|}+\frac{E(\iden_{\Omega_N\setminus\Omega_N'}\zeta)}{|\Omega_N|}\\
&\le\frac{1}{|\Omega_N|} E\Biggl( \sum_{z\in J_0^{n,N}} \iden_{C_{2^{n}} + 2^n z} \zeta \Biggr)\le \frac{1}{|\Omega_N|} \sum_{z\in J_0^{n,N}} E(\iden_{C_{2^{n}} + 2^n z} \zeta).
\end{align*}
After replacing $\zeta$ by $\zeta(R(\cdot-a))$ and averaging over all translations and rotations, we find
$$
e_{\Omega_N}(\zeta)\le \frac{2^n|J_0^{n,N}|}{|\Omega_N|} e_{C_{2^n}}(\zeta)=\frac{|\Omega_N'|}{|\Omega_N|}e_{C_{2^n}}(\zeta).
$$
To estimate the prefactor on the r.h.s, we use the uniform $\kappa$-regularity of the sequence $\Omega_N$ in the manner
\begin{align*}
|\Omega_N'|&=|\Omega_N| - \sum_{\substack{z\in\ZZ^3\\(C_{2^{n}} + 2^n z)\cap (\RR^d\setminus\Omega_N)\neq \emptyset}} |(C_{2^{n}} + 2^n z)\cap\Omega_N|\\
&\ge |\Omega_N| - |\{x\in\RR^d : d(x,\partial\Omega_N)\le 2^{n}\sqrt{d}\}|\\
&\ge |\Omega_N|\left(1 - \kappa(2^n\sqrt{d}|\Omega_N|^{-1/d})\right).
\end{align*}
Since $\kappa(t)\to 0$ as $t\to 0$, we obtain after letting $N\to\infty$ and then $n\to\infty$, that 
$$
\limsup_{N\to\infty} e_{\Omega_N}(\zeta)\le e_\NUEG^\cl(\zeta),
$$
where $e_\NUEG^\cl(\zeta)$ is the limit for dyadic cubes, as above.

It remains to prove the lower bound. By an appropriate translation of $\Omega_N$ we can assume that it is contained in a dyadic cube $C_{2^K}$ of side length $|\Omega_N|^{1/d}$. 
We now take 
$$
J_c^{n,N}=\{z\in\ZZ^d : (C_{2^{n}} + 2^n z) \subset C_{2^K}\setminus\Omega_N\}
$$ 
and 
$$
\Omega_N''=\bigcup_{z\in J_c^{n,N}} \bigl(C_{2^{n}} + 2^n z\bigr).
$$
Therefore, we may decompose $C_{2^K}=\Omega_N\cup \Omega_N'' \cup R_N$, where the remainder set is
$$
R_N=\bigcup_{\substack{z\in J_c^{n,N}\\ (C_{2^{n}} + 2^n z)\cap\partial\Omega_N\neq\emptyset}} \bigl(C_{2^{n}} + 2^n z\bigr)\cap \bigl(C_{2^K}\setminus\Omega_N\bigr).
$$
By \cref{dyadmono}, subadditivity and nonpositivity of $E(\rho)$, we have 
\begin{align*}
\frac{|C_{2^K}|}{|\Omega_N|} e_\NUEG^\cl(\zeta) \le \frac{|C_{2^K}|}{|\Omega_N|}  e_{C_{2^K}}(\zeta)\le  e_{\Omega_N}(\zeta) 
+  \frac{|\Omega_N''|}{|\Omega_N|} e_{C_{2^n}}(\zeta)
\end{align*}
Rearranging, we obtain
$$
e_\NUEG^\cl(\zeta) + \frac{|\Omega_N''|}{|\Omega_N|}(e_\NUEG^\cl(\zeta)-e_{C_{2^n}}(\zeta)) + \frac{|R_N|}{|\Omega_N|} e_\NUEG^\cl(\zeta) 
\le e_{\Omega_N}(\zeta).
$$
Here, using the uniform $\kappa$-regularity of $\{\Omega_N\}$, we have 
\begin{align*}
|R_N|&=\sum_{\substack{z\in J_c^{n,N}\\ (C_{2^{n}} + 2^n z)\cap\partial\Omega_N\neq\emptyset}} \bigl|\bigl(C_{2^{n}} + 2^n z\bigr)\cap \bigl(C_{2^K}\setminus\Omega_N\bigr)\bigr|\\
&\le |\{x\in\RR^d : d(x,\partial\Omega_N)\le 2^{n}\sqrt{d}\}|\\
&\le |\Omega_N| \kappa\bigl(2^n\sqrt{d}|\Omega_N|^{-1/d}\bigr).
\end{align*}
Clearly, $|\Omega_N''|\le |C_{2^K}|\le C|\Omega_N|$ and so in total we get the estimate
$$
e_\NUEG^\cl(\zeta) + C\bigl(e_\NUEG^\cl(\zeta)-e_{C_{2^n}}(\zeta)\bigr) + \kappa\bigl(2^n\sqrt{d}|\Omega_N|^{-1/d}\bigr) e_\NUEG^\cl(\zeta)\le e_{\Omega_N}(\zeta).
$$
Taking the liminf as $N\to\infty$ and $n\to\infty$, we arrive at 
$$
e_\NUEG^\cl(\zeta)\le \liminf_{N\to\infty} e_{\Omega_N}(\zeta).
$$

In summary, the limits $\lim_{N\to\infty} e_{\Omega_N}(\zeta)$ and $\lim_{N\to\infty} e_{C_{2^N}}(\zeta)$ exist and equal to a common number
$e_\NUEG^\cl(\zeta)$. This completes the proof of \cref{clnuegthm}.

\subsection{Convergence rate for tetrahedra}
In the 3D Coulomb case, when the domain is a dilated (reference) tetrahedron, we can 
obtain information about the rate of convergence $e_{\ell\tetra}(\zeta)\to e_\NUEG(\zeta)$ as $\ell\to\infty$. 

\begin{theorem}\label{clconvrate}
Suppose that $d=3$ and $s=1$. For $\ell>0$ sufficiently large, the bounds
$$
e_\NUEG(\zeta)\le e_{\ell\tetra}(\zeta)\le e_\NUEG(\zeta) + \frac{c_\GS}{\ell} \dashint_{\RR^3}\zeta
$$
hold true.
\end{theorem}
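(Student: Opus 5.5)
The lower bound follows from subadditivity and a tiling argument; the upper bound follows from the Graf--Schenker inequality (\cref{gsineq}) applied to the energy of a large cube, combined with the averaging over translations and rotations.

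\emph{Lower bound.} The cube $C_{2\ell}$ is tiled by $8\cdot 24$ tetrahedra congruent to $\ell\tetra$: each cube $C_\ell+\ell z$, $z$ a vertex offset, is tiled by the $\ell\tetra_j$. Apply subadditivity \cref{Eclsubadd} to $E(\iden_{C_{2^K}}\zeta(R(\cdot-a)))$ with $C_{2^K}$ decomposed into the tetrahedra $\ell\tetra_j+\ell z$, where $\ell=2^{-m}2^K$ is dyadic. Each tetrahedron is an isometric image $T(\ell\tetra)$. The isometry invariance of $E$, together with the invariance of $\int_{\SO(3)}\dd R\dashint\dd a$ under composing with a rigid motion, then gives $e_{C_{2^K}}(\zeta)\le e_{\ell\tetra}(\zeta)$. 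Letting $K\to\infty$ with \cref{clnuegthm} yields $e_\NUEG(\zeta)\le e_{\ell\tetra}(\zeta)$.

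For general (non-dyadic) $\ell$, tile a large cube $C_{L}$ with $L\in\ell\NN$ instead. The cubes $C_L$ with $L=k\ell$ have $\kappa$-regular boundaries, so \cref{clnuegthm} still applies. Note that the volume weights match exactly, since $|C_L|=\#\{\text{tetrahedra}\}\cdot|\ell\tetra|$.

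\emph{Upper bound.} Fix a large cube $C_L$ and $\rho=\iden_{C_L}\zeta(R_0(\cdot-a))$. Take a near-optimal grand-canonical $\PP$ for $F_\SCE(\rho)$ and apply the averaged Graf--Schenker inequality in its grand-canonical form:
\[
E(\rho)+\epsilon\ge \int\dd R\dashint_{C_\ell}\dd\tau\sum_{z,j}\Big(\Cc(\PP|_{T})-D(\iden_T\rho)\Big)-\frac{c_\GS}{\ell}\int\rho,
\]
where $T=R^\top(\ell\tetra_j+\ell z+\tau)$. Since $\rho_{\PP|_T}=\iden_T\rho$, each summand is bounded below by $E(\iden_T\iden_{C_L}\zeta(R_0(\cdot-a)))$. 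For tetrahedra $T$ fully inside $C_L$ this equals $E$ of a rigidly moved $\iden_{\ell\tetra}\zeta(\cdot)$. The boundary tetrahedra are bounded below by the Lieb--Oxford bound \cref{Eclapriori}, and their total contribution is $O(\ell L^{2}\dashint\zeta^{4/3})$ after averaging.

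Next, divide by $L^3$ and average over $R_0\in\SO(3)$ and $a$. Each interior term, averaged over $(R_0,a)$, becomes exactly $|\ell\tetra|\,e_{\ell\tetra}(\zeta)$. The number of interior tetrahedra is $\approx L^3/|\ell\tetra|$ up to $O(L^2\ell^{-2})$. The error term averages to $\frac{c_\GS}{\ell}\dashint_{\RR^3}\zeta$ by \cref{permeanthm}. Letting $L\to\infty$ gives
\[
e_\NUEG(\zeta)\ge e_{\ell\tetra}(\zeta)-\frac{c_\GS}{\ell}\dashint\zeta .
\]

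\emph{Main obstacle.} The delicate step is justifying that the double average (the Graf--Schenker parameters $(R,\tau)$ together with our $(R_0,a)$) of $E$ on the interior pieces reproduces exactly $e_{\ell\tetra}(\zeta)$. This requires the change of variables combining the rigid motion $T_j$, the Graf--Schenker rotation $R^\top$ and the shift $\tau$ into a single rigid motion applied to $\zeta$, and then using Haar-measure invariance on $\SO(3)$ and translation invariance of $\dashint\dd a$. A secondary issue is the measurability of the integrands in $(R,\tau,R_0,a)$, which I would handle through the a priori bounds and Fubini. The phrase ``$\ell$ sufficiently large'' absorbs the requirement that the limiting procedures are legitimate.
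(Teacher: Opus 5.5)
Your proof is correct and is essentially the paper's: subadditivity plus isometry averaging for $e_{\NUEG}(\zeta)\le e_{\ell\tetra}(\zeta)$, and for the other bound the averaged Graf--Schenker inequality with an interior/boundary split, where Lieb--Oxford controls the boundary layer and the interior pieces rebuild $|\ell\tetra|\,e_{\ell\tetra}(\zeta)$ after averaging. The only real difference is that you compare with large cubes $C_{k\ell}$ rather than large tetrahedra $\ell'\tetra$, which makes the tiling in the lower bound exact so that nonpositivity of $E$ is not needed there; your counting of interior tetrahedra for each fixed Graf--Schenker isometry $(R,\tau)$, rather than uniformly in $(R,\tau)$, is also the right way to get the $O(\ell/L)$ boundary fraction.
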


\begin{proof}
Using the subadditivity and nonpositivity of $E(\rho)$, 
\begin{align*}
e_{\ell'\tetra}(\zeta)&=\dashint_{\RR^3}\dd a\int_{\SO(3)}\dd R\, \frac{E\bigl(\iden_{\ell'\tetra}\zeta(R(\cdot-a))\bigr)}{|\ell'\tetra|}\\
&\le \frac{1}{|\ell'\tetra|} \sum_{(z,j)\in J_0} \dashint_{\RR^3}\dd a\int_{\SO(3)}\dd R\, E\bigl(\iden_{\ell\tetra_j+\ell z}\zeta(R(\cdot-a))\bigr)\\
&=\frac{|J_0|}{|\ell'\tetra|} \dashint_{\RR^3}\dd a\int_{\SO(3)}\dd R\, E\bigl(\iden_{\ell\tetra}\zeta(R(\cdot-a))\bigr)\\
&=\frac{|J_0||\ell\tetra|}{|\ell'\tetra|} e_{\ell\tetra}(\zeta)
\end{align*}
where $J_0=\{(z,j)\in\ZZ^3\times\{1,\ldots,24\} : (\ell\tetra_j+\ell z)\subset \ell'\tetra\}$.
Recall that $e_{\ell\tetra}(\zeta)\le 0$, so using the estimate 
$$
\frac{|J_0||\ell\tetra|}{|\ell'\tetra|}\ge \frac{\ell'^3 - C\ell'^2\ell}{\ell'^3}=1-\frac{C\ell}{\ell'},
$$
and taking the limit $\ell'\to\infty$ gives the stated lower bound.

For the upper bound using the Graf--Schenker inequality (\cref{gsineq}),
\begin{align*}
&e_{\ell'\tetra}(\zeta)=\int_{\SO(3)}\dd Q\dashint_{\RR^3}\dd a\,\frac{E\bigl(\iden_{\ell'\tetra}\zeta(Q(\cdot-a))\bigr)}{|\ell'\tetra|} \\
&\ge \frac{1}{|\ell'\tetra|} \int_{\SO(3)}\dd Q\dashint_{\RR^3}\dd a \int_{\SO(3)}\dd R \dashint_{C_\ell}\dd\tau \sum_{z\in\ZZ^3}\sum_{j=1}^{24} 
E\bigl(\iden_{\ell'\tetra}\iden_{\ell\tetra_j}(R\cdot-\ell z-\tau)\zeta(Q(\cdot-a))\bigr) \\
&\quad -\frac{c_\GS}{\ell} \frac{1}{|\ell'\tetra|} \int_{\SO(3)}\dd Q\dashint_{\RR^3}\dd a \int_{\RR^3} \iden_{\ell'\tetra}\zeta(Q(\cdot-a))=:\mathrm{(I)} + \mathrm{(II)}.
\end{align*}

Here, (II) is simply $-\frac{C}{\ell} \dashint_{\RR^3}\zeta$. To deal with (I), we define the index set
$$
J:=\left\{ \begin{aligned}(z,j)\in \ZZ^3\times\{1,\ldots,24\} : &\supp( \iden_{\ell\tetra_j}(R\cdot - \ell z - \tau)) \cap \supp(\iden_{\ell'\tetra})\neq \emptyset,\\
&\text{for some}\;R\in\SO(3)\;\text{and}\; \tau\in C_\ell \end{aligned} \right\},
$$
which collects all the small tetrahedra $R^\top(\ell\tetra_j + \ell z + \tau)$ that possibly intersect with the big tetrahedron $\ell'\tetra$. 
Furthermore, we define
$$
J_0:=\left\{ \begin{aligned}(z,j)\in \ZZ^3\times\{1,\ldots,24\} : &\supp( \iden_{\ell\tetra_j}(R\cdot - \ell z - \tau)) \subset \ell'\tetra,\\
&\text{for all}\;R\in\SO(3)\;\text{and}\; \tau\in C_\ell \end{aligned} \right\},
$$
which contains the small tetrahedra that are inside $\ell'\tetra$.

The $J_0$-part of (I) reads
\begin{align*}
\mathrm{(Ia)}&=\frac{1}{|\ell'\tetra|} \int_{\SO(3)}\dd Q\dashint_{\RR^3}\dd a \int_{\SO(3)}\dd R \dashint_{C_\ell}\dd\tau \sum_{(z,j)\in J_0}
E\bigl(\iden_{\ell\tetra_j}(R\cdot-\ell z-\tau)\zeta(Q(\cdot-a))\bigr)\\
&=\frac{1}{|\ell'\tetra|} \int_{\SO(3)}\dd Q\dashint_{\RR^3}\dd a \sum_{(z,j)\in J_0}
E\bigl(\iden_{\ell\tetra}\zeta(Q(\cdot-a))\bigr)\\
&=\frac{|J_0||\ell\tetra|}{|\ell'\tetra|} e_{\ell\tetra}(\zeta).
\end{align*}
The $J\setminus J_0$-part of (I) can be bounded by the Lieb--Oxford inequality as
\begin{align*}
\mathrm{(Ib)}&=\frac{1}{|\ell'\tetra|} \int_{\SO(3)}\dd Q\dashint_{\RR^3}\dd a \int_{\SO(3)}\dd R \dashint_{C_\ell}\dd\tau\\
&\quad\quad \times \sum_{(z,j)\in J\setminus J_0}
E\bigl(\iden_{\ell'\tetra}\iden_{\ell\tetra_j}(R\cdot-\ell z-\tau)\zeta(Q(\cdot-a))\bigr)\\
&\ge - c_\LO \frac{1}{|\ell'\tetra|} \int_{\SO(3)}\dd Q\dashint_{\RR^3}\dd a \int_{\SO(3)}\dd R \dashint_{C_\ell}\dd\tau \\
&\quad\quad\times\sum_{(z,j)\in J\setminus J_0}\int_{\RR^3} \iden_{\ell'\tetra}\iden_{\ell\tetra_j}(R\cdot-\ell z-\tau)\zeta(Q(\cdot-a))^{4/3}\\
&=- c_\LO \frac{1}{|\ell'\tetra|} \int_{\SO(3)}\dd R \dashint_{C_\ell}\dd\tau \sum_{(z,j)\in J\setminus J_0}\int_{\RR^3} \iden_{\ell'\tetra}\iden_{\ell\tetra_j}(R\cdot-\ell z-\tau) \dashint_{\RR^3} \zeta^{4/3}\\
&\ge - c_\LO \frac{|\ell\tetra||J\setminus J_0|}{|\ell'\tetra|}\dashint_{\RR^3} \zeta^{4/3}.
\end{align*}
Clearly, the volume ratios may be bounded as
$$
\frac{|J_0||\ell\tetra|}{|\ell'\tetra|}\le 1 \quad\text{and}\quad \frac{|\ell\tetra||J\setminus J_0|}{|\ell'\tetra|}\le\frac{C\ell\ell'^2}{\ell'^3}=\frac{C\ell}{\ell'}.
$$
Hence, in the limit $\ell'\to\infty$ we obtain the estimate
$$
e_\NUEG(\zeta)\ge e_{\ell\tetra}(\zeta) - \frac{c_\GS}{\ell} \dashint_{\RR^3}\zeta,
$$
which finishes the proof. 
\end{proof}

\subsection{Local density approximation}\label{clldaproof}
As mentioned above, the proof of \cref{cllda} is very similar to that of \cite[Theorem 4]{lewin2020local}. Instead of explaining the differences, we adapt the proof technique to our case and
produce explicit constants.

If $\epsilon$ is sufficiently large, the estimate follows from the Lieb--Oxford inequality. More precisely,
from \cref{ecllobound},
\begin{align*}
0\ge e_{\ell\tetra}^\cl(\zeta)&\ge -c_\LO \dashint_{\RR^3} \zeta^{4/3}= c_\UEG \dashint_{\RR^3} \zeta^{4/3} - (c_\LO-|c_\UEG|) \dashint_{\RR^3} \zeta^{4/3},
\end{align*}
hence the stated estimate \cref{clldaest} follows from this, whenever $\epsilon\ge c_\LO-|c_\UEG|$.
Note that $c_\LO\ge |c_\UEG|$ according to \cite{lewin2018statistical}.

So we may assume that $\epsilon<\epsilon_0:=c_\LO-|c_\UEG|$. First, we prove the upper bound. By subadditivity and nonpositivity of $E(\rho)$ and \cref{replmin}, 
\begin{equation}\label{clldapre}
E(\iden_{\ell'\tetra}\zeta)\le \sum_{(z,j)\in J_0} E(\iden_{\ell\tetra_j+\ell z}\zeta)
\le \sum_{(z,j)\in J_0} E\bigl(\iden_{\ell\tetra_j+\ell z}\ul{\zeta}_{z,j}\bigr),
\end{equation}
where $J_0=\{(z,j)\in\ZZ^3\times\{1,\ldots,24\} : (\ell\tetra_j+\ell z)\subset \ell'\tetra\}$
and $\ul{\zeta}_{z,j}=\min_{\ol{\ell\tetra_j+\ell z}}\zeta$.
A special case of \cref{clconvrate} reads
\begin{equation}\label{cluegapri}
c_\UEG \rho_0^{4/3} \le \frac{E(\iden_{\ell\tetra}\rho_0)}{|\ell\tetra|} \le c_\UEG\rho_0^{4/3} + \frac{c_\GS}{\ell} \rho_0.
\end{equation}
Inserting the upper bound in our estimate \cref{clldapre},
\begin{align*}
E(\iden_{\ell'\tetra}\zeta)&\le c_\UEG \sum_{(z,j)\in J_0} \int_{\ell\tetra_j+\ell z} \ul{\zeta}_{z,j}^{4/3}  
+ \frac{c_\GS}{\ell} \sum_{(z,j)\in J_0} \int_{\ell\tetra_j+\ell z} \ul{\zeta}_{z,j}\\
&= c_\UEG \int_{\ell'\tetra} \zeta^{4/3} + 
|c_\UEG| \sum_{(z,j)\in J_0} \int_{\ell\tetra_j+\ell z} \left(\zeta^{4/3}-\ul{\zeta}_{z,j}^{4/3}\right) + \frac{c_\GS}{\ell}\int_{\ell'\tetra} \zeta.
\end{align*}
In the second term, the integral measuring the local fluctuations of the inhomogeneity is estimated using Morrey's and H\"older's inequality,
\begin{multline}\label{locflucest}
\int_{\ell\tetra_j+\ell z} \left(\zeta^{4/3}-\ul{\zeta}_{z,j}^{4/3}\right)\le\\
\frac{a}{p}\left(\frac{4}{3\theta}\right)^{\frac{p}{a}} c_\Mo^p |\tetra| \frac{1}{\epsilon^{p+\frac{p}{a}-1}} \int_{\ell\tetra_j+\ell z} |\grad\zeta^\theta|^p
+ \left(1-\frac{a}{p}\right)\epsilon \int_{\ell\tetra_j+\ell z} \zeta^{\left(\frac{4}{3}-\theta a\right)\frac{p}{p-a}}
\end{multline}
for all $0<a\le 1$ and we have used $\epsilon=1/\ell$. To proceed, we require that
$1\le \left(\frac{4}{3}-\theta a\right)\frac{p}{p-a}\le \frac{4}{3}$, or equivalently that
$$
\frac{4}{3}\le \theta p\le 1 + \frac{p}{3a},
$$
so that we can bound pointwise as $\zeta^{\left(\frac{4}{3}-\theta a\right)\frac{p}{p-a}}\le \zeta + \zeta^{4/3}$. 
Moreover, we replace $\epsilon\to \left(|c_\UEG|\left(1-\frac{a}{p}\right)+c_\GS\right)^{-1} \epsilon$ to absorb
the prefactor in the $\epsilon$ term.

After replacing $\zeta$ by $\zeta(R(\cdot-a))$ and averaging over all rotations and translations, and taking the limit $\ell'\to\infty$, we arrive at
$$
e_{\NUEG}(\zeta)\le c_\UEG \dashint_{\RR^3} \zeta^{4/3} + \epsilon \dashint_{\RR^3} \left(\zeta+\zeta^{4/3}\right)
+  \frac{C_{p,\theta,a}}{\epsilon^{p+\frac{p}{a}-1}}  \dashint_{\RR^3} |\grad\zeta^\theta|^p,
$$
where
$$
C_{p,\theta,a}=\frac{a}{p}\left(\frac{4}{3\theta}\right)^{\frac{p}{a}} \frac{2^p}{e^2} \left( \frac{3}{5} \left(\frac{9\pi}{3}\right)^{1/3}+\pi\frac{1+2\sqrt{2}}{4}\right)^{p+\frac{p}{a}-1}
$$
and we used the Lieb--Narnhofer bound $|c_\UEG| \le \frac{3}{5} \left(\frac{9\pi}{3}\right)^{1/3}$ from \cite{lieb1975thermodynamic} and $\left(\frac{p-1}{p+1}\right)^p\le e^{-2}$. 
We replace $C_{p,\theta,a}$ with its rough bound
\begin{equation}\label{Crough}
C_{p,\theta,a}\le \frac{15}{8} \left(\frac{10}{\theta}\right)^{p+\frac{p}{a}-1}.
\end{equation}
The optimal choice for $a$ minimizing the exponent of $\epsilon$ is
$$
a=\begin{cases}
\frac{p}{3(\theta p -1)} & \theta p \le 1 + \frac{p}{3}\\ 
1 & \text{otherwise}
\end{cases}
$$
which gives the stated exponent $b$.

The proof of the lower bound is less direct. The form \cref{gsmv} of the Graf--Schenker inequality implies that
\begin{equation}\label{clldagsch}
E(\iden_{\ell'\tetra}\zeta)\ge \sum_{z\in\ZZ^3}\sum_{j=1}^{24} E\bigl(\iden_{\ell\tetra_j}(\cdot-\ell z)\iden_{\ell'\tetra}\zeta\bigr) - \frac{c_\GS}{\ell}\int_{\RR^3}\iden_{\ell'\tetra}\zeta,
\end{equation}
where we absorbed the isometry $(R,\tau)$ into $\zeta$ (recall that $e_{\ell'\tetra}(\zeta)$ is isometry-invariant in $\zeta$). 
We would like to use \cref{replmax} to replace $\zeta$ by its maximum 
$\ol{\zeta}_{z,j}=\max_{\ol{\ell\tetra_j+\ell z}}\zeta$,
and after applying the lower bound of \cref{cluegapri}, proceed in a similar way as above. 
However, this does not work because we cannot locally control $\ol{\zeta}_{z,j}$ by (possibly a constant times) $\zeta(x)$.
To continue, let
$$
J=\{ (z,j)\in\ZZ^3\times\{1,\ldots,24\} : (\ell\tetra_j+\ell z)\cap(\ell'\tetra)\neq\emptyset \}
$$
and $J_0$ exactly as above. 
\vspace{.3em}\\
\noindent\textbf{Case I. (Simple tetrahedra)} Define the set $J_0^{\mathrm{S}}$ of all tetrahedra $(z,j)\in J_0$ such that 
\begin{equation}\label{clldapr}
c_\LO \int_{\ell\tetra_j+\ell z}\zeta^{4/3}\le \epsilon \int_{\ell\tetra_j+\ell z}\left(\zeta +\zeta^{4/3}\right) + \frac{A}{\epsilon^{p+p/a-1}} \int_{\ell\tetra_j+\ell z}|\grad\zeta^\theta|^p,
\end{equation}
where $A>0$ is a universal constant depending only on $p$, $\theta$ and $a$, which we specify later. Recall that we put $\epsilon=1/\ell$. 
The terms in \cref{clldagsch} belonging to $J_0^{\mathrm{S}}$ can be bounded as
\begin{multline*}
\sum_{(z,j)\in J_0^{\mathrm{S}}} E\bigl(\iden_{\ell\tetra_j}(\cdot-\ell z)\zeta\bigr)\ge - c_\LO \int_{\ell\tetra_j+\ell z} \zeta^{4/3}\\
\ge \epsilon \sum_{(z,j)\in J_0^{\mathrm{S}}} \int_{\ell\tetra_j+\ell z}\left(\zeta +\zeta^{4/3}\right) + \frac{A}{\epsilon^{p+p/a-1}} \sum_{(z,j)\in J_0^{\mathrm{S}}} \int_{\ell\tetra_j+\ell z}|\grad\zeta^\theta|^p
\end{multline*}
by the Lieb--Oxford inequality and \cref{clldapr}.
Clearly, if $\ol{\zeta}_{z,j}\le (\epsilon/c_\LO)^3$, then $(z,j)\in J_0^{\mathrm{S}}$. 
Tetrahedra $\ell\tetra_j+\ell z$ for which \cref{clldapr} holds are called \emph{simple}, following the terminology of \cite{jex2025classical}.
\vspace{.3em}\\
\noindent\textbf{Case II. (Main tetrahedra)} The alternative is that
$$
c_\LO \int_{\ell\tetra_j+\ell z}\zeta^{4/3}> \epsilon \int_{\ell\tetra_j+\ell z}\left(\zeta +\zeta^{4/3}\right) + \frac{A}{\epsilon^{p+p/a-1}} \int_{\ell\tetra_j+\ell z}|\grad\zeta^\theta|^p,
$$
and 
$$
\ol{\zeta}_{z,j}>(\epsilon/c_\LO)^3
$$
holds for $\ell\tetra_j+\ell z$, which we call a \emph{main tetrahedron} (again following \cite{jex2025classical}). The collection of main tetrahedra is denoted by $J_0^{\mathrm{M}}$.
The key idea of the proof technique in \cite{lewin2020local,jex2025classical} is that in a main tetrahedron the inhomogeneity 
is slowly varying. In fact, we have using Morrey's inequality (\cref{morrineq}) that
\begin{align*}
\ol{\zeta}_{z,j}^\theta-\ul{\zeta}_{z,j}^\theta&\le c_\Mo
\ell^{1-p/3}\left(\int_{\ell\tetra_j+\ell z}|\grad\zeta^\theta|^p\right)^{1/p}\\
&\le A^{-1/p} c_\Mo \epsilon^{\frac{1}{a}-\frac{1}{p}}\ell^{-\frac{p}{3}} \left(c_\LO\int_{\ell\tetra_j+\ell z}\zeta^{4/3}\right)^{1/p}\\
&\le A^{-1/p} c_\Mo c_\LO^{1/p} \epsilon^{\frac{1}{a}-\frac{1}{p}} \ol{\zeta}_{z,j}^{\frac{4}{3p}}
\end{align*}
Our hypothesis implies that $1<(c_\LO^3\epsilon^{-3})^{\theta-\frac{4}{3p}} \ol{\zeta}_{z,j}^{\theta-\frac{4}{3p}}$. Inserting this into the r.h.s. above gives
\begin{align*}
\ol{\zeta}_{z,j}^\theta-\ul{\zeta}_{z,j}^\theta\le A^{-1/p} c_\Mo c_\LO^{3\theta-\frac{3}{p}}  \epsilon^{\frac{1}{a}+\frac{3}{p}-3\theta} \ol{\zeta}_{z,j}^{\theta}
\end{align*}
Here, the exponent of $\epsilon$ is positive by our assumptions. 
Hence, we need
$$
A^{-1/p} c_\Mo c_\LO^{3\theta-\frac{3}{p}} \epsilon^{\frac{1}{a}+\frac{3}{p}-3\theta}<1,
$$
where the exponent of $\epsilon$ is nonnegative. Recall that $\epsilon<\epsilon_0$, we can choose for instance
$$
A=2 c_\Mo^p c_\LO^{3\theta p-3} \epsilon_0^{p(\frac{1}{a}+\frac{3}{p}-3\theta)},
$$
then 
$$
\ol{\zeta}_{z,j}\le \frac{1}{1-A^{-1/p} c_\Mo c_\LO^{3\theta-\frac{3}{p}} \epsilon^{\frac{1}{a}+\frac{3}{p}-3\theta}} \ul{\zeta}_{z,j}\le \frac{1}{1-2^{-1/p}}\ul{\zeta}_{z,j}.
$$
To summarize, 
$$
\ol{\zeta}_{z,j}\le (1-2^{-1/p})^{-1}\zeta(x)
$$
for $x\in \ol{\ell\tetra_j + \ell z}$ for any $(z,j)\in J_0^{\mathrm{M}}$.
The main tetrahedra contribute to the sum in \cref{clldagsch} according to
\begin{multline*}
\sum_{(z,j)\in J_0^{\mathrm{M}}} E\bigl(\iden_{\ell\tetra_j}(\cdot-\ell z)\iden_{\ell'\tetra}\ol{\zeta}_{z,j}\bigr) \ge c_\UEG \sum_{(z,j)\in J_0^{\mathrm{M}}} \int_{\ell\tetra_j+\ell z}\ol{\zeta}_{z,j}^{4/3} \\
\ge c_\UEG \sum_{(z,j)\in J_0^{\mathrm{M}}} \int_{\ell\tetra_j+\ell z}\zeta^{4/3} -
|c_\UEG| \sum_{(z,j)\in J_0^{\mathrm{M}}} \int_{\ell\tetra_j+\ell z} \left(\ol{\zeta}_{z,j}^{4/3}-\zeta^{4/3}\right),
\end{multline*}
where we used \cref{replmax} and the lower bound of \cref{cluegapri}.
On the fluctuation term, we apply an estimate like \cref{locflucest} but with the constant $C_{p,\theta,a}$ multiplied by $(1-2^{-1/p})^{-1}>1$. Using the rough estimate $A\le 1.90 \cdot 4^b$, we find that the estimate \cref{Crough} is larger, hence our final constant in front of the gradient term reads
$2.71 b \left(\frac{10}{\theta}\right)^b$.

Dividing \cref{clldagsch} by the volume and averaging $\zeta$ over all rotations and translations, the boundary terms are bounded in the usual manner
\begin{align*}
\frac{1}{|\ell'\tetra|}\dashint_{\RR^3}&\dd a\int_{\SO(3)}\dd R\sum_{(z,j)\in J\setminus J_0} E\bigl(\iden_{\ell\tetra_j}(\cdot-\ell z)\iden_{\ell'\tetra}\zeta(R(\cdot-a))\bigr)\\
&\ge - \frac{c_\LO}{|\ell'\tetra|} \sum_{(z,j)\in J\setminus J_0} \dashint_{\RR^3}\dd a\int_{\SO(3)}\dd R\int_{\RR^3} \iden_{\ell\tetra_j}(\cdot-\ell z)\zeta(R(\cdot-a))^{4/3}\\
&= - c_\LO \frac{|J\setminus J_0||\ell\tetra|}{|\ell'\tetra|} \dashint_{\RR^3} \zeta^{4/3}\\
&\ge -\frac{C\ell}{\ell'}\dashint_{\RR^3} \zeta^{4/3},
\end{align*}
which disappears in the limit $\ell'\to\infty$.
We arrive at 
\begin{align*}
e_{\NUEG}^\cl(\zeta)\ge c_\UEG\dashint_{\RR^3} \zeta^{4/3} - \epsilon \dashint_{\RR^3} \left(\zeta+\zeta^{4/3}\right) - \frac{2.71 b \left(\frac{10}{\theta}\right)^b}{\epsilon^{b}} \dashint_{\RR^3} |\grad\zeta^\theta|^p.
\end{align*}
by taking the limit $\ell'\to\infty$.
This completes the proof of the theorem.

\section{Proofs for the quantum case}\label{qproofs}
For simplicity of notation, we drop the superscript $\hbar$ in $E^\hbar$ and $e^\hbar$ when no confusion can arise. 

\subsection{Known bounds}\label{qknownbd}
We begin with a Lieb--Thirring kinetic energy inequality with semiclassical leading term and a gradient correction term. The original inequality is due to Nam \cite{nam2018lieb} and 
here we state a special case of an improved version by Seiringer and Solovej \cite{seiringer2023simple}.
\begin{theorem}[Semiclassical Lieb--Thirring inequality with gradient correction]\label{ltsemi}
For any $0<\epsilon\le 3/5$ and any one-particle fermionic density matrix  $0\le\gamma\le\iden$ on $L^2(\RR^3)$ the bound
$$
\Tr(-\lapl\gamma)\ge (1-\epsilon) c_\TF \int_{\RR^3} \rho_\gamma^{5/3} - \frac{10}{27} \frac{1}{\epsilon}\int_{\RR^3} |\grad\sqrt{\rho_\gamma}|^2 
$$
holds true.
\end{theorem}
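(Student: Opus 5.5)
Since the paper cites this inequality as a special case of \cite{seiringer2023simple}, the plan is to follow a Rumin-type phase-space layer-cake argument, refined enough to produce the sharp Thomas--Fermi constant. The gradient term $\int|\grad\sqrt{\rho_\gamma}|^2$ is meant to pay for the fact that $\rho_\gamma$ is not locally constant. By the spectral decomposition of $\gamma$ and density of nice states, I first reduce to finite-rank $\gamma$ with smooth eigenfunctions, and write $\rho=\rho_\gamma$.

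\textbf{Step 1 (layer cake in momentum).} I write
$$\Tr(-\lapl\gamma)=\int_0^\infty \Tr\bigl(\iden(-\lapl>e)\gamma\bigr)\,\dd e,$$
and replace the sharp cutoff $\iden(-\lapl\le e)$ by a smooth multiplier $P_e=\chi(-\lapl/e)$, with $0\le\chi\le 1$. For each $x$ and $e$, the operator inequality $0\le\gamma\le\iden$ gives the pointwise bound
$$\rho_{P_e\gamma P_e}(x)\le P_e^2(x,x)=(2\pi)^{-3}\int\chi(p^2/e)^2\,\dd p\approx \tfrac{e^{3/2}}{6\pi^2}.$$
Combining this with the triangle inequality $\sqrt{\rho(x)}\le\sqrt{\rho_{P_e\gamma P_e}(x)}+\sqrt{\rho_{Q_e\gamma Q_e}(x)}$, where $Q_e=1-P_e$, and integrating in $e$ yields a pointwise lower bound on a kinetic energy density by a function of $\rho(x)$. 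Optimizing the threshold $e\sim\rho(x)^{2/3}$ then gives $\int\rho^{5/3}$ with some constant.

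\textbf{Step 2 (sharpening the constant).} The crude triangle inequality loses the sharp constant, so I will instead localize in space at a length scale $\ell$ and compare with the free Fermi gas. Concretely, I would use the IMS-type identity $-\lapl=\int \dd y\,\phi_y(-\lapl)\phi_y-\int\dd y\,|\grad\phi_y|^2$, with $\phi_y=\phi((\cdot-y)/\ell)$ and $\int\phi^2=1$. On each localized piece I apply the Berezin--Li--Yau/Lieb bathtub bound, which gives exactly $c_\TF\bigl(\int\phi_y^2\rho\bigr)^{5/3}/|\text{cell}|^{2/3}$. To convert these cell averages back to $\int\rho^{5/3}$, I use convexity together with an estimate of the local fluctuation of $\sqrt\rho$ in terms of $\int|\grad\sqrt\rho|^2$ on the cell, via a Poincar\'e-type inequality. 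The IMS error is $\ell^{-2}\int\rho$.

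\textbf{Step 3 (the main obstacle).} The IMS error $\ell^{-2}\int\rho$ is not of the desired form $\int|\grad\sqrt\rho|^2$. To fix this I will let the localization scale depend on the density, $\ell(x)\sim\epsilon^{1/2}\rho(x)^{-1/3}$, in the spirit of Nam. The localization error then becomes $\epsilon\,c_\TF\int\rho^{5/3}$ plus terms controlled by $\epsilon^{-1}\int|\grad\sqrt\rho|^2$, the latter coming from derivatives of $\ell(x)$ and from the fluctuation term. Tracking these errors to reach precisely $(1-\epsilon)$ and $\frac{10}{27}\epsilon^{-1}$ is the delicate part. I expect to balance two contributions: the convexity loss from the Step 2 comparison and the commutator term from the $x$-dependent $\ell(x)$, choosing $\chi$ and $\phi$ optimally. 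The restriction $\epsilon\le 3/5$ should arise where a Young-type inequality $ab\le \epsilon a^2+b^2/(4\epsilon)$ is applied.
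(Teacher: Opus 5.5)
The paper gives no proof of this statement; it imports it from \cite{seiringer2023simple}, a sharpening of \cite{nam2018lieb}. Your outline therefore has to stand on its own, and as written it does not prove the inequality.

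Step~3, which is where all the work lies, is only announced. Nothing in it produces the coefficient $\frac{10}{27}\epsilon^{-1}$ or the range $0<\epsilon\le\frac35$, and the remark about Young's inequality is a guess. The heuristics also contain concrete errors:
\begin{itemize}
\item With $\ell(x)\sim\epsilon^{1/2}\rho(x)^{-1/3}$ the IMS error is $\ell^{-2}\rho\sim\epsilon^{-1}\rho^{5/3}$, which swamps the main term. You need $\ell\sim\epsilon^{-1/2}\rho^{-1/3}$, i.e.\ cells containing about $\epsilon^{-3/2}$ particles.
\item The per-cell bound that keeps the sharp constant is the weighted Berezin--Li--Yau inequality $\Tr(-\lapl\,\phi_y\gamma\phi_y)\ge c_\TF\bigl(\int\phi_y^2\rho\bigr)^{5/3}\bigl(\int\phi_y^2\bigr)^{-2/3}$, which follows from $\phi_y\gamma\phi_y\le\phi_y^2$. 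If you put the support volume in the denominator (Dirichlet Li--Yau applied to $\phi_y\gamma\phi_y/\|\phi_y\|_\infty^2$), you lose the factor $\bigl(\int\phi_y^2/(\|\phi_y\|_\infty^2|\supp\phi_y|)\bigr)^{2/3}<1$.
\end{itemize}
With these corrections the scaling is right. The IMS error becomes $O(\epsilon)\int\rho^{5/3}$. Where $\rho$ is comparable to its local average, the Jensen deficit is of order $\rho^{2/3}\ell^2|\grad\sqrt\rho|^2\sim\epsilon^{-1}|\grad\sqrt\rho|^2$.

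What is missing is the step that turns this heuristic into a proof. A partition $\int\dd y\,\phi_y^2\equiv 1$ with pieces of size $\ell(y)$ and IMS error $\lesssim\ell^{-2}$ exists only if $\ell$ is slowly varying, $|\grad\ell|\ll1$. For $\ell=\epsilon^{-1/2}\rho^{-1/3}$ you have $|\grad\ell|^2=\frac49\epsilon^{-1}\rho^{-5/3}|\grad\sqrt\rho|^2$, which is unbounded for typical densities:
\begin{itemize}
\item It blows up at every power-type zero of $\rho$.
\item For $\rho\sim e^{-|x|}$ one gets $\ell\sim\epsilon^{-1/2}e^{|x|/3}$, hence $|\grad\ell|\sim\ell/3\to\infty$.
\end{itemize}
You would have to replace $\ell$ by a regularized scale (for example a Lipschitz minorant). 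You would then need to prove that two error terms are bounded by $O(\epsilon)\int\rho^{5/3}+C\epsilon^{-1}\int|\grad\sqrt\rho|^2$. The first is the extra localization error on the set where the scale was decreased. The second is the Jensen deficit on cells across which $\rho$ varies by a large factor, where the linearization behind your Poincar\'e step breaks down. This self-consistent choice of cells is the actual difficulty of the theorem, and it is where localization arguments typically lose powers of $\epsilon$. Your outline provides no mechanism for it, so neither the $\epsilon^{-1}$ dependence nor the explicit constants are established.
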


Next, we recall an estimate that gives an \emph{a priori} upper bound on the kinetic energy functional $T(\rho)$. It was first conjectured by March and Young \cite{march1958variational} in 1958. Bokanowski, Grebert and Mauser \cite{bokanowski2003local} proved a version in the periodic setting. 
The proof was finally achieved by Lewin, Lieb and Seiringer \cite{lewin2020local} by a technique interesting in its own right. It was recently generalized by the present authors and E.~I.~Tellgren to the magnetic case, where the so-called paramagnetic current density is also prescribed \cite{csirik2024thermodynamic}.

\begin{theorem}[Lewin--Lieb--Seiringer]\label{llsbound}
Suppose that $\rho\in L^1(\RR^3;\RR_+)$ and $\sqrt{\rho}\in H^1(\RR^3)$. Then there exists a one-particle fermionic density matrix $0\le\gamma\le\iden$ on $L^2(\RR^3)$ such that $\rho_\gamma=\rho$ and for every $0<\epsilon\le 1/15$ the bound
$$
\Tr(-\lapl\gamma)\le (1+\epsilon)c_\TF \int_{\RR^3} \rho^{5/3} + \frac{19}{15}\frac{1}{\epsilon} \int_{\RR^3} |\grad\sqrt{\rho}|^2
$$
holds true.
\end{theorem}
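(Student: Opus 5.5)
The plan is to build $\gamma$ explicitly in two stages. First, a semiclassical trial density matrix $\wt\gamma$ whose density $\wt\rho$ dominates $\rho$ pointwise and whose kinetic energy is close to $c_\TF\int\rho^{5/3}$. Second, a conjugation by a multiplication operator that brings the density down to exactly $\rho$.

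\textbf{Step 1 (corrected density matrix).} Given any $0\le\wt\gamma\le\iden$ with density $\wt\rho\ge\rho$, set $f=\sqrt{\rho/\wt\rho}\le 1$ and $\gamma=f\wt\gamma f$. Then $0\le\gamma\le\iden$, since $f\wt\gamma f\le f^2\le\iden$, and $\rho_\gamma=f^2\wt\rho=\rho$.

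\textbf{Step 2 (kinetic energy of the corrected matrix).} For real $f$ the IMS-type identity
$$
f(-\lapl)f=\tfrac12\bigl(f^2(-\lapl)+(-\lapl)f^2\bigr)+|\grad f|^2
$$
holds. Since $f^2\le 1$ and $\wt\gamma$ will commute approximately with multiplication, I would use it to bound
$$
\Tr(-\lapl\gamma)\le \Tr(-\lapl\wt\gamma)+\int|\grad f|^2\wt\rho,
$$
possibly after a Cauchy--Schwarz step. The remainder satisfies
$$
\int|\grad f|^2\wt\rho=\int\bigl|\grad\sqrt{\rho}-\sqrt{\rho}\,\grad\log\sqrt{\wt\rho}\bigr|^2 ,
$$
which is controlled by $2\int|\grad\sqrt\rho|^2+2\int\rho\,|\grad\sqrt{\wt\rho}|^2/\wt\rho$.

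\textbf{Step 3 (semiclassical $\wt\gamma$).} I would take $\wt\gamma$ to be a coherent-state (Husimi-type) operator,
$$
\wt\gamma=\iint\frac{\dd x\,\dd p}{(2\pi)^3}\,\iden\bigl(|p|\le p_F(x)\bigr)\,|g_\ell(\cdot-x)e^{ip\cdot}\rangle\langle g_\ell(\cdot-x)e^{ip\cdot}| ,
$$
with $\int g_\ell^2=1$. The resolution of the identity gives $0\le\wt\gamma\le\iden$. The density is $\wt\rho=c\,p_F^3*g_\ell^2$, and the kinetic energy equals $c_\TF$ times $\int(p_F^3)^{5/3}$ plus $\int|\grad g_\ell|^2\int\wt\rho$.

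I choose $p_F^3$ proportional to $(1+\epsilon)$ times a local maximal-type majorant of $\rho$, for instance $\sup_{B_\ell(x)}\rho$, or rather a smoothed version of $\rho$ at scale $\ell$ that dominates $\rho$. This makes $\wt\rho\ge\rho$. The variation of $\rho$ at scale $\ell$ is then bounded by $\int|\grad\sqrt\rho|^2$ through the inequality $(\sqrt a-\sqrt b)^2$-type estimates. Finally I optimise $\ell$ against $\epsilon$, which should yield the $\epsilon^{-1}\int|\grad\sqrt\rho|^2$ scaling. Rather than a single coherent state, I would average over translations of a tiling (as in the Graf--Schenker construction) to make the bookkeeping local.

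\textbf{Main obstacle.} The hard step is ensuring $\wt\rho\ge\rho$ pointwise while keeping $\int p_F^5$ within a factor $(1+\epsilon)$ of $\int\rho^{5/3}$, with only $\epsilon^{-1}\int|\grad\sqrt\rho|^2$ as error. A sharply peaked $\rho$ makes naive majorants too large, since $\rho^{5/3}$ is not controlled by $\grad\sqrt\rho$ locally. If this fails, I would instead follow Lewin--Lieb--Seiringer more closely. There one localises $\sqrt\rho$ by a partition of unity at a density-dependent length scale, and places in each cell a Dirichlet (or periodic) free Fermi sea with exact local density, fixed by the conjugation of Step 1. Both the Dirichlet boundary excess energy and the conjugation error are then charged to $\epsilon^{-1}\int|\grad\sqrt\rho|^2$.
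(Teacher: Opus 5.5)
The paper does not prove this statement; it quotes it from Lewin--Lieb--Seiringer, so there is no in-paper argument to compare with. Judged on its own, your proposal has a genuine gap. Everything rests on Step 3, which you leave open, and the construction described there cannot give the bound. There are two obstructions.

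\textbf{Fixed-scale localization error.} The coherent-state kinetic energy contains the term $\|\grad g_\ell\|_2^2\int\wt\rho\ge c\,\ell^{-2}\int\rho$. Under $\rho\mapsto\lambda^3\rho(\lambda\,\cdot)$ both terms on the right-hand side of the claimed bound scale like $\lambda^2$, while $\int\rho$ is invariant. So for any fixed $\ell$, in particular any $\ell=\ell(\epsilon)$, your upper bound exceeds the right-hand side once $\rho$ is dilute enough, and ``optimising $\ell$ against $\epsilon$'' cannot work.

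\textbf{Pointwise domination.} The requirement $\wt\rho\ge\rho$ is not paid for by $\int|\grad\sqrt\rho|^2$. By H\"older, $\wt\rho(y)\le C\bigl(\int p_F^5\bigr)^{3/5}\|g_\ell\|_{L^5}^2=C'\bigl(\int p_F^5\bigr)^{3/5}\ell^{-9/5}$. Hence $\wt\rho\ge h$ at a single point already forces $\Tr(-\lapl\wt\gamma)\ge c\int p_F^5\ge c'h^{5/3}\ell^3$. A peak of $\rho$ of height $h$ and width $w\ll\ell$, however, only contributes $\int\rho^{5/3}\sim h^{5/3}w^3$ and $\int|\grad\sqrt\rho|^2\sim hw$. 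Both tend to $0$ along, e.g., $h=w^{-1/2}$, $w\to0$. The same estimate shows $\wt\rho\in L^\infty$, whereas $\sqrt\rho\in H^1(\RR^3)$ need not be bounded. Local-supremum majorants fail for the same reason: $H^1$ gives no Morrey-type control in three dimensions. That is exactly why the paper's LDA theorems carry the extra term $\int|\grad\zeta^\theta|^p$ with $p>3$. Putting a dilute component and such a peak far apart shows that no single length scale works, even one chosen depending on $\rho$.

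\textbf{The missing idea.} What is needed is a genuinely multiscale construction. The localization length must be large where $\rho$ is dilute and flat and small near concentrated structures. These scales must be glued while keeping $0\le\gamma\le\iden$ and $\rho_\gamma=\rho$ exactly, with all gluing and variation errors charged to $\epsilon^{-1}\int|\grad\sqrt\rho|^2$.

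Your fallback names this but supplies no mechanism, and as described it reintroduces the obstruction cell by cell. With your Step 1 device ($f\le1$), conjugating a cell Fermi sea of density $\rho_{\mathrm{cell}}$ keeps $\gamma\le\iden$ only if $\rho\le\rho_{\mathrm{cell}}$ on the cell. The main term is then $c_\TF\rho_{\mathrm{cell}}^{2/3}\int_{\mathrm{cell}}\rho$. This is within a factor $1+\epsilon$ of $c_\TF\int_{\mathrm{cell}}\rho^{5/3}$ only if $\sup_{\mathrm{cell}}\rho$ is close to $\rho$ on most of the cell's mass. The quantity $\int_{\mathrm{cell}}|\grad\sqrt\rho|^2$ cannot certify this. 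The cells where it fails, thin peaks on a background, are exactly those where the available gradient energy is arbitrarily small compared with $h^{2/3}\int_{\mathrm{cell}}\rho$.

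\textbf{Smaller points.}
\begin{itemize}
\item Your Step 2 inequality drops the cross term $\int f\,\grad f\cdot\grad\wt\rho$, which has no sign. You need the Cauchy--Schwarz version $(1+\epsilon)\int f^2\tau_{\wt\gamma}+(1+\epsilon^{-1})\int|\grad f|^2\wt\rho$.
\item Your $\gamma$ depends on $\epsilon$, whereas the statement asks for one $\gamma$ valid for all $0<\epsilon\le 1/15$. This is fixed by taking a minimizer of $T(\rho)$, which exists.
\item Nothing in the sketch produces the explicit constants $1/15$ and $19/15$.
\end{itemize}
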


The result immediately implies an \emph{a priori} bound on the kinetic energy functional
$T(\rho)$.
By representing the one-particle density matrix $\gamma$ by a quasi-free state $\vGamma$ (see \cite{bach1994generalized}), and using the fact that quasi-free states have nonpositive indirect Coulomb energy (just like Slater determinants), the following very useful \emph{a priori} estimate on the quantum indirect energy may be obtained.
\begin{corollary}\label{llsebound}
Suppose that $\rho\in L^1(\RR^3;\RR_+)$ and $\sqrt{\rho}\in H^1(\RR^3)$. Then for every $0<\epsilon\le 1/15$ the bound
$$
E(\rho)\le (1+\epsilon)c_\TF\hbar^2 \int_{\RR^3} \rho^{5/3} + \frac{19}{15}\frac{\hbar^2}{\epsilon} \int_{\RR^3} |\grad\sqrt{\rho}|^2
$$
holds true. 
\end{corollary}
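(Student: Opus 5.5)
The plan is to plug the one-particle density matrix supplied by \cref{llsbound} into the definition of $F_\LL^\hbar(\rho)$ as a quasi-free trial state. Let $\gamma$ be the density matrix of \cref{llsbound}: it satisfies $0\le\gamma\le\iden$, $\rho_\gamma=\rho$ and the stated kinetic bound. By \cite{bach1994generalized} there is a quasi-free (generalized Hartree--Fock) state $\vGamma$ on the fermionic Fock space with one-particle density matrix $\gamma_\vGamma=\gamma$ and vanishing pairing matrix. It commutes with the number operator, since it is a gauge-invariant quasi-free state, so $\vGamma\in\dm$. Moreover $\TC(\vGamma)=\Tr(-\lapl\gamma)<\infty$, because $\Tr(-\lapl\gamma)$ is controlled by $\int\rho^{5/3}$ and $\int|\grad\sqrt\rho|^2$, both finite by Sobolev's inequality when $\sqrt\rho\in H^1$. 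Hence $\vGamma$ is an admissible trial state and
$$
E(\rho)\le \hbar^2\Tr(-\lapl\gamma)+\Cc(\vGamma)-D(\rho).
$$

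Next we evaluate the interaction energy of the quasi-free state. By Wick's theorem its two-particle density matrix is $\Gamma^{(2)}=(1-\mathrm{Ex})\gamma\otimes\gamma$, which gives
$$
\Cc(\vGamma)=D(\rho)-\frac12\iint_{\RR^6}\frac{|\gamma(x,y)|^2}{|x-y|}\,\dd x\dd y\le D(\rho).
$$
This is the nonpositivity of the indirect energy for quasi-free states, exactly as for Slater determinants. We therefore get $E(\rho)\le\hbar^2\Tr(-\lapl\gamma)$, and inserting the bound of \cref{llsbound} for every $0<\epsilon\le1/15$ gives the claim.

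The main technical point is the existence of the quasi-free state with prescribed $\gamma$ when $\gamma$ has infinite rank or eigenvalues equal to $1$. It is also necessary to justify that the Wick formula and the kinetic identity $\TC(\vGamma)=\Tr(-\lapl\gamma)$ hold, which needs the density matrix to be trace class. We rely on \cite{bach1994generalized}, which covers all $0\le\gamma\le\iden$ with $\Tr\gamma=\int\rho<\infty$. One can alternatively diagonalize $\gamma=\sum_k\lambda_k|\phi_k\rangle\langle\phi_k|$ and take the product over $k$ of the states $(1-\lambda_k)|0\rangle\langle0|+\lambda_k a_k^\dagger|0\rangle\langle0|a_k$, truncated to finitely many modes. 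The energy identities then pass to the limit by monotone convergence.
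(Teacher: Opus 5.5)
Your proposal is correct and is essentially the paper's argument: represent the Lewin--Lieb--Seiringer density matrix $\gamma$ of \cref{llsbound} by a gauge-invariant quasi-free state via \cite{bach1994generalized}, note that its indirect Coulomb energy is minus the nonnegative exchange term, and then insert the kinetic bound. One minor point about your alternative construction: the truncated product states do not have density $\rho$, so they are not admissible trial states themselves. They should only be used to compute the energies of the full infinite product state, which is well defined since $\sum_k\lambda_k=\int\rho<\infty$.
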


We proceed by recalling a spatial decoupling lower bound on the indirect quantum energy.  In order to state this result, we need to use smeared indicators as sharp localization would blow up the kinetic energy.

Let the function $\eta\in C_c^\infty(\RR^3,\RR_+)$ be radial such that $\supp\eta\subset B_1$ and $\int_{\RR^3}\eta=1$. Define
$\eta_\delta(x)=\frac{1}{(b\delta)^3}\eta\left(\frac{x}{b\delta}\right)$,
for some $0<b<1$.
Then $\supp\eta_\delta\subset B_{b\delta}$ and $\int_{\RR^3} \eta_\delta=1$.

Let $\Omega\subset\RR^3$ be an open set. 
It is straightforward to see that
\begin{equation}\label{smearcases}
(\iden_\Omega * \eta_\delta)(x)=\int_\Omega \eta_\delta(\cdot-x)=\begin{cases}
1 & \supp\eta_\delta(\cdot-x)\subset \Omega \, ,\\
\in[0,1] & \supp\eta_\delta(\cdot-x)\cap \partial\Omega\neq \emptyset\, ,\\
0 & \supp\eta_\delta(\cdot-x)\subset \RR^3\setminus \Omega \,.
\end{cases}
\end{equation}
In particular, since $\supp\eta_\delta\subset B_{b\delta}$
\begin{equation}\label{smeardindunity}
\iden_{\Omega} * \eta_\delta\equiv 1 \;\text{in}\; \Omega\setminus (\partial\Omega+B_{b\delta} ).
\end{equation}
More concretely, let $\Omega=\ell S$, where $\ell>0$ and $S\subset\RR^3$ is a bounded star-shaped set with center 0.
Then \cref{smeardindunity} becomes the convenient\footnote{This is why we need the factor $0<b<1$ in the definition of $\eta_\delta$.} fact that
$$
\iden_{\ell S} * \eta_\delta\equiv 1 \;\text{in}\; (\ell-\delta)S.
$$
We refer to this fact as $\iden_{\ell S} * \eta_\delta\equiv 1$ \emph{well inside $\ell S$}.

Clearly, the integral of the smeared indicator of $\Omega$ is
$\int_{\RR^3} \iden_\Omega * \eta_\delta = |\Omega|$.
Next, we state an elementary, but very important kinetic energy estimate.

\begin{lemma}[Bound on kinetic energy of a smeared set]\label{indgradlemma}
Suppose that $\Omega\subset\RR^3$ is an open set with a boundary having finite 2-dimensional Lebesgue measure: $|\partial\Omega|<\infty$. 
Then the following bound holds true
$$
\int_{\RR^3} \left|\grad \sqrt{\iden_\Omega*\eta_\delta}\right|^2\le \frac{C_\eta|\partial\Omega|}{\delta},
$$
where the constant $C_\eta>0$ depends only on the regularization function $\eta$.
\end{lemma}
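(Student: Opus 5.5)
We write $u=\iden_\Omega*\eta_\delta$, so that $0\le u\le 1$ and $u$ is smooth. The plan is to control $|\grad\sqrt{u}|^2$ pointwise and then integrate over the boundary layer where $u$ is not locally constant.

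\textbf{Localization.} By \cref{smearcases}, $u\equiv 1$ or $u\equiv 0$ outside the layer $\partial\Omega+B_{b\delta}$, so $\grad\sqrt{u}$ vanishes there. Hence it suffices to bound $\int_{\partial\Omega+B_{b\delta}}|\grad\sqrt{u}|^2$.

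\textbf{Pointwise bound avoiding the square-root singularity.} The difficulty is that $\sqrt{u}$ is not Lipschitz near $u=0$. Naively, $|\grad\sqrt u|^2=|\grad u|^2/(4u)$, and $|\grad u|\le\|\grad\eta_\delta\|_{L^1}\sim C/\delta$ gives no control where $u$ is small. Instead we write $u(x)=\int_\Omega\eta_\delta(y-x)\,\dd y$ and $\grad u(x)=-\int_\Omega(\grad\eta_\delta)(y-x)\,\dd y$, and use Cauchy--Schwarz with weight $\eta_\delta$:
$$
|\grad u(x)|^2\le \Big(\int_\Omega \eta_\delta(y-x)\,\dd y\Big)\Big(\int_\Omega \frac{|\grad\eta_\delta(y-x)|^2}{\eta_\delta(y-x)}\,\dd y\Big).
$$
This yields
$$
|\grad\sqrt u(x)|^2\le \frac14\int_{\RR^3}\frac{|\grad\eta_\delta|^2}{\eta_\delta}=\frac{1}{4(b\delta)^2}\int_{\RR^3}\frac{|\grad\eta|^2}{\eta},
$$
where the last equality follows from the scaling of $\eta_\delta$. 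This requires the Fisher information $\int|\grad\eta|^2/\eta=4\int|\grad\sqrt\eta|^2$ to be finite. That holds if $\sqrt\eta$ is smooth, and we may assume this without loss, since such radial mollifiers exist and the constant is allowed to depend on $\eta$. If $\eta$ is not assumed to have smooth square root, this step is the main obstacle. In that case one would instead bound $\sqrt{u}$ from below near its support, or work with $\eta=\phi^2$.

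\textbf{Integration over the layer.} We obtain
$$
\int|\grad\sqrt u|^2\le \frac{C'_\eta}{\delta^2}\,\bigl|\partial\Omega+B_{b\delta}\bigr|.
$$
The remaining geometric step is the estimate $|\partial\Omega+B_{r}|\le C r|\partial\Omega|$ for a tubular neighbourhood, which gives the claimed $C_\eta|\partial\Omega|/\delta$.

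This inequality is not true for arbitrary sets at all scales; it holds, for instance, for Lipschitz or piecewise-flat boundaries, which covers the tetrahedra and star-shaped sets used in the paper. I would justify it via the Minkowski content, $|\partial\Omega+B_r|\approx 2r|\partial\Omega|$, or more robustly by covering $\partial\Omega$ with balls of radius $r$: the number needed is $\lesssim|\partial\Omega|/r^2$ for rectifiable boundaries, and the enlarged balls of radius $2r$ cover the layer, giving volume $\lesssim r|\partial\Omega|$. I expect this geometric covering to be the point needing the implicit regularity assumption; the analytic part is routine.
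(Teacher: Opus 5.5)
The paper states this lemma without proof, so your argument has to stand on its own. The analytic half is sound. Approximating by $\sqrt{u+t}$, $t\downarrow 0$, shows that $\sqrt u$ is Lipschitz, so the weak gradient causes no trouble. The Fisher information of $\eta$ is not an obstacle. Replacing $\eta$ ``without loss'' would not be legitimate anyway, since $\eta$ is given, but it is also unnecessary. Every $0\le f\in C^2$ with bounded Hessian satisfies $|\grad f|^2\le 2\|D^2 f\|_\infty f$, because the quadratic $t\mapsto f(x)+t\,e\cdot\grad f(x)+\frac12\|D^2f\|_\infty t^2$ is nonnegative. Hence $\int|\grad\eta|^2/\eta<\infty$ automatically. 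Applied to $u$ itself, with $\|D^2u\|_\infty\le\|D^2\eta_\delta\|_{L^1}=C_\eta\delta^{-2}$, the same inequality gives $|\grad\sqrt u|\le C_\eta/\delta$ directly.

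\textbf{The gap.} The genuine gap is the geometric step $|\partial\Omega+B_{r}|\le Cr|\partial\Omega|$ at the fixed scale $r=b\delta$. It is false with a universal $C$, even for smooth convex sets.
\begin{itemize}
\item For $\Omega=B_\rho$ with $\rho\ll\delta$, you get $|\partial\Omega+B_{b\delta}|\gtrsim\delta^3\gg\delta\rho^2$.
\item For $\Omega=B_R$ minus finitely many closed segments, $\mathcal{H}^2(\partial\Omega)=4\pi R^2$, while $|\partial\Omega+B_{b\delta}|$ can be comparable to $R^3\gg\delta R^2$.
\end{itemize}
Your covering argument does not rescue this. The number of $r$-balls needed to cover $\partial\Omega$ is governed by Minkowski content, not by $\mathcal{H}^2(\partial\Omega)$; for rectifiable sets the two agree only as $r\to0$. ``Lipschitz'' helps only if $\delta$ is small compared with the Lipschitz scale of $\Omega$. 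In both examples the lemma itself still holds. In the second, $u$ does not see the null set of segments. In the first, your own Cauchy--Schwarz, integrated in $x$, gives $\int|\grad\sqrt u|^2\le|B_\rho|\int|\grad\sqrt{\eta_\delta}|^2\lesssim\rho^3/\delta^2\le C\rho^2/\delta$. So ``pointwise bound times volume of the layer'' is intrinsically too lossy.

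What you have actually proved is the lemma under the extra hypothesis $|\partial\Omega+B_{b\delta}|\le C\delta|\partial\Omega|$. That suffices for every use in the paper: $\ell\tetra$ with $\delta\le\ell/2$, the skeleton function cell by cell, and Fisher-regular $\Omega_N$ with $\delta_N\ll|\Omega_N|^{1/3}$. It does not give the statement as written.

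\textbf{How to close it.} Keep the Cauchy--Schwarz in integrated form: for any measurable $A$,
$$\int|\grad\sqrt{\iden_A*\eta_\delta}|^2\le|A|\int|\grad\sqrt{\eta_\delta}|^2=C_\eta|A|\delta^{-2}.$$
Then localize. Tile $\RR^3$ by cubes $Q$ of side $\delta$, with concentric cubes $\wt Q$ of side $3\delta$. On $Q$ one has $u=\iden_{\Omega\cap\wt Q}*\eta_\delta$ and $1-u=\iden_{\wt Q\setminus\Omega}*\eta_\delta$. Put $m_Q=\min(|\Omega\cap\wt Q|,|\wt Q\setminus\Omega|)$ and $c_\eta=b^3/(2\|\eta\|_\infty)$. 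Then $\int_Q|\grad\sqrt u|^2\le C_\eta m_Q\delta^{-2}$, in three cases:
\begin{itemize}
\item If $m_Q=|\Omega\cap\wt Q|$, apply the integrated bound.
\item If $m_Q=|\wt Q\setminus\Omega|<c_\eta\delta^3$, then $u\ge1/2$ on $Q$, since $1-u(x)\le\|\eta\|_\infty(b\delta)^{-3}|B_{b\delta}(x)\setminus\Omega|$. There $u\ge 1-u$, so $|\grad\sqrt u|\le|\grad\sqrt{1-u}|$, and you apply the integrated bound to $1-u$.
\item Otherwise, use the pointwise bound together with $|Q|=\delta^3\le m_Q/c_\eta$.
\end{itemize}
The relative isoperimetric inequality in $\wt Q$ gives $m_Q\le C\min\{P(\Omega;\wt Q)^{3/2},\delta^3\}\le C\delta P(\Omega;\wt Q)$. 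Summing over $Q$, using that the $\wt Q$ overlap at most $27$ times, yields $\int|\grad\sqrt u|^2\le C_\eta P(\Omega)/\delta$. Finally, $P(\Omega)\le\mathcal{H}^2(\partial\Omega)$ for open sets with $\mathcal{H}^2(\partial\Omega)<\infty$ (Federer).
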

In particular, $\sqrt{\iden_\Omega*\eta_\delta}\in H^1(\RR^3)$ whenever $|\partial\Omega|<\infty$. Importantly, this means that fermionic many-body localization (see \cite[Appendix A]{hainzl2009thermodynamic}) with respect to the function $\sqrt{\iden_\Omega*\eta_\delta}$ is meaningful.

We now return to the discussion of the lower bound on the indirect energy. For any $\vGamma\in\dm$ let
$$
\EC^\hbar(\vGamma)=\hbar^2\TC(\vGamma)+\Cc(\vGamma)-D(\rho_\vGamma)
$$
denote the full quantum indirect energy functional. 
Then it can be shown \cite{lewin2020local} using the IMS localization formula for the kinetic energy, and the smeared Graf--Schenker inequality for the indirect Coulomb energy that for any $\delta>0$ such that $0<\delta/\ell<1/C$,
\begin{multline}\label{indlowbd}
\EC^\hbar(\vGamma)\ge \left(1-\frac{C\delta}{\ell}\right) \int_{\SO(3)}\dd R  \dashint_{C_{\ell}}\dd \tau
\sum_{z\in\ZZ^3} \sum_{j=1}^{24} \EC^\hbar\bigl({\vGamma\bigr|}_{\sqrt{\iden_{\ell\tetra_j} * \eta_\delta(R\cdot - \ell z - \tau)}}\bigr)\\
 -\frac{C}{\ell} \int_{\RR^3} ((1+\hbar^2\delta^{-1})\rho_\vGamma + \delta^3\rho_\vGamma^2 )
\end{multline}
for some universal constant $C>0$. This immediately implies the following decoupling lower bound
on the indirect energy $E(\rho)$. 

\begin{theorem}[Decoupling lower bound]\label{lboundthm}
There is a universal constant $C>0$ such that for any $\ell>0$ and $\delta>0$ such that $0<\delta/\ell<1/C$ and $\rho\in L^1(\RR^3,\RR_+)$ with $\grad\sqrt{\rho}\in L^2(\RR^3)$ the bound
\begin{multline*}
E(\rho)\ge \left(1-\frac{C\delta}{\ell}\right) \int_{\SO(3)}\dd R  \dashint_{C_{\ell}}\dd \tau
\sum_{z\in\ZZ^3} \sum_{j=1}^{24} E\bigl(\iden_{\ell\tetra_j} * \eta_\delta(R\cdot - \ell z - \tau)\rho\bigr)\\
 -\frac{C}{\ell} \int_{\RR^3} ((1+\hbar^2\delta^{-1})\rho + \delta^3\rho^2 )
\end{multline*}
holds true. In particular, 
\begin{equation}\label{lboundmv}
E(\rho)\ge \left(1-\frac{C\delta}{\ell}\right)
\sum_{z\in\ZZ^3} \sum_{j=1}^{24} E\bigl(\iden_{\ell\tetra_j} * \eta_\delta(R\cdot - \ell z - \tau)\rho\bigr)
 -\frac{C}{\ell} \int_{\RR^3} ((1+\hbar^2\delta^{-1})\rho + \delta^3\rho^2 )
\end{equation}
for some isometry $(R,\tau)\in\SO(3)\times C_\ell$. 
\end{theorem}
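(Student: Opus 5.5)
The plan is to deduce \cref{lboundthm} from the state-level inequality \cref{indlowbd} by choosing an optimal state. Fix $\rho$ with $\grad\sqrt\rho\in L^2$ and $\rho\in L^1$. We may assume $\int\rho^2<\infty$, since otherwise the right-hand side is $-\infty$ and there is nothing to prove. Because the infimum in $F_\LL^\hbar(\rho)$ is attained \cite{lewin2019universal}, we pick $\vGamma\in\dm$ with $\rho_\vGamma=\rho$ and $\EC^\hbar(\vGamma)=E(\rho)$. If one prefers not to use attainment, we take a minimizing sequence $\vGamma_k$ with $\EC^\hbar(\vGamma_k)\le E(\rho)+1/k$ and let $k\to\infty$ at the end; the error terms depend only on $\rho$, so this works equally well.

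Next we apply \cref{indlowbd} to $\vGamma$. The key observation concerns each localized state $\vGamma|_{\chi}$ with $\chi=\sqrt{\iden_{\ell\tetra_j}*\eta_\delta(R\cdot-\ell z-\tau)}$. Fermionic localization \cite[Appendix A]{hainzl2009thermodynamic} produces a state in $\dm$ whose one-particle density matrix is $\chi\gamma_\vGamma\chi$. Its density is therefore $\chi^2\rho=(\iden_{\ell\tetra_j}*\eta_\delta)(R\cdot-\ell z-\tau)\rho$. Since $\chi$ is smooth and bounded with bounded gradient, this state has finite kinetic energy. By definition of $F_\LL^\hbar$ as an infimum, we get $\EC^\hbar(\vGamma|_\chi)\ge F_\LL^\hbar(\chi^2\rho)-D(\chi^2\rho)=E(\chi^2\rho)$.

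We also need the prefactor $1-C\delta/\ell$ to be positive so that the inequality is preserved; this is exactly what the hypothesis $\delta/\ell<1/C$ guarantees. Substituting into \cref{indlowbd} gives the first claimed bound, with $\rho_\vGamma=\rho$ in the error term. The main technical point is checking measurability and integrability of the integrand in $(R,\tau)$, so that replacing it by a smaller function keeps the integral meaningful. For this we use the Lieb--Oxford inequality together with the kinetic lower bound: the Lieb--Oxford inequality gives $E(\chi^2\rho)\ge -c_\LO\int\chi^2\rho^{4/3}$, and since $\sum_{z,j}\chi^2\equiv1$ the sum is bounded below by $-c_\LO\int\rho^{4/3}$, which is finite by Sobolev. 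Hence the integral is well defined in $(-\infty,+\infty]$.

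Finally, \cref{lboundmv} follows by averaging. The first bound states that a mean over the probability space $\SO(3)\times C_\ell$, with normalized measure $\dd R\,\dd\tau/|C_\ell|$, of the function $g(R,\tau)=\sum_{z,j}E(\ldots)$ is at most $(1-C\delta/\ell)^{-1}$ times $E(\rho)$ plus the error term. A function cannot be strictly larger than its mean everywhere, so there is some $(R,\tau)$ with $g(R,\tau)$ not exceeding the average. Such a point exists on a set of positive measure, since $g>\text{mean}$ a.e.\ would contradict integration. This gives \cref{lboundmv}.
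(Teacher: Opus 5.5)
Your proposal is correct and takes the same route as the paper, which presents the theorem as an immediate consequence of \cref{indlowbd}: apply it to a (near-)optimizer of $F_\LL^\hbar(\rho)$, bound each localized term from below by $E(\chi^2\rho)$ because the localized state has density $\chi^2\rho$, and obtain the ``in particular'' statement from the averaging argument you give. The one point you flag but do not settle is measurability of $(R,\tau)\mapsto E(\chi^2\rho)$, which the paper also leaves tacit; it follows from the weak lower semicontinuity of $F_\LL^\hbar$ (\cref{Flsc}), continuity of $D$, and your Lieb--Oxford lower bound, which together make the summed integrand lower semicontinuous in $(R,\tau)$.
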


By absorbing the isometry $(R,\tau)$ into $\rho$ in \cref{lboundmv}, we obtain plainly that the total energy is decoupled into the sum of the energies of ``lumps'' $\iden_{\ell\tetra_j} * \eta_\delta(\cdot - \ell z)\rho$ up to an error. The neighboring ``lumps'' in general will overlap. 

\subsection{\emph{A priori} bounds on the energy per volume} 
First, we consider the quantity $f_{\Omega,\eta_\delta,\zeta}(a)$ defined in \cref{fdef}.

\begin{proposition}\label{basicbdprop}
The following a priori bounds hold true.
\begin{itemize}
\item[(i)] For every $0<\epsilon\le 1/15$ and every inhomogeneity $\zeta$, there holds
\begin{multline*}
f_{\Omega,\eta_\delta,\zeta}(a)\le (1+\epsilon) c_{\TF}\hbar^2  \frac{1}{|\Omega|} \int_{\RR^3} (\iden_{\Omega}*\eta_\delta)\zeta^{5/3}(\cdot-a) \\
+ \frac{38}{15}\frac{\hbar^2}{\epsilon} \frac{1}{|\Omega|} \int_{\RR^3} (\iden_{\Omega}*\eta_\delta)|\grad \sqrt{\zeta}|^2(\cdot-a)+ \frac{38}{15}\frac{\hbar^2}{\epsilon} \frac{1}{|\Omega|} \int_{\RR^3} |\grad \sqrt{\iden_{\Omega}*\eta_\delta}|^2\zeta(\cdot-a) .
\end{multline*}
\item[(ii)] For every $0<\epsilon\le 3/5$ and every inhomogeneity $\zeta$, there holds 
\begin{multline*}
f_{\Omega,\eta_\delta,\zeta}(a)\ge (1-\epsilon) c_{\TF}\hbar^2  \frac{1}{|\Omega|} \int_{\RR^3} (\iden_{\Omega}*\eta_\delta)\zeta^{5/3}(\cdot-a) -c_\LO \frac{1}{|\Omega|}\int_{\RR^3} (\iden_{\Omega}*\eta_\delta)\zeta^{4/3}(\cdot-a) \\
- \frac{20}{27}\frac{\hbar^2}{\epsilon} \frac{1}{|\Omega|} \int_{\RR^3} (\iden_{\Omega}*\eta_\delta)|\grad \sqrt{\zeta}|^2(\cdot-a) + \frac{20}{27}\frac{\hbar^2}{\epsilon} \frac{1}{|\Omega|} \int_{\RR^3} |\grad \sqrt{\iden_{\Omega}*\eta_\delta}|^2\zeta(\cdot-a) .
\end{multline*}
\end{itemize}
\end{proposition}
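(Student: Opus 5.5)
The plan is to derive both bounds from the pointwise estimates already available, applied to the single density
$$
\rho_a:=(\iden_\Omega*\eta_\delta)\,\zeta(\cdot-a),\qquad \chi:=\iden_\Omega*\eta_\delta,\quad 0\le\chi\le 1 .
$$
Since $f_{\Omega,\eta_\delta,\zeta}(a)=E(\rho_a)/|\Omega|$, it suffices to bound $E(\rho_a)$ above and below and then divide by $|\Omega|$. The admissibility $\sqrt{\rho_a}\in H^1(\RR^3)$ follows from $\sqrt\zeta\in H^1_\per(\LC)$ and \cref{indgradlemma}.

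The common ingredient is the product rule for the square root,
$$
\grad\sqrt{\rho_a}=\sqrt{\chi}\,\grad\sqrt{\zeta}(\cdot-a)+\sqrt{\zeta(\cdot-a)}\,\grad\sqrt{\chi}.
$$
Applying $(x+y)^2\le 2x^2+2y^2$ pointwise gives
$$
\int|\grad\sqrt{\rho_a}|^2\le 2\int\chi|\grad\sqrt\zeta|^2(\cdot-a)+2\int|\grad\sqrt\chi|^2\zeta(\cdot-a).
$$
The factor $2$ is what turns $\tfrac{19}{15}$ into $\tfrac{38}{15}$ in (i), and $\tfrac{10}{27}$ into $\tfrac{20}{27}$ in (ii).

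\textbf{Part (i).} I insert $\rho_a$ into \cref{llsebound}. For the semiclassical term I use $\chi\le1$, which gives $\rho_a^{5/3}=\chi^{5/3}\zeta^{5/3}(\cdot-a)\le\chi\,\zeta^{5/3}(\cdot-a)$. For the gradient term I use the product-rule bound above. Dividing by $|\Omega|$ yields exactly (i).

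\textbf{Part (ii).} I split $E(\rho_a)=F_\LL(\rho_a)-D(\rho_a)$ at a minimizing state $\vGamma$ of the Levy--Lieb problem, which exists by \cite{lewin2019universal}. For the kinetic part I apply \cref{ltsemi} to $\gamma_\vGamma$, which satisfies $0\le\gamma_\vGamma\le\iden$ and $\rho_{\gamma_\vGamma}=\rho_a$. For the Coulomb part I use $\Cc(\vGamma)-D(\rho_a)\ge-c_\LO\int\rho_a^{4/3}$, the Lieb--Oxford inequality in the grand-canonical form recalled after \cref{lobound}. Since $\chi^{4/3}\le\chi$, the Lieb--Oxford term is bounded below by $-c_\LO\int\chi\,\zeta^{4/3}(\cdot-a)$. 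The gradient correction is handled by the product-rule estimate, which produces the two $\tfrac{20}{27}\epsilon^{-1}$ terms.

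\textbf{Main obstacle.} The delicate step is matching the exact form of (ii). First, \cref{ltsemi} produces $\int\chi^{5/3}\zeta^{5/3}$, whereas (ii) has the weight $\chi$. Here I would use that $\chi=\chi^{5/3}$ wherever $\chi\in\{0,1\}$, by \cref{smearcases}. The discrepancy $\int(\chi-\chi^{5/3})\zeta^{5/3}$ is therefore supported in the boundary layer $\partial\Omega+B_\delta$. I would try to control this layer term by the boundary term $\epsilon^{-1}\int|\grad\sqrt\chi|^2\zeta(\cdot-a)$, since both live on the same layer and both are of size $O(|\partial\Omega|)$. Second, the last term of (ii) as printed carries a $+$ sign, while the Lieb--Thirring gradient correction naturally enters with a $-$ sign. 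I would check the sign carefully there and make sure the final bound is recorded exactly in the stated form.
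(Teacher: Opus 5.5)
The parts of your proposal that carry the argument are exactly the paper's proof: part (i), and for (ii) the use of \cref{ltsemi} for $\gamma_{\vGamma}$ at a Levy--Lieb minimizer, the Lieb--Oxford bound with $\chi^{4/3}\le\chi$, and the product-rule estimate (with $\chi=\iden_\Omega*\eta_\delta$ as in your notation). The gap is the repair in your last paragraph, which cannot work.

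You cannot dominate $\int(\chi-\chi^{5/3})\zeta^{5/3}(\cdot-a)$ by $\epsilon^{-1}\int|\grad\sqrt{\chi}|^2\zeta(\cdot-a)$. The two sides have different homogeneity in $\zeta$: for $\zeta\equiv\rho_0$ the left side grows like $\rho_0^{5/3}$ and the right side like $\rho_0$.

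In fact no repair exists, because (ii) with the weight $\chi$ in the Thomas--Fermi term is false. Take $\zeta\equiv\rho_0$. By \cref{llsebound}, for $0<\epsilon'\le 1/15$,
$E(\rho_0\chi)\le(1+\epsilon')c_\TF\hbar^2\rho_0^{5/3}\int\chi^{5/3}+\frac{19}{15}\frac{\hbar^2}{\epsilon'}\rho_0\int|\grad\sqrt{\chi}|^2$.
On the other hand, (ii), even with the sign of its last term corrected, would force
$E(\rho_0\chi)\ge(1-\epsilon)c_\TF\hbar^2\rho_0^{5/3}\int\chi-c_\LO\rho_0^{4/3}\int\chi-\frac{20}{27}\frac{\hbar^2}{\epsilon}\rho_0\int|\grad\sqrt{\chi}|^2$.
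Since $\chi$ is continuous, compactly supported and not identically zero, the open set $\{0<\chi<1\}$ is nonempty, so $\int\chi^{5/3}<\int\chi$. Choose $\epsilon=\epsilon'$ small enough, depending on $\Omega$ and $\delta$, and let $\rho_0\to\infty$; this gives a contradiction.

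The printed $+$ sign is not a bookkeeping issue either. For constant $\zeta$, the right-hand side of (ii) with that sign tends to $+\infty$ as $\epsilon\to0$, while $f_{\Omega,\eta_\delta,\zeta}(a)$ does not depend on $\epsilon$.

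So the statement cannot be proved as printed, and the paper's own proof does not prove it either. Its final display has $(1-\epsilon)c_\TF\hbar^2|\Omega|^{-1}\int(\iden_\Omega*\eta_\delta)^{5/3}\zeta^{5/3}(\cdot-a)$ as the leading term. Its last term is $-\frac{20}{27}\frac{\hbar^2}{\epsilon}|\Omega|^{-1}\int|\grad\sqrt{\iden_\Omega*\eta_\delta}|^2\zeta(\cdot-a)$. That corrected inequality is exactly what your argument yields once you drop the repair, and it is what you should record.

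The difference between the weights $\chi^{5/3}$ and $\chi$ becomes negligible only after averaging in $a$. There, $0\le\chi-\chi^{5/3}\le\iden_{\{0<\chi<1\}}$ and $\{0<\chi<1\}\subset\{x:\dist(x,\partial\Omega)\le\delta\}$. So it costs at most $|\Omega|^{-1}|\{x:\dist(x,\partial\Omega)\le\delta\}|\dashint_{\RR^3}\zeta^{5/3}$. This is a boundary error that disappears in the thermodynamic limit, which is harmless for the limiting bounds derived from this proposition.
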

\begin{proof}
Using \cref{llsebound} we have for $0<\epsilon\le 1/15$ that
\begin{align*}
&f_{\Omega,\eta_\delta,\zeta}(a)=\frac{1}{|\Omega|} E((\iden_{\Omega}*\eta_\delta)\zeta(\cdot-a))\\
&\le (1+\epsilon) c_{\TF}\hbar^2 \frac{1}{|\Omega|} \int_{\RR^3} (\iden_{\Omega}*\eta_\delta)^{5/3}\zeta^{5/3}(\cdot-a) + \frac{1}{|\Omega|} \frac{19}{15} \frac{\hbar^2}{\epsilon} \int_{\RR^3} |\grad (\sqrt{\iden_{\Omega}*\eta_\delta}\sqrt{\zeta(\cdot-a)})|^2\\
&\le (1+\epsilon) c_{\TF}\hbar^2  \frac{1}{|\Omega|} \int_{\RR^3} (\iden_{\Omega}*\eta_\delta)\zeta^{5/3}(\cdot-a) + \frac{38}{15}\frac{\hbar^2}{\epsilon} \frac{1}{|\Omega|} \int_{\RR^3} |\grad \sqrt{\iden_{\Omega}*\eta_\delta}|^2\zeta(\cdot-a) \\
&+ \frac{38}{15}\frac{\hbar^2}{\epsilon} \frac{1}{|\Omega|} \int_{\RR^3} (\iden_{\Omega}*\eta_\delta)|\grad \sqrt{\zeta}|^2(\cdot-a),
\end{align*}
where in the second step we used $0\le \iden_{\Omega}*\eta_\delta\le 1$ and elementary inequality
\begin{equation}\label{gradprodbd}
|\grad (\sqrt{\iden_{\Omega}*\eta_\delta}\sqrt{\zeta})|^2\le 2|\grad\sqrt{\iden_{\Omega}*\eta_\delta}|^2 \zeta + 2(\iden_{\Omega}*\eta_\delta)|\grad\sqrt{\zeta}|^2.
\end{equation}

For (ii), we have by the semiclassical Lieb--Thirring inequality (\cref{ltsemi}) and the Lieb--Oxford bound that
\begin{align*}
&f_{\Omega,\eta_\delta,\zeta}(a)\ge (1-\epsilon)c_\TF\hbar^2 \frac{1}{|\Omega|}\int_{\RR^3} (\iden_{\Omega}*\eta_\delta)^{5/3}\zeta^{5/3}(\cdot-a)\\
&- \frac{10}{27}\frac{\hbar^2}{\epsilon}\frac{1}{|\Omega|}\int_{\RR^3}|\grad (\sqrt{\iden_{\Omega}*\eta_\delta}\sqrt{\zeta(\cdot-a)})|^2
-c_\LO \frac{1}{|\Omega|}\int_{\RR^3} (\iden_{\Omega}*\eta_\delta)^{4/3}\zeta^{4/3}(\cdot-a)\\
&\ge (1-\epsilon)c_\TF\hbar^2 \frac{1}{|\Omega|}\int_{\RR^3} (\iden_{\Omega}*\eta_\delta)^{5/3}\zeta^{5/3}(\cdot-a) - \frac{20}{27}\frac{\hbar^2}{\epsilon}\frac{1}{|\Omega|}\int_{\RR^3}|\grad \sqrt{\iden_{\Omega}*\eta_\delta}|^2\zeta(\cdot-a)\\
&- \frac{20}{27}\frac{\hbar^2}{\epsilon}\frac{1}{|\Omega|}\int_{\RR^3}(\iden_{\Omega}*\eta_\delta)|\grad \sqrt{\zeta}|^2(\cdot-a)-c_\LO \frac{1}{|\Omega|}\int_{\RR^3} (\iden_{\Omega}*\eta_\delta)\zeta^{4/3}(\cdot-a)
\end{align*}
whenever $0<\epsilon\le 3/5$.
Here, in the second step we used \cref{gradprodbd} again.
\end{proof}

Averaging $f_{\Omega,\eta_\delta,\zeta}(a)$ over $a\in\RR^3$, and using \cref{indgradlemma} we obtain 
\begin{corollary}\label{fabsapriori}
For every inhomogeneity $\sqrt{\zeta}\in H_\per^1(\LC)$, the bound
$$
\dashint_{\RR^3} |f_{\Omega,\eta_\delta,\zeta}(a)|\,\dd a \le C \dashint_{\RR^3} (\hbar^2\zeta^{5/3} + \zeta^{4/3})
  + C\hbar^2 \dashint_{\RR^3}|\grad\sqrt{\zeta}|^2 + \frac{C\hbar^2|\partial\Omega|}{|\Omega|\delta}\dashint_{\RR^3}\zeta 
$$
holds true for some universal constant $C>0$ that only depends on $\eta$.
\end{corollary}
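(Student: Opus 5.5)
The plan is to take absolute values in the two bounds of \cref{basicbdprop} and then average over $a\in\RR^3$. Fix one admissible value of $\epsilon$ in each bound, say $\epsilon=1/15$ in (i) and $\epsilon=3/5$ in (ii). Every term on the right-hand sides then has a universal prefactor times an integral of the form
$$
\frac{1}{|\Omega|}\int_{\RR^3} g(x)\,h(x-a)\,\dd x,
$$
with $g\ge0$ equal to either $\iden_\Omega*\eta_\delta$ or $|\grad\sqrt{\iden_\Omega*\eta_\delta}|^2$, and $h\ge 0$ one of $\zeta^{5/3}$, $\zeta^{4/3}$, $|\grad\sqrt\zeta|^2$ or $\zeta$. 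Since $f$ is trapped between the lower bound (ii) and the upper bound (i), we have
$$
|f|\le \max\{|\text{upper}|,|\text{lower}|\}\le \text{sum of absolute values of all terms in (i) and (ii)}.
$$
This gives a pointwise bound on $|f_{\Omega,\eta_\delta,\zeta}(a)|$ by a sum of nonnegative terms of the above form, with universal constants.

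Next comes the key averaging identity. For $g\in L^1(\RR^3)$ with $g\ge 0$ and $h\in L^1_\per(\LC)$ with $h\ge0$, we have
$$
\dashint_{\RR^3}\dd a\int_{\RR^3} g(x)h(x-a)\,\dd x=\Bigl(\int_{\RR^3} g\Bigr)\dashint_{\RR^3}h.
$$
To prove it, note that the $a$-integrand is $\LC$-periodic and locally integrable, so by \cref{permeanthm} its mean value over $\RR^3$ equals its mean over a unit cell $\Lambda$. Tonelli then gives
$$
\frac{1}{|\Lambda|}\int_{\RR^3} g(x)\int_\Lambda h(x-a)\,\dd a\,\dd x=\int g\cdot\dashint_\Lambda h ,
$$
because $\int_\Lambda h(x-a)\,\dd a=\int_\Lambda h$ by periodicity. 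The same Tonelli computation also shows local integrability of the integrand, which is needed for the application of \cref{permeanthm}.

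Now apply the identity term by term.
\begin{itemize}
\item For $g=\iden_\Omega*\eta_\delta$ we have $\int g=|\Omega|$. The factor $1/|\Omega|$ cancels, leaving $\dashint\zeta^{5/3}$, $\dashint\zeta^{4/3}$ and $\dashint|\grad\sqrt\zeta|^2$, each with an $\hbar^2$ or $c_\LO$ prefactor as appropriate.
\item For $g=|\grad\sqrt{\iden_\Omega*\eta_\delta}|^2$, \cref{indgradlemma} gives $\int g\le C_\eta|\partial\Omega|/\delta$. This produces the term $C\hbar^2|\partial\Omega|(|\Omega|\delta)^{-1}\dashint\zeta$.
\end{itemize}
All mean values are finite because $\sqrt\zeta\in H^1_\per$, so by Sobolev $\zeta\in L^p_\per$ for $p\le3$.

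The only genuinely delicate point is justifying the averaging step: the mean value over $\RR^3$ must be taken of the nonnegative majorant, not of $|f|$ directly. I would handle this by noting that $|f|$ is measurable and $\LC$-periodic, so its mean is the unit-cell average. Monotonicity of the unit-cell integral then transfers the pointwise bound.

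A minor caveat is the sign in the last term of (ii), which appears as "$+$" there. Taking absolute values makes this irrelevant. If $|\partial\Omega|=\infty$, the bound is trivial.
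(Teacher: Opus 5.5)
Your proposal is correct and follows the paper's own one-line argument. You bound $|f_{\Omega,\eta_\delta,\zeta}(a)|$ by the two-sided estimates of \cref{basicbdprop} at a fixed admissible $\epsilon$, average over $a$ via \cref{permeanthm} using $\int_{\RR^3}\iden_\Omega*\eta_\delta=|\Omega|$, and control the smeared-indicator gradient term with \cref{indgradlemma}; your Tonelli averaging identity and your remark that the sign in the last term of \cref{basicbdprop}~(ii) is harmless only make explicit what the paper leaves implicit.
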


Recall the definition \cref{evoldef} of the quantum indirect energy per finite volume that we now simply denote as $e_{\Omega,\eta_\delta}(\zeta)$.

\begin{corollary}\label{aprioricoro}
For every inhomogeneity $\sqrt{\zeta}\in H_\per^1(\LC)$, the bound
$$
e_{\Omega,\eta_\delta}(\zeta) \le  (1+\epsilon) c_{\TF}\hbar^2 \dashint_{\RR^3}\zeta^{5/3}
+ \frac{38}{15}\frac{\hbar^2}{\epsilon} \dashint_{\RR^3}|\grad \sqrt{\zeta}|^2 + \frac{C_\eta\hbar^2}{\epsilon}\frac{|\partial\Omega|}{|\Omega|\delta}\dashint_{\RR^3}\zeta
$$
for any $0<\epsilon\le 1/15$, and the bound
$$
e_{\Omega,\eta_\delta}(\zeta) \ge (1-\epsilon)c_\TF\hbar^2\dashint_{\RR^3}\zeta^{5/3} -c_\LO\dashint_{\RR^3}\zeta^{4/3} - \frac{20}{27}\frac{\hbar^2}{\epsilon}\dashint_{\RR^3}|\grad \sqrt{\zeta}|^2-\frac{C_\eta\hbar^2}{\epsilon}\frac{|\partial\Omega|}{|\Omega|\delta}\dashint_{\RR^3}\zeta
$$
for any $0<\epsilon\le 3/5$ hold true. The universal constant $C_\eta>0$ only depends on $\eta$.
\end{corollary}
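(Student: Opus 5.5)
The plan is to average the pointwise bounds of \cref{basicbdprop} over translations $a\in\RR^3$ and rotations $R\in\SO(3)$, then control the boundary term with \cref{indgradlemma}. For fixed $R$, replace $\zeta$ by $\zeta(R\cdot)$; this is again $R^\top\LC$-periodic with the same mean values of $\zeta^{5/3}$, $\zeta^{4/3}$ and $|\grad\sqrt{\zeta}|^2$. So it suffices to treat $R=\mathrm{id}$ and then integrate over $\SO(3)$ with respect to the normalized Haar measure.

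The key step is a Fubini-type identity. For any nonnegative $g\in L^1(\RR^3)$ and any $u\in L^1_\per(\LC)$,
\[
\dashint_{\RR^3}\dd a\int_{\RR^3} g(x)u(x-a)\,\dd x=\Bigl(\int_{\RR^3}g\Bigr)\dashint_{\RR^3}u .
\]
I would prove this by writing the $a$-mean as the limit of averages over $C_L$. By \cref{permeanthm}, $\dashint_{C_L} u(x-a)\,\dd a=\dashint_{C_L+x}u(-\cdot)$ converges to $\dashint u$ uniformly in $x$ as $L\to\infty$. Exchanging the integrals (Tonelli, since everything is nonnegative, or split into positive and negative parts) and passing to the limit using this uniformity and $g\in L^1$ gives the identity.

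I then apply the identity term by term. With $g=\iden_\Omega*\eta_\delta$ we have $\int g=|\Omega|$, so the terms containing $\zeta^{5/3}$, $\zeta^{4/3}$ and $|\grad\sqrt{\zeta}|^2$ become exactly the corresponding mean values after dividing by $|\Omega|$. With $g=|\grad\sqrt{\iden_\Omega*\eta_\delta}|^2$ and $u=\zeta$, the boundary term becomes $|\Omega|^{-1}\|\grad\sqrt{\iden_\Omega*\eta_\delta}\|_2^2\dashint\zeta$. \cref{indgradlemma} bounds this by $C_\eta|\partial\Omega|/(|\Omega|\delta)\dashint\zeta$, and the factor $38/15$ (respectively $20/27$) is absorbed into $C_\eta$.

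For the lower bound, the sign of the last term in \cref{basicbdprop}(ii) appears as "$+$" in the statement. This is evidently a typo: the proof of (ii) produces a negative term. I will use the correct negative sign, which yields the stated $-\frac{C_\eta\hbar^2}{\epsilon}\frac{|\partial\Omega|}{|\Omega|\delta}\dashint\zeta$. The $a$-integrand is $\LC$-periodic and summable by \cref{fabsapriori}, so linearity of the mean justifies averaging the inequalities.

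The main obstacle is the interchange of the $a$-mean limit with the $x$-integral in the Fubini identity. This is handled by the uniformity in the translation parameter in \cref{permeanthm}. Everything else is bookkeeping.
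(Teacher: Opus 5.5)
\textbf{The upper bound is fine; the lower bound has a genuine gap.} Your upper bound is correct and is the paper's argument: average \cref{basicbdprop}(i) over $a$ and $R$, use $\int_{\RR^3}\iden_\Omega*\eta_\delta=|\Omega|$, and apply \cref{indgradlemma}. Your Fubini identity is fine, with one harmless slip: $\dashint_{C_L}u(x-a)\,\dd a=\dashint_{C_L+x}u$, without the reflection. You are also right that the $+$ on the last term of \cref{basicbdprop}(ii) is a misprint.

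The gap is that \cref{basicbdprop}(ii) contains a second misprint, which you took at face value. Its proof actually ends with the Thomas--Fermi term
\[
(1-\epsilon)c_\TF\hbar^2\frac{1}{|\Omega|}\int_{\RR^3}(\iden_\Omega*\eta_\delta)^{5/3}\zeta^{5/3}(\cdot-a),
\]
with the power $5/3$ on $g:=\iden_\Omega*\eta_\delta$. That is what \cref{ltsemi} gives when applied to $\rho=g\zeta(\cdot-a)$. Since $0\le g\le 1$, we have $g^{5/3}\le g$. So replacing $g^{5/3}$ by $g$ in a \emph{positive} term of a lower bound goes the wrong way. The analogous step $g^{4/3}\le g$ in the Lieb--Oxford term is legitimate only because that term carries a minus sign. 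Averaging honestly gives $(1-\epsilon)c_\TF\hbar^2\,r\,\dashint_{\RR^3}\zeta^{5/3}$ with $r:=|\Omega|^{-1}\int_{\RR^3}g^{5/3}$. Here $r<1$ strictly: $g$ is continuous, compactly supported and not identically zero, so by connectedness it takes values in $(0,1)$ on a nonempty open set, where $g^{5/3}<g$.

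\textbf{The printed lower bound is false.} This cannot be repaired by bookkeeping, and the paper's one-line derivation has the same defect. Take $\zeta\equiv\rho_0$. Applying \cref{llsebound} to $\rho=g\rho_0$ gives, for every $0<\epsilon'\le 1/15$,
\[
e_{\Omega,\eta_\delta}(\rho_0)=\frac{E(g\rho_0)}{|\Omega|}\le(1+\epsilon')c_\TF\hbar^2\,r\,\rho_0^{5/3}+\frac{19\hbar^2}{15\epsilon'}\,\frac{\rho_0}{|\Omega|}\int_{\RR^3}\bigl|\grad\sqrt{g}\bigr|^2 .
\]
Hence $\limsup_{\rho_0\to\infty}\rho_0^{-5/3}e_{\Omega,\eta_\delta}(\rho_0)\le c_\TF\hbar^2 r$. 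The claimed bound, valid for all $\epsilon\in(0,3/5]$, would instead force this $\liminf$ to be at least $c_\TF\hbar^2$, which contradicts $r<1$.

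\textbf{What is actually true.} The correct statement is your inequality with $\dashint\zeta^{5/3}$ multiplied by $r$. Equivalently, add the error term $-(1-\epsilon)c_\TF\hbar^2(1-r)\dashint_{\RR^3}\zeta^{5/3}$. Since $0<g<1$ only on $\partial\Omega+B_\delta$, we have $1-r=|\Omega|^{-1}\int(g-g^{5/3})\le|\partial\Omega+B_\delta|/|\Omega|$. For Fisher-regular domains this is $O(\delta|\Omega|^{-1/3})$. So the correction vanishes when $\delta/|\Omega|^{1/3}\to 0$, and the thermodynamic-limit bounds of \cref{qnuegrmrk}(ii) are unaffected.
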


By taking $\Omega=\ell\tetra$, the prefactor of the last terms are proportional to $(\ell\delta)^{-1}$ that disappear in the thermodynamic limit, which yields the bounds of
\cref{qnuegrmrk} (ii), once the existence of the limit is proved.

\subsection{An improved spatial decoupling upper bound}\label{impdecoupsec}
In this section, we propose an estimate which slightly improves the spatial decoupling upper bound introduced in \cite{lewin2020local}.
This bound involves shrinking the smeared set $\iden_{\ell \tetra}*\eta_\delta$ so that its support is strictly contained in $\ell \tetra$, while keeping the smearing scale $\delta$ fixed. This is because in contrast to lower bound of \cref{lboundthm}, it would be much more difficult to localize the density in an overlapping manner for an upper bound.
Let $\delta\le\ell/2$. 
Using the Lions--Titchmarsh convolution theorem,
$$
\supp(\iden_{(\ell-\delta)\tetra} * \eta_\delta)=\ol{(\ell-\delta)\tetra} + (\supp\eta)_{b\delta},
$$
where we used our hypothesis that $\eta$ is radial so $\supp\eta$ is a closed ball, in particular convex. Here, $(\supp\eta)_{b\delta}$ denotes the dilation of the set $\supp\eta\subset B_1$ by $b\delta$. 
We also used the fact that $\supp(\iden_{(\ell-\delta)\tetra} * \eta_\delta)$ is convex, which can be seen from \cref{smearcases}.

So in fact we have $\supp(\iden_{(\ell-\delta)\tetra} * \eta_\delta)\subset \ell \tetra$. 
More generally, it is also clear from the above that decreasing the parameter $\delta$ to $c\delta$ strictly increases the supports as
\begin{equation}\label{suppinc}
\supp(\iden_{(\ell-\delta)\tetra} * \eta_\delta)\Subset\supp(\iden_{(\ell-c\delta)\tetra} * \eta_{c\delta})
\end{equation}
where $0<c<1$ and the symbol $A\Subset B$ means that $A\subset K\subset B$ for some compact set $K\subset\RR^3$. In fact, the distance from the boundary of $\supp(\iden_{(\ell-\delta)\tetra} * \eta_\delta)$ to the boundary of $\supp(\iden_{(\ell-c\delta)\tetra} * \eta_{c\delta})$ is $\ell-c\delta + cb\delta - (\ell -\delta+b\delta)=(1-c)(1-b)\delta>0$.

Our incomplete partition of unity reads
\begin{equation}\label{ipou}
\dashint_{C_\ell} \dd\tau \Bigg[ \sum_{z\in\ZZ^3} \sum_{j=1}^{24} \iden_{\ell T_j(1-\delta/\ell)\tetra} * \eta_\delta(\cdot - \ell z - \tau) + \Upsilon_{\ell,\delta}(\cdot-\tau)\Bigg]\equiv 1,
\end{equation}
where we have introduced the ``skeleton function''
$$
\Upsilon_{\ell,\delta}=\frac{1-(1-\delta/\ell)^3}{1-(1-\delta/(2\ell))^3}\Bigg(1-\sum_{z\in\ZZ^3} \sum_{j=1}^{24} \iden_{\ell T_j(1-\delta/(2\ell))\tetra} * \eta_{\delta/2}(\cdot-\ell z)\Bigg).
$$
Our point here is that now the smeared indicators of the first sum in \cref{ipou} are no longer multiplied by the normalization factor $(1-\delta/\ell)^{-3}$ as in \cite{lewin2020local}, because the skeleton function absorbs these weights. 
Clearly, $\Upsilon_{\ell,\delta}$ is a nonnegative, smooth $(\ell\ZZ^3)$-periodic function with disjoint support from all the other terms in \cref{ipou}. In fact, the support of 
$\iden_{\ell T_j(1-\delta/(2\ell))\tetra} * \eta_{\delta/2}$ strictly contains the support of 
$\iden_{\ell T_j(1-\delta/\ell)\tetra} * \eta_\delta$ according to \cref{suppinc}.
The mean value of the skeleton function is easily seen to be
$$
\dashint_{C_\ell} \Upsilon_{\ell,\delta} = 1 - (1-\delta/\ell)^3,
$$
which in turn verifies relation \cref{ipou}. Note also that this tends to zero in the limit $\delta/\ell\to 0$. According to \cref{indgradlemma}, the mean kinetic energy of  $\Upsilon_{\ell,\delta}$ is bounded as
$$
\dashint_{C_\ell} |\grad\sqrt{\Upsilon_{\ell,\delta}}|^2 \le \frac{C_\eta}{\ell\delta},
$$
which vanishes in the limit $\ell\delta\to\infty$. It turns out when decoupling the energy according to \cref{ipou}, all the contributions coming from the skeleton function $\Upsilon_{\ell,\delta}$
are negligible in the thermodynamic limit.

\begin{theorem}[Improved decoupling upper bound]\label{impdecup}
Let $0<\delta\le \ell/2$ and $0<\alpha<1/2$. Then for any $\rho\in L^1(\RR^3,\RR_+)$ with $\grad\sqrt{\rho}\in L^2(\RR^3)$ the bounds
\begin{multline*}
E(\rho)\le  \dashint_{1-\alpha}^{1+\alpha}\frac{\dd t}{t^4} \int_{\SO(3)}\dd R \dashint_{C_{t\ell}}\dd\tau \sum_{z\in\ZZ^3}  \sum_{j=1}^{24} 
E\bigl(\iden_{t\ell T_j(1-\delta/\ell)\tetra} * \eta_\delta(R\cdot - t\ell z - \tau)\rho\bigr)\\
+\frac{C\hbar^2\delta}{\ell} \int_{\RR^3}\rho^{5/3} +  \frac{C\hbar^2}{\ell\delta} \int_{\RR^3} \rho + \frac{C\hbar^2\delta}{\ell}\int_{\RR^3} |\grad\sqrt{\rho}|^2
+C\delta^2\log(\alpha^{-1}) \int_{\RR^3} \rho^2
\end{multline*}
and 
\begin{multline*}
T(\rho)\le \dashint_{C_{\ell}}\dd\tau \sum_{z\in\ZZ^3}  \sum_{j=1}^{24} T\bigl(\iden_{\ell T_j(1-\delta/\ell)\tetra} * \eta_\delta(\cdot - \ell z - \tau)\rho\bigr)\\
+\frac{C\hbar^2\delta}{\ell} \int_{\RR^3}\rho^{5/3} +  \frac{C\hbar^2}{\ell\delta} \int_{\RR^3} \rho + \frac{C\hbar^2\delta}{\ell}\int_{\RR^3} |\grad\sqrt{\rho}|^2
\end{multline*}
hold true for some universal constant $C>0$.
\end{theorem}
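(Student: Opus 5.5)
We follow the strategy of the decoupling upper bound in \cite{lewin2020local}, but use the incomplete partition of unity \cref{ipou} with the skeleton function $\Upsilon_{\ell,\delta}$. We write $\chi_{z,j,\tau}=\iden_{\ell T_j(1-\delta/\ell)\tetra} * \eta_\delta(\cdot-\ell z-\tau)$ and $\Upsilon_\tau=\Upsilon_{\ell,\delta}(\cdot-\tau)$. For fixed $\tau$ (and, in the Coulomb case, fixed $R$ and $t$), the functions $\chi_{z,j,\tau}\rho$ and $\Upsilon_\tau\rho$ have pairwise disjoint supports, since the supports are separated by the gap $(1-c)(1-b)\delta$ from \cref{suppinc}.

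The trial state is built as follows. We pick minimizers $\vGamma_{z,j}$ for $F_\LL(\chi_{z,j,\tau}\rho)$, or for $T$ in the kinetic case. For the skeleton piece $\Upsilon_\tau\rho$ we choose the quasi-free state given by \cref{llsbound}. We then take the tensor product of all these states in Fock space. Because the supports are disjoint, the antisymmetrization does not affect the density, which is $\sum_{z,j}\chi_{z,j,\tau}\rho+\Upsilon_\tau\rho$. We then average these states over $\tau\in C_\ell$, which is a convex combination of states. Convexity of $F_\LL$ in $\rho$ (it is a minimum over a convex set with a linear constraint) makes the averaged state admissible for the density $\dashint_{C_\ell}(\cdots)\,\dd\tau=\rho$ by \cref{ipou}.

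Energy of the product state, kinetic part. The kinetic energy is additive, so the kinetic bound follows directly. The only remaining term is $T(\Upsilon_\tau\rho)$, which we control by \cref{llsbound}. We use $\Upsilon\le C\delta/\ell$ on average, that is $\dashint\Upsilon=1-(1-\delta/\ell)^3$. The gradient term is handled by \cref{gradprodbd} together with the bound $\dashint|\grad\sqrt{\Upsilon}|^2\le C_\eta/(\ell\delta)$. After averaging in $\tau$ this produces the terms $\frac{C\hbar^2\delta}{\ell}\int(\rho^{5/3}+|\grad\sqrt\rho|^2)$ and $\frac{C\hbar^2}{\ell\delta}\int\rho$.

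Energy of the product state, Coulomb part. The Coulomb energy of a product state equals the sum of the intra-piece Coulomb energies plus the direct cross terms $\sum D(\rho_A,\rho_B)$. Subtracting $D(\rho)$ therefore gives
\[
E(\rho)\le\dashint\dd\tau\Big[\sum_{z,j}E(\chi_{z,j,\tau}\rho)+E(\Upsilon_\tau\rho)\Big]+\Big[\dashint D(\rho_\tau)\,\dd\tau-D(\rho)\Big],
\]
where $\rho_\tau$ denotes the (unaveraged) trial density for fixed $\tau$. The skeleton energy $E(\Upsilon_\tau\rho)$ is bounded above by \cref{llsebound} and is absorbed into the same error terms as in the kinetic case.

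The main obstacle is the variance term $\dashint D(\rho_\tau)-D(\dashint\rho_\tau)\ge0$, which has the wrong sign. Following \cite{lewin2020local}, we will additionally average over rotations $R$ and dilations $t\in[1-\alpha,1+\alpha]$ with the weight $t^{-4}$, which is exactly the averaging in the statement. The aim is to make the averaged two-point function of the partition close to a constant. Concretely, we write the variance in Fourier space as an integral of $|\widehat{\rho}|^2$ against a kernel built from $|\widehat{\chi}|^2$ of the tetrahedral pieces, integrated over $\ell\ZZ^3$ frequencies. The Coulomb kernel $4\pi/|k|^2$ combined with the smearing $\widehat{\eta_\delta}$ and the dilation average, which smooths the lattice sum, should give a bound $C\delta^2\log(\alpha^{-1})\int\rho^2$. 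The $\log\alpha^{-1}$ comes from the near-resonant shells that the dilation averaging only partly suppresses. I would first try to reproduce the estimate of \cite[Sec.~on Graf--Schenker-type upper bounds]{lewin2020local} essentially verbatim. The skeleton contributes only additional pieces of mass $O(\delta/\ell)$, whose cross-variance can be controlled by Cauchy--Schwarz against the same $\int\rho^2$ term. Finally, rescaling by $t$ changes the cell volume, and this is what produces the factor $t^{-4}$ in the normalization.
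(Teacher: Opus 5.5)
Your proposal is correct and follows the paper's proof essentially step by step: the same incomplete partition of unity \cref{ipou} with the skeleton function, the same antisymmetrized product of local states averaged over $\tau$, the a priori bound of \cref{llsebound} for the skeleton piece, and the remaining direct term $\dashint_{C_\ell} D(\rho_\tau-\rho)\,\dd\tau$ treated by the Fourier representation and the rotation/dilation averaging of \cite{lewin2020local}. One correction to your justification: the skeleton is not harmless because its mass is small, since a bound through its mass alone leaves a term of order $(\delta/\ell)^2 D(\rho)$ that $\int\rho^2$ does not control; it works because $\Upsilon_{\ell,\delta}-\dashint_{C_\ell}\Upsilon_{\ell,\delta}$ is a bounded multiple of the Lewin--Lieb--Seiringer fluctuation at smearing scale $\delta/2$, equivalently the full periodic function $f$ has zero mean, which is what the paper uses when it computes the Fourier coefficients of the combined $f$.
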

Here and henceforth, the notation 
$$
\dashint_{1-\alpha}^{1+\alpha}f(t)\frac{\dd t}{t^4} =\left(\int_{1-\alpha}^{1+\alpha}\frac{\dd t}{t^4} \right)^{-1} \int_{1-\alpha}^{1+\alpha}f(t) \frac{\dd t}{t^4} 
$$
is used.
\begin{proof}
Using our incomplete partition of unity \cref{ipou}, we may write
\begin{equation}\label{pourho}
\rho(x)=\dashint_{C_\ell} \Bigg[ \sum_{z\in\ZZ^3} \sum_{j=1}^{24} \iden_{\ell T_j(1-\delta/\ell)\tetra} * \eta_\delta(x - \ell z - \tau) + \Upsilon_{\ell,\delta}(x-\tau)\Bigg]\rho(x)\,\dd\tau \equiv 1.
\end{equation}
For any $\tau\in C_{\ell}$, $z\in\ZZ^3$ and $j=1,\ldots,24$ let $\Gamma_{\tau,z,j}$ be an optimizer of $F_\LL(\iden_{\ell T_j(1-\delta/\ell)\tetra} * \eta_\delta(\cdot - \ell z - \tau)\rho)$.
Also, let $\Gamma_{\tau,\Upsilon}$ be an optimizer of $F_\LL(\Upsilon_{\ell,\delta}(\cdot-\tau)\rho)$.
Next, we take 
$$
\Gamma_\tau=\bigotimes_{z\in\ZZ^3} \bigotimes_{j=1}^{24} \Gamma_{\tau,z,j} \otimes \Gamma_{\tau,\Upsilon}.
$$
All of the states $\Gamma_{\tau,z,j}$ and $\Gamma_{\tau,\Upsilon}$ have disjoint support for fixed $\tau$, so we may antisymmetrize $\Gamma_\tau$, which we denote by $\Gamma_{\tau,a}$.
This state has $\TC(\Gamma_{\tau,a})+\Cc(\Gamma_{\tau,a})=\TC(\Gamma_\tau)+\Cc(\Gamma_\tau)$ and $\rho_{\Gamma_{\tau,a}}=\rho_{\Gamma_\tau}$. 
Finally, using $\Gamma=\dashint_{C_{\ell}}\Gamma_{\tau,a}\dd\tau$ as a trial state, which has $\rho_\Gamma=\rho$, we obtain
\begin{align*}
&F_\LL(\rho)\le \dashint_{C_{\ell}}\dd\tau \sum_{z\in\ZZ^3} \sum_{j=1}^{24} F_\LL\bigl(\iden_{\ell T_j(1-\delta/\ell)\tetra} * \eta_\delta(\cdot - \ell z - \tau)\rho\bigr)+\dashint_{C_{\ell}} F_\LL\bigl(\Upsilon_{\ell,\delta}(\cdot - \tau)\rho\bigr)\, \dd\tau\\
&+ \dashint_{C_{\ell}} \dd\tau \sum_{\substack{(z,j),(z',j')\in\ZZ^3\times\{1,\ldots,24\}\\(z,j)\neq(z',j')}} 
D\left(\iden_{\ell T_j(1-\delta/\ell)\tetra} * \eta_\delta(\cdot - \ell z - \tau)\rho, \iden_{\ell T_{j'}(1-\delta/\ell)\tetra} * \eta_\delta(\cdot - \ell z' - \tau)\rho\right)\\
&+ 2\dashint_{C_{\ell}} \dd\tau \sum_{z\in\ZZ^3} 
\sum_{j=1}^{24} D\left(\iden_{\ell T_j(1-\delta/\ell)\tetra} * \eta_\delta(\cdot - \ell z - \tau)\rho, \Upsilon_{\ell,\delta}(\cdot-\tau)\rho\right).
\end{align*}
This implies
\begin{align*}
&E(\rho)\le \dashint_{C_{\ell}}\dd\tau \sum_{z\in\ZZ^3} \sum_{j=1}^{24} E\bigl(\iden_{\ell T_j(1-\delta/\ell)\tetra} * \eta_\delta(\cdot - \ell z - \tau)\rho\bigr)+\dashint_{C_{\ell}} E\bigl(\Upsilon_{\ell,\delta}(\cdot - \tau)\rho\bigr)\, \dd\tau\\
&-D(\rho)+\dashint_{C_{\ell}}\dd\tau \sum_{z\in\ZZ^3} \sum_{j=1}^{24} D\bigl(\iden_{\ell T_j(1-\delta/\ell)\tetra} * \eta_\delta(\cdot - \ell z - \tau)\rho\bigr) + \dashint_{C_{\ell}} D\bigl(\Upsilon_{\ell,\delta}(\cdot - \tau)\rho\bigr)\, \dd\tau\\
&+ \dashint_{C_{\ell}} \dd\tau \sum_{(z,j)\neq(z',j')}  D\left(\iden_{\ell T_j(1-\delta/\ell)\tetra} * \eta_\delta(\cdot - \ell z - \tau)\rho, \iden_{\ell T_{j'}(1-\delta/\ell)\tetra} * \eta_\delta(\cdot - \ell z' - \tau)\rho\right)\\
&+ 2\dashint_{C_{\ell}} \dd\tau \sum_{z\in\ZZ^3} 
\sum_{j=1}^{24} D\left(\iden_{\ell T_j(1-\delta/\ell)\tetra} * \eta_\delta(\cdot - \ell z - \tau)\rho, \Upsilon_{\ell,\delta}(\cdot-\tau)\rho\right).
\end{align*}
By inserting \cref{pourho} into $-D(\rho)$ as
\begin{multline*}
-D(\rho)=D(\rho)-2D(\rho,\rho)\\
=D(\rho)-2\dashint_{C_\ell}\dd\tau D\left(\rho,\sum_{z\in\ZZ^3} \sum_{j=1}^{24} \iden_{\ell T_j(1-\delta/\ell)\tetra} * \eta_\delta(\cdot - \ell z - \tau)\rho + \Upsilon_{\ell,\delta}(\cdot-\tau)\rho\right),
\end{multline*}
we see that the various direct terms give in total
$$
\dashint_{C_\ell} D\left(  \sum_{z\in\ZZ^3} \sum_{j=1}^{24} \iden_{\ell T_j(1-\delta/\ell)\tetra} * \eta_\delta(\cdot - \ell z - \tau)\rho + \Upsilon_{\ell,\delta}(\cdot-\tau)\rho -\rho\right)\,\dd\tau=:(*).
$$

First, we estimate the indirect energy error coming from the $\Upsilon_{\ell,\delta}$ term. Using \cref{llsebound} and the properties of $\Upsilon_{\ell,\delta}$, we have the bound
\begin{multline*}
\dashint_{C_\ell} E\bigl(\Upsilon_{\ell,\delta}(\cdot-\tau)\rho\bigr)\,\dd\tau \le C\hbar^2 \int_{\RR^3} \rho^{5/3} \dashint_{C_\ell}\Upsilon_{\ell,\delta}(\cdot-\tau)^{5/3}\,\dd\tau  \\
+C\hbar^2 \int_{\RR^3} \rho \dashint_{C_\ell}  |\grad\sqrt{\Upsilon_{\ell,\delta}}|^2 (\cdot-\tau)\,\dd\tau  + C\hbar^2\int_{\RR^3} |\grad\sqrt{\rho}|^2 \dashint_{C_\ell} \Upsilon_{\ell,\delta}(\cdot-\tau)\,\dd\tau\\
\le \frac{C\hbar^2\delta}{\ell} \int_{\RR^3}\rho^{5/3} +  \frac{C\hbar^2}{\ell\delta} \int_{\RR^3} \rho + \frac{C\hbar^2\delta}{\ell}\int_{\RR^3} |\grad\sqrt{\rho}|^2.
\end{multline*}

Finally, we estimate the direct term $(*)$. We proceed exactly as in \cite{lewin2020local}. Inserting
$$
\rho=\sum_{z\in\ZZ^3} \sum_{j=1}^{24} \iden_{\ell\tetra_j}(\cdot - \ell z - \tau)\rho \quad \text{(a.e.)}
$$
so that
$$
(*)=\dashint_{C_\ell} D\bigl( f(\cdot-\tau)\rho\bigr)\,\dd\tau,
$$
where the $(\ell\ZZ^3)$-periodic function $f:\RR^3\to\RR$ is given by
$$
f= \sum_{z\in\ZZ^3} \sum_{j=1}^{24} \left(\iden_{\ell\tetra_j}(\cdot - \ell z)- \iden_{\ell T_j(1-\delta/\ell)\tetra} * \eta_\delta(\cdot - \ell z)\right) - \Upsilon_{\ell,\delta}.
$$
Recall the representation from \cite{lewin2020local},
$$
\dashint_{C_\ell} D\bigl( f(\cdot-\tau)\rho\bigr)\,\dd\tau=2\pi \sum_{k\in (2\pi/\ell)\ZZ^3} \left|\dashint_{C_\ell} f(x)e^{-ik\cdot x}\,\dd x\right|^2 \int_{\RR^3} \frac{|\wh{\rho}(p)|^2}{|p-k|^2}\,\dd p,
$$
valid for any $(\ell\ZZ^3)$-periodic function $f$.

We obtain
$$
(*)=2\pi \sum_{k\in 2\pi\ZZ^3} \left|\int_{C_1} f_{\ell/\delta}(x)e^{-ik\cdot x}\,\dd x\right|^2 \int_{\RR^3} \frac{|\wh{\rho}(p)|^2}{|p-k/\ell|^2}\,\dd p.
$$
where we have set
$$
f_\epsilon=\sum_{j=1}^{24}\left(\iden_{\tetra_j}- \iden_{T_j(1-\epsilon)\tetra} * \eta_{\epsilon}\right) - \Upsilon_{\epsilon}.
$$
and
$$
\Upsilon_{\epsilon}=\frac{1-(1-\epsilon)^3}{1-(1-\epsilon/2)^3}\Bigg(1-\sum_{j=1}^{24} \iden_{T_j(1-\epsilon/2)\tetra} * \eta_{\epsilon/2}\Bigg).
$$
Here, $\int_{C_1} f_{\delta/\ell}=0$ and 
\begin{align*}
&\int_{C_1} f_{\ell/\delta}(x)e^{-ik\cdot x}\,\dd x\\
&= -(2\pi)^3\sum_{j=1}^{24} \wh{\iden_{T_j(1-\epsilon)\tetra}}(k) \wh{\eta}_1(\epsilon k) 
+ (2\pi)^3\frac{1-(1-\epsilon)^3}{1-(1-\epsilon/2)^3}\sum_{j=1}^{24} \wh{\iden_{T_j(1-\epsilon/2)\tetra}}(k)\wh{\eta}_1(\epsilon k/2).
\end{align*}
From this point on, the proof follows similarly to that of \cite[Proposition 1]{lewin2020local}. In conclusion, after
averaging over rotations and dilations $\ell\to t\ell$ and $\delta\to t\delta$ over $t\in(1-\alpha,1+\alpha)$ with weight $t^{-4}$, we obtain the bound
$$
\dashint_{1-\alpha}^{1+\alpha}\frac{\dd t}{t^4} \int_{\SO(3)}\dd R \dashint_{C_{t\ell}}\dd\tau\, D\bigl( f(R(\cdot-\tau))\rho\bigr)\le C\delta^2\log(\alpha^{-1}) \int_{\RR^3} \rho^2,
$$
which concludes the proof.
\end{proof}

\subsection{Convergence rate for tetrahedra}

In this section we study the thermodynamic convergence of the indirect energy
for dilated tetrahedra.

\begin{theorem}[Convergence rate for tetrahedra]\label{tetrathermo}
Fix an inhomogeneity $\zeta$ and a regularization function $\eta$. Let $\ul{e}_\NUEG(\zeta)$ and $\ol{e}_\NUEG(\zeta)$ denote the liminf and the limsup of $e_{\ell\tetra,\eta_\delta}(\zeta)$
as $\ell\delta\to\infty$ and $\delta/\ell\to 0$. 
\begin{itemize}
\item[(i)] For $\ell/\delta$ sufficiently large, there holds
\begin{align*}
&e_{\ell\tetra,\eta_{\delta}}(\zeta)\le \ul{e}_\NUEG(\zeta) +\frac{C}{\ell}\Bigg( (1+\hbar^2\delta^{-1}) \dashint_{\RR^3} \zeta + \delta^3\dashint_{\RR^3} \zeta^2 + \hbar^2\delta \dashint_{\RR^3} \zeta^{5/3} + \hbar^2\delta \dashint_{\RR^3} |\grad\sqrt{\zeta}|^2\Bigg)
\end{align*}
\item[(ii)] For $\ell/\delta$ sufficiently large and $0<\alpha<1/2$, 
\begin{multline*}
\ol{e}_{\NUEG}(\zeta)\le \dashint_{1-\alpha}^{1+\alpha}\frac{\dd t}{t^4}\, e_{t\ell\tetra,\eta_{t\delta}}(\zeta)\\
+\frac{C\hbar^2\delta}{\ell} \dashint_{\RR^3}\zeta^{5/3} +  \frac{C\hbar^2}{\ell\delta} \dashint_{\RR^3} \zeta + \frac{C\hbar^2\delta}{\ell}\dashint_{\RR^3} |\grad\sqrt{\zeta}|^2+ C\delta^2\log(\alpha^{-1}) \dashint_{\RR^3}\zeta^2
\end{multline*}
\item[(iii)] For $\ell$  sufficiently large, 
\begin{align*}
&e_{\ell\tetra,\eta_{\delta}}(\zeta)\ge \ol{e}_{\NUEG}(\zeta) -\frac{C\delta}{\ell}\dashint_{\RR^3} \left(\zeta +\zeta^2\right) -\frac{C}{\ell^{2/5}}\dashint_{\RR^3} \left(\zeta +\zeta^2\right) - \frac{C}{\ell^{4/5}}\dashint_{\RR^3} |\grad\sqrt{\zeta}|^2.
\end{align*}
\item[(iv)] The thermodynamic limit
$$
\lim_{\substack{\delta/\ell\to 0\\\ell\delta\to\infty}} e_{\ell\tetra,\eta_\delta}(\zeta)=e_\NUEG(\zeta)
$$
exists and is independent of the regularization function $\eta$.
\end{itemize}
In all the above bounds, the generic constant $C>0$ depends only on the regularization function $\eta$ and is independent of $\ell$, $\delta$, $\zeta$ and $\alpha$. 
\end{theorem}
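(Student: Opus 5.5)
The argument mirrors the classical proof of \cref{clconvrate}, with two substitutes for classical subadditivity: \cref{lboundthm} for lower bounds and \cref{impdecup} for upper bounds. Throughout, a big tetrahedron $\ell'\tetra$ with smearing $\delta'$ is compared to small tetrahedra $\ell\tetra$ with smearing $\delta$. We let $\ell'\to\infty$, $\delta'/\ell'\to0$ and $\ell'\delta'\to\infty$ along sequences realizing the liminf or limsup, keeping $(\ell,\delta)$ fixed. All density integrals are converted into mean values of $\zeta$ by averaging over $a$ and $R$, using \cref{permeanthm} and the fact that $\dashint_a (\iden_{\Omega}*\eta_\delta)^k g(\cdot-a)\le |\Omega|\dashint g$.

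\textbf{Step (i).} Apply \cref{lboundthm} at scale $\ell$ and smearing $\delta$ to $\rho=(\iden_{\ell'\tetra}*\eta_{\delta'})\zeta(R'(\cdot-a))$. Then divide by $|\ell'\tetra|$ and average over $(R',a)$.
\begin{itemize}
\item Tetrahedra $\ell\tetra_j+\ell z$ lying (together with their $\delta$-neighbourhood) well inside $\ell'\tetra$, at distance $\gg\delta'$ from the boundary, see $\iden_{\ell'\tetra}*\eta_{\delta'}\equiv1$. After absorbing the isometry into $\zeta$ and using \cref{permeanthm}, each contributes exactly $|\ell\tetra|\,e_{\ell\tetra,\eta_\delta}(\zeta)$.
\item The boundary layer of small tetrahedra has relative volume $O(\ell/\ell')$. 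There we use the Lieb--Oxford lower bound together with the Lieb--Thirring term of \cref{basicbdprop}(ii), where the kinetic part is nonnegative up to gradient errors of order $\hbar^2/(\ell'\delta')$. This contribution vanishes as $\ell'\to\infty$.
\item The error term of \cref{lboundthm} averages to $\frac{C}{\ell}\dashint\bigl((1+\hbar^2\delta^{-1})\zeta+\delta^3\zeta^2\bigr)$.
\end{itemize}
The factor $(1-C\delta/\ell)$ multiplies $e_{\ell\tetra,\eta_\delta}(\zeta)$. Rewriting $(1-C\delta/\ell)e=e-(C\delta/\ell)e$ and bounding $|e|$ by \cref{aprioricoro} produces the terms $\frac{C\hbar^2\delta}{\ell}\dashint(\zeta^{5/3}+|\grad\sqrt\zeta|^2)$ and absorbable pieces. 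Taking the liminf in $\ell'$ gives (i).

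\textbf{Step (ii).} Apply \cref{impdecup} at scale $t\ell$, smearing $\delta$, to the big density. The shrunken smeared tetrahedra $\iden_{t\ell T_j(1-\delta/\ell)\tetra}*\eta_\delta$ lying inside the region where $\iden_{\ell'\tetra}*\eta_{\delta'}\equiv1$ reproduce $e_{t\ell\tetra,\eta_{t\delta}}$ after averaging. This needs a rescaling check: the shrunken smeared tetrahedron has volume $|t\ell\tetra|(1-\delta/\ell)^3$ and is, up to that factor, the profile $\iden_{t\ell\tetra}*\eta_{t\delta}$ translated, so the defect is of relative size $O(\delta/\ell)$ and is absorbed into the stated error terms via \cref{aprioricoro}. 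The weight $t^{-4}$ is designed to combine the $|t\ell\tetra|$ normalization with the number $\propto t^{-3}$ of cells, giving the average $\dashint\frac{\dd t}{t^4}$. Tetrahedra meeting the boundary layer are bounded above via \cref{llsebound}; their number is $O(\ell/\ell')$ relative, and the $\hbar^2/(\delta'\ell')$ gradient cost also vanishes. The error terms of \cref{impdecup} average directly to the ones listed. Taking the limsup in $\ell'$ gives (ii).

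\textbf{Steps (iii) and (iv).} For (iii) we invert the roles: decouple the big tetrahedron from above into small ones by (ii), and the small one from below into big ones by \cref{lboundthm}. More precisely, apply \cref{lboundthm} to $e_{\ell\tetra,\eta_\delta}$ at an intermediate scale $L\ll\ell$ with smearing $\delta_L$. This bounds $e_{\ell\tetra}$ from below by averages of $e_{L\tetra,\eta_{\delta_L}}$, up to boundary errors. Then use (ii) with $L\to\infty$ to compare those averages to $\ol e_\NUEG$. The choices $\delta_L\sim L^{1/5}$, $\alpha$ fixed, and $L\sim\ell^{3/5}$ (or similar) are made to balance the errors $\hbar^2/(L\delta_L)$, $\delta_L^2\zeta^2$ and $L/\ell$. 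This optimization should produce the exponents $\ell^{-2/5}$ and $\ell^{-4/5}$.

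I expect (iii) to be the main obstacle. The $\delta_L^2\log\alpha^{-1}\dashint\zeta^2$ term is not small unless $\delta_L\to0$, which conflicts with $L\delta_L\to\infty$. The rate therefore comes from a genuine three-parameter balance, and the dilation average in $t$ must be removed through monotonicity-free comparisons. I would first try bounding $e_{t\ell\tetra,\eta_{t\delta}}$ uniformly in $t$ via (i) and absorbing the difference.

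Finally, (iv) follows by combining the pieces. By (i), $\limsup e\le\ul e$, and by (iii), $\liminf e\ge \ol e$, while $\ul e\le \ol e$ trivially. Hence the limit exists. Independence of $\eta$ holds because (i) and (iii) compare any two regularizations: the constants depend on $\eta$, but the errors vanish in the limit.
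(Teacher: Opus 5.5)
The overall route is the paper's. Step (i) uses the lower decoupling of \cref{lboundthm} (packaged in the paper as \cref{monolemm}) applied to $(\iden_{\ell'\tetra}*\eta_{\delta'})\zeta$. Step (ii) uses \cref{impdecup} with an interior/boundary split. In (i) you handle the sign of the prefactor on the small tetrahedron rather than the big one, which works equally well.

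The genuine gap is in (iii), in two places.

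\textbf{The smearing scale.} There is no conflict between $\delta_L\to0$ and $L\delta_L\to\infty$. Your choice $\delta_L\sim L^{1/5}$ is fatal: the error $C\delta_L^2\log(\alpha^{-1})\dashint_{\RR^3}\zeta^2$ in (ii) then grows like $L^{2/5}$, so (ii) gives no information. Balancing the three errors you list gives $\delta_L=L^{-1/3}$ and $L=\ell^{3/5}$ with $\alpha$ fixed, which is the paper's choice. Then $L/\ell$, $(L\delta_L)^{-1}$ and $\delta_L^2$ are all $\ell^{-2/5}$, and the gradient error $\delta_L/L$ is $\ell^{-4/5}$. The remaining errors $\delta_L^3/L$ and $1/L$ are smaller. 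These are exactly the exponents in the statement.

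\textbf{The dilation average.} It never has to be removed. The bound of \cref{monolemm} holds for every pair $(tL,t\delta_L)$, so you simply integrate it against $\dashint_{1-\alpha}^{1+\alpha}\frac{\dd t}{t^4}$. This gives $e_{\ell\tetra,\eta_\delta}(\zeta)\ge(1-\dots)\dashint\frac{\dd t}{t^4}\,e_{tL\tetra,\eta_{t\delta_L}}(\zeta)-(\text{errors})$, and (ii) then applies verbatim. The sign-dependent prefactor is handled for each $t$ by the \emph{a priori} upper bound, as in your step (i). Your fallback of controlling $e_{tL\tetra,\eta_{t\delta_L}}$ uniformly in $t$ through (i) runs the wrong way. (i) bounds the finite-volume energy from above, whereas here you need a lower bound on the intermediate-scale energies.

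\textbf{Two smaller corrections.}

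In (ii), the cells produced by \cref{impdecup} at scale $t\ell$ are smeared tetrahedra of side $t(\ell-\delta)$. They are not multiples of $\iden_{t\ell\tetra}*\eta_{t\delta}$, and the smearing must also be dilated to $t\delta$, as in the proof of \cref{impdecup}. Their energy is that of a different finite-volume problem, and no \emph{a priori} bound such as \cref{aprioricoro} controls its difference from $e_{t\ell\tetra,\eta_{t\delta}}(\zeta)$. The paper avoids this by applying \cref{impdecup} at scale $\ell+\delta$, i.e.\ with $\beta=\ell/(\ell+\delta)$. The shrunken cells are then exactly $t\ell\tetra$ smeared at scale $t\delta$, and only a volume factor multiplying the energy remains.

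In (iv), the error in (i) contains $\delta^3/\ell$, which need not vanish when $\delta\to\infty$. So (i) does not give $\limsup e\le\ul{e}_\NUEG(\zeta)$ on the whole regime. You do not need it: the errors in (iii) vanish whenever $\ell\to\infty$ and $\delta/\ell\to0$, so (iii) alone gives $\ul{e}_\NUEG(\zeta)\ge\ol{e}_\NUEG(\zeta)$.
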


Before proving the above theorem, we derive a bound comparing the energy of a big tetrahedron to a small tetrahedron. This result can be thought of as the quantum analog of \cref{dyadmono}.
\begin{lemma}[Monotonicity estimate]\label{monolemm}
Fix an inhomogeneity $\zeta$ and a regularization function $\eta$. For any $\ell'\gg\ell>0$, $\delta',\delta>0$ such that $\delta/\ell\le 1/C$, the bound
\begin{equation}\label{tetracomp}
\begin{multlined}
e_{\ell'\tetra,\eta_{\delta'}}(\zeta)\ge \Bigg( 1-C\sigma \frac{\delta}{\ell} -C\sigma\frac{\ell+\delta+\delta'}{\ell'}\Bigg) e_{\ell\tetra,\eta_{\delta}}(\zeta)
- C \frac{\ell+\delta+\delta'}{\ell'} \dashint_{\RR^3} \zeta^{4/3}\\
-\frac{C}{\ell}\Bigg( (1+\hbar^2\delta^{-1}) \dashint_{\RR^3} \zeta + \delta^3\dashint_{\RR^3} \zeta^2 \Bigg)
\end{multlined}
\end{equation}
holds true, where
$$
\sigma=\begin{cases}
0 & \text{if} \; e_{\ell\tetra,\eta_\delta}(\zeta)\le 0 \\
1 & \text{otherwise}
\end{cases}
$$
\end{lemma}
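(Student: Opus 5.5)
The plan is to mimic the lower-bound half of \cref{clconvrate}, with the Graf--Schenker inequality replaced by the decoupling lower bound \cref{lboundthm}. Fix $Q\in\SO(3)$ and $a\in\RR^3$, and set $\rho=(\iden_{\ell'\tetra}*\eta_{\delta'})\zeta(Q(\cdot-a))$. Applying \cref{lboundthm} at scale $\ell$ with smearing $\delta$ gives
\begin{multline*}
E(\rho)\ge \Bigl(1-\frac{C\delta}{\ell}\Bigr)\int_{\SO(3)}\dd R\dashint_{C_\ell}\dd\tau\sum_{z,j}E\bigl(\iden_{\ell\tetra_j}*\eta_\delta(R\cdot-\ell z-\tau)\,\rho\bigr)\\
-\frac{C}{\ell}\int_{\RR^3}\bigl((1+\hbar^2\delta^{-1})\rho+\delta^3\rho^2\bigr).
\end{multline*}
The error term is bounded using $0\le\iden_{\ell'\tetra}*\eta_{\delta'}\le1$ and $\int\iden_{\ell'\tetra}*\eta_{\delta'}=|\ell'\tetra|$. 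After dividing by $|\ell'\tetra|$ and averaging over $(Q,a)$, periodicity (\cref{permeanthm}) turns it into exactly the last term of \cref{tetracomp}.

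Next I split the index set $(z,j)$ as in \cref{clconvrate}.
\begin{itemize}
\item $J_0$ consists of the small tetrahedra whose support (which is at most $\ell\tetra$ enlarged by $\delta$) lies, for every $(R,\tau)$, in the region where $\iden_{\ell'\tetra}*\eta_{\delta'}\equiv1$, i.e.\ well inside $\ell'\tetra$ at distance about $\delta'$.
\item $J\setminus J_0$ consists of the remaining tetrahedra that meet the support.
\end{itemize}
For $(z,j)\in J_0$ the localized density is simply $\iden_{\ell\tetra_j}*\eta_\delta(R\cdot-\ell z-\tau)\,\zeta(Q(\cdot-a))$. By isometry invariance of $E$, together with the translation and rotation averaging over $(Q,a)$, each such term averages to $|\ell\tetra|\,e_{\ell\tetra,\eta_\delta}(\zeta)$, exactly as in the computation of (Ia). This yields the prefactor $(1-C\delta/\ell)\,|J_0||\ell\tetra|/|\ell'\tetra|$.

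For $J\setminus J_0$ I use the Lieb--Oxford lower bound $E\ge -c_\LO\int\rho^{4/3}$ together with $(\iden*\eta)^{4/3}\le1$. After averaging this gives $-C\bigl(|J\setminus J_0||\ell\tetra|/|\ell'\tetra|\bigr)\dashint\zeta^{4/3}$. Counting tetrahedra in a shell of width about $\ell+\delta+\delta'$ around $\partial(\ell'\tetra)$ gives
\[
\frac{|J\setminus J_0||\ell\tetra|}{|\ell'\tetra|}\le C\,\frac{\ell+\delta+\delta'}{\ell'}
\qquad\text{and}\qquad
1\ge\frac{|J_0||\ell\tetra|}{|\ell'\tetra|}\ge 1-C\,\frac{\ell+\delta+\delta'}{\ell'}.
\]

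The step that needs care is turning the $J_0$ prefactor into the $\sigma$-dependent form. This is handled by a sign split.
\begin{itemize}
\item If $e_{\ell\tetra,\eta_\delta}(\zeta)\le0$, then $\lambda e\ge e$ for every $\lambda\in[0,1]$. Since the prefactor $(1-C\delta/\ell)\,|J_0||\ell\tetra|/|\ell'\tetra|$ lies in $[0,1]$, the $J_0$ contribution is bounded below by $e_{\ell\tetra,\eta_\delta}(\zeta)$ itself, so $\sigma=0$.
\item If $e_{\ell\tetra,\eta_\delta}(\zeta)>0$, I use the lower bound on the prefactor, $\ge\bigl(1-C\delta/\ell\bigr)\bigl(1-C(\ell+\delta+\delta')/\ell'\bigr)$. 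Expanding and dropping the positive cross term gives $\sigma=1$.
\end{itemize}

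Unlike the classical case, positivity of $e$ is possible here because of the kinetic energy, and this is the reason for the case distinction. The only other delicate points are the geometric choice of $J_0$, which must account for both smearing widths, and checking that the $\delta$-enlarged small supports do not spoil the shell counting. Both are routine once $\ell'\gg\ell$.
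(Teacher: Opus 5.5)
Your proposal is correct and follows the paper's proof step by step. It applies the decoupling lower bound of \cref{lboundthm} to $(\iden_{\ell'\tetra}*\eta_{\delta'})\zeta(Q(\cdot-a))$, splits into interior and boundary tetrahedra with Lieb--Oxford on the boundary, and uses the same sign split to produce $\sigma$. One minor fix, which also applies to the paper's write-up: define the index sets $J_0, J$ for each fixed $(R,\tau)$ rather than ``for every/some $(R,\tau)$''. As written, rotating the grid about the origin moves tiles at distance of order $\ell'$ by order $\ell'$, so the shell-counting bounds fail. With $(R,\tau)$-dependent sets those bounds hold uniformly in $(R,\tau)$, and the rest of your argument goes through unchanged.
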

\begin{proof}
Let $\ell\ll \ell'$ sufficiently large, $\delta/\ell\le 1/C$ and write 
\begin{align*}
&e_{\ell'\tetra,\eta_{\delta'}}(\zeta)=\frac{1}{|\ell'\tetra|}\int_{\SO(3)}\dd Q \dashint_{\RR^3}\dd a \, E\bigl((\iden_{\ell'\tetra}*\eta_{\delta'})\zeta(Q(\cdot-a))\bigr)\\
&\ge \frac{1}{|\ell'\tetra|} \left(1-\frac{C\delta}{\ell}\right) \int_{\SO(3)}\dd Q \dashint_{\RR^3} \dd a  \int_{\SO(3)}\dd R  \dashint_{C_{\ell}}\dd\tau  \\
&\quad\quad\times \sum_{z\in\ZZ^3} \sum_{j=1}^{24} E\bigl(\iden_{\ell\tetra_j}*\eta_\delta(R\cdot - \ell z - \tau)(\iden_{\ell'\tetra}*\eta_{\delta'})\zeta(Q(\cdot -a))\bigr)\\
&-\frac{C}{\ell|\ell'\tetra|} \int_{\SO(3)}\dd Q \dashint_{\RR^3}\dd a \int_{\RR^3} \Bigl( (1+\hbar^2\delta^{-1})(\iden_{\ell'\tetra}*\eta_{\delta'}) \zeta(Q(\cdot-a)) \\
&\quad\quad+ \delta^3(\iden_{\ell'\tetra}*\eta_{\delta'})^2\zeta(Q(\cdot-a))^2 \Bigr)\\
&=:\mathrm{(I)} + \mathrm{(II)}
\end{align*}
Using $0\le \iden_{\ell'\tetra}*\eta_{\delta'}\le 1$, and evaluating the mean value with respect to $a$, the terms in (II) may be estimated as
\begin{align*}
\mathrm{(II)}\ge -\frac{C}{\ell}\Bigg( (1+\hbar^2\delta^{-1}) \dashint_{\RR^3} \zeta + \delta^3\dashint_{\RR^3} \zeta^2 \Bigg) 
\end{align*}

Next, we consider (I). Define the index set
$$
J:=\left\{ \begin{aligned}(z,j)\in \ZZ^3\times\{1,\ldots,24\} : &\supp( (\iden_{\ell\tetra_j}*\eta_\delta)(R\cdot - \ell z - \tau)) \cap \supp(\iden_{\ell'\tetra}*\eta_{\delta'})\neq \emptyset,\\
&\text{for some}\;R\in\SO(3)\;\text{and}\; \tau\in C_\ell \end{aligned} \right\},
$$
which collects all the small smeared tetrahedra that possibly intersect with the big smeared tetrahedron $\supp(\iden_{\ell'\tetra}*\eta_{\delta'})$. Furthermore, define
$$
J_0:=\left\{ \begin{aligned}(z,j)\in \ZZ^3\times\{1,\ldots,24\} : &\supp( (\iden_{\ell\tetra_j}*\eta_\delta)(R\cdot - \ell z - \tau)) \subset (\ell'-\delta')\tetra,\\
&\text{for all}\;R\in\SO(3)\;\text{and}\; \tau\in C_\ell \end{aligned} \right\},
$$
which contains small smeared tetrahedra which are well inside $\ell'\tetra$.

According to $J_0$ and $J\setminus J_0$, we may decompose the sum in (I) to (Ia) and (Ib), respectively. 
To estimate (Ia), note that for $(z,j)\in J_0$, we simply have
$$
(\iden_{\ell\tetra_j}*\eta_\delta)(R\cdot - \ell z - \tau)(\iden_{\ell'\tetra}*\eta_{\delta'})=(\iden_{\ell\tetra_j}*\eta_\delta)(R\cdot - \ell z - \tau),
$$
since $\iden_{\ell'\tetra}*\eta_{\delta'}\equiv 1$ on $(\ell'-\delta')\tetra$. Therefore,
\begin{multline*}
\mathrm{(Ia)}= \frac{1}{|\ell'\tetra|} \left(1-\frac{C\delta}{\ell}\right) \int_{\SO(3)}\dd Q  \dashint_{\RR^3} \dd a \int_{\SO(3)}\dd R\dashint_{C_{\ell}}\dd\tau \\
\times \sum_{(z,j)\in J_0} E\bigl(\iden_{\ell\tetra_j}*\eta_\delta(R\cdot - \ell z - \tau)\zeta(Q(\cdot-a))\bigr)
\end{multline*}
Using the definition \cref{evoldef}, this can be written as
\begin{align*}
\mathrm{(Ia)}&= \frac{1}{|\ell'\tetra|} \left(1-\frac{C\delta}{\ell}\right)   \int_{\SO(3)}\,\dd R\dashint_{C_{\ell}}\dd\tau \sum_{(z,j)\in J_0} |\ell\tetra|e_{R^\top\ell\tetra_j+\ell z+\tau,\eta_\delta}(\zeta)
\end{align*}
Recalling that $\Omega\mapsto e_{\Omega,\eta_\delta}(\zeta)$ is isometry-invariant, we find
\begin{align*}
\mathrm{(Ia)}&= \frac{M_\ell^0}{|\ell'\tetra|} \left(1-\frac{C\delta}{\ell}\right)  e_{\ell\tetra,\eta_\delta}(\zeta)
\end{align*}
where we have set $M_\ell^0=|\ell\tetra||J_0|$.

Next, we estimate the contributions from tetrahedra close to the boundary. 
Using the Lieb--Oxford inequality (\cref{lobound}) we find
\begin{multline*}
\mathrm{(Ib)}\ge \frac{C}{|\ell'\tetra|} \left(1-\frac{C\delta}{\ell}\right) \int_{\SO(3)}\dd Q \dashint_{\RR^3} \dd a \int_{\SO(3)}\dd R  \dashint_{C_{\ell}}\dd\tau \\
\times\sum_{(z,j)\in J\setminus J_0}\Bigg[
-c_\LO \int_{\RR^3} \iden_{\ell\tetra_j}*\eta_\delta(R\cdot - \ell z - \tau)^{4/3}(\iden_{\ell'\tetra}*\eta_{\delta'})^{4/3} \zeta(Q(\cdot -a))^{4/3}\Bigg] 
\end{multline*}
This can be further bounded as
$$
\mathrm{(Ib)}\ge - \frac{M_\ell^\partial}{|\ell'\tetra|} \left(1-\frac{C\delta}{\ell}\right)  c_\LO \dashint_{\RR^3} \zeta^{4/3},
$$
where we have set $M_\ell^\partial=|\ell\tetra||J\setminus J_0|$. 

Let us bound the volume ratios $M_\ell^0/|\ell'\tetra|$ and $M_\ell^\partial/|\ell'\tetra|$. If $e_{\ell\tetra,\eta_\delta}(\zeta)\le 0$, then may use the trivial bound $M_\ell^0/|\ell'\tetra|\le 1$. Recall that the index set $J\setminus J_0$ collects the small tetrahedra close to the boundary of $\ell'\tetra$ and are at a distance $\OC(\ell+\delta+\delta')$ from $\partial(\ell'\tetra)$. Since the surface area of $\partial(\ell'\tetra)$ is $\OC(\ell'^2)$, the tetrahedra close to the boundary fill a volume of $M_\ell^\partial=\OC(\ell'^2(\ell+\delta+\delta'))$. Therefore,
\begin{equation}\label{volbubound}
\frac{M_\ell^0}{|\ell'\tetra|}\ge \frac{|\ell'\tetra|-C\ell'^2(\ell+\delta+\delta')}{|\ell'\tetra|}=1-C\frac{\ell+\delta+\delta'}{\ell'}
\end{equation}
and 
\begin{equation}\label{volbdbound}
\frac{M_\ell^\partial}{|\ell'\tetra|}\le C \frac{\ell+\delta+\delta'}{\ell'}.
\end{equation}
With these at hand, the stated estimate follows.
\end{proof}

\begin{proof}[Proof of \cref{tetrathermo}]
\noindent\textbf{Proof of (i).}
We split the l.h.s. of \cref{tetracomp} according to 
\begin{align*}
e_{\ell'\tetra,\eta_{\delta'}}(\zeta)&=(1-C\sigma\delta/\ell) e_{\ell'\tetra,\eta_{\delta'}}(\zeta) +  \frac{C\sigma\delta}{\ell} e_{\ell'\tetra,\eta_{\delta'}}(\zeta)\\
&\le (1-C\sigma\delta/\ell) e_{\ell'\tetra,\eta_{\delta'}}(\zeta) + \frac{C\hbar^2\sigma\delta}{\ell} \Bigg[ \dashint_{\RR^3} \zeta^{5/3} + \dashint_{\RR^3} |\grad\sqrt{\zeta}|^2 +  \frac{1}{\ell'\delta'}\dashint_{\RR^3} \zeta \Bigg]
\end{align*}
where we used \cref{basicbdprop} (i) with $\epsilon=1$. Rearranging our bound, we find
\begin{multline*}
\Bigg( 1-C\sigma \frac{\delta}{\ell} -C\sigma\frac{\ell+\delta+\delta'}{\ell'}\Bigg) e_{\ell\tetra,\eta_{\delta}}(\zeta)\\
\le (1-C\sigma\delta/\ell) e_{\ell'\tetra,\eta_{\delta'}}(\zeta)
 +\frac{C}{\ell}\Bigg( (1+\hbar^2\delta^{-1}) \dashint_{\RR^3} \zeta + \delta^3\dashint_{\RR^3} \zeta^2 + \hbar^2\delta \dashint_{\RR^3} \zeta^{5/3} + \hbar^2\delta \dashint_{\RR^3} |\grad\sqrt{\zeta}|^2\Bigg) \\
 + C \frac{\ell+\delta+\delta'}{\ell'} \dashint_{\RR^3} \zeta^{4/3} + \frac{C\hbar^2\sigma\delta}{\ell}  \frac{1}{\ell'\delta'}\dashint_{\RR^3} \zeta
\end{multline*}
Now taking the liminf as $\ell'\to\infty$, and dividing by $(1-C\sigma\delta/\ell)$, we arrive at
\begin{align*}
&e_{\ell\tetra,\eta_{\delta}}(\zeta)\le \ul{e}_\UEG(\zeta) +\frac{C}{\ell}\Bigg( (1+\hbar^2\delta^{-1}) \dashint_{\RR^3} \zeta + \delta^3\dashint_{\RR^3} \zeta^2 + \hbar^2\delta \dashint_{\RR^3} \zeta^{5/3} + \hbar^2\delta \dashint_{\RR^3} |\grad\sqrt{\zeta}|^2\Bigg),
\end{align*}
which is what we wanted to show.

\vspace{.3em}
\noindent\textbf{Proof of (ii).} The proof of the lower bound is similar. Using \cref{impdecup}, we have for $0<\alpha<\frac{1}{2}$, 
\begin{equation}\label{etetraup}
\begin{aligned}
&e_{\ell'\tetra,\eta_{\delta'}}(\zeta)\le\frac{1}{|\ell'\tetra|}\int_{\SO(3)}\dd Q \dashint_{\RR^3}\dd a \dashint_{1-\alpha}^{1+\alpha}\frac{\dd t}{t^4} \int_{\SO(3)}\dd R 
\dashint_{C_{t(\ell+\delta)}}\dd\tau\\
	&\times \sum_{z\in\ZZ^3} \sum_{j=1}^{24} E\bigl((\iden_{t(\ell+\delta) T_j \beta\tetra} * \eta_{t \delta})(R\cdot - t(\ell+\delta) z - \tau)(\iden_{\ell'\tetra}*\eta_{\delta'})\zeta(Q(\cdot - a))\bigr)\\
	&+\Biggl[\frac{C\hbar^2\delta}{\ell} \dashint_{\RR^3}\zeta^{5/3} +  \frac{C\hbar^2}{\ell\delta} \dashint_{\RR^3} \zeta + \frac{C\hbar^2\delta}{\ell}\dashint_{\RR^3} |\grad\sqrt{\zeta}|^2 + C\delta^2\log(\alpha^{-1}) \dashint_{\RR^3}\zeta^2\Biggr]\\
    &=\mathrm{(I)} + \mathrm{(II)}
\end{aligned}
\end{equation}
where we used the notation $\beta=\ell/(\ell+\delta)$, where $\frac{2}{3}\le\beta<1$. Notice that the scales are chosen so that the small tetrahedra are of size $t\ell$ and smeared at scale $t\delta$.

Again, we split (I) according to $J_0$ and $J\setminus J_0$, which we define analogously to above.
First, we consider the summation over $J_0$ in (I), which we denote by (Ia).
Using $t(\ell+\delta) T_j(1-\delta/(\ell+\delta))\tetra=t(\ell+\delta)\beta R_j\tetra + t(\ell+\delta) z_j$, we may write
\begin{multline*}
(\iden_{t(\ell+\delta) T_j\beta\tetra} * \eta_{t\delta})(Rx - t(\ell+\delta) z - \tau)\\
=(\iden_{t\ell\tetra} * \eta_{t\delta})(R_j^\top Rx - R_j^\top(t(\ell+\delta) z + \tau + t(\ell+\delta) z_j))
\end{multline*}
We have
\begin{align*}
&\mathrm{(Ia)}=\frac{1}{|\ell'\tetra|} \dashint_{1-\alpha}^{1+\alpha}\frac{\dd t}{t^4}  \int_{\SO(3)}\dd Q \dashint_{\RR^3}\dd a \int_{\SO(3)}\dd R \dashint_{C_{t(\ell+\delta)}}\dd\tau\sum_{(z,j)\in J_0}\\
	&\times \dashint_{\RR^3}\int_{\SO(3)} E\bigl((\iden_{t\ell\tetra} * \eta_{t\delta})(R_j^\top R\cdot - R_j^\top(t(\ell+\delta) z + \tau + t(\ell+\delta) z_j))\zeta(Q(\cdot - a))\bigr)\\
&=\frac{1}{|\ell'\tetra|} \dashint_{1-\alpha}^{1+\alpha}\frac{\dd t}{t^4} |J_0|
	\int_{\SO(3)}\dd Q \dashint_{\RR^3}\dd a\,  E\bigl((\iden_{t\ell\tetra} * \eta_{t\delta})\zeta( Q(\cdot- a))\bigr)\\
	&=\dashint_{1-\alpha}^{1+\alpha}\frac{\dd t}{t^4} \frac{M_{t\ell}^0}{|\ell'\tetra|} e_{t\ell\tetra,\eta_{t\delta}}(\zeta)
\end{align*}
In the second equality, we used the isometry-invariance of the energy and \cref{permeanthm}.
Here, the volume ratio may be bounded similarly as before
$$
\frac{M_{t\ell}^0}{|\ell'\tetra|}\le 1+C\sigma\frac{t\ell+t\delta+\delta'}{\ell'}\le 1+C\sigma\frac{\ell+\delta+\delta'}{\ell'}.
$$
Next, we estimate the contributions near the boundary using \cref{llsebound},
\begin{multline*}
\mathrm{(Ib)}=\frac{1}{|\ell'\tetra|}\int_{\SO(3)}\dd Q \dashint_{\RR^3}\dd a \dashint_{1-\alpha}^{1+\alpha}\frac{\dd t}{t^4} \int_{\SO(3)}\dd R \dashint_{C_{t(\ell+\delta)}}\dd\tau
\sum_{(z,j)\in J\setminus J_0} \times\\
	\times E\bigl((\iden_{t(\ell+\delta) T_j \beta\tetra} * \eta_{t\delta})(R\cdot - t(\ell+\delta) z - \tau)(\iden_{\ell'\tetra}*\eta_{\delta'})\zeta(Q(\cdot - a))\bigr)\\
\le \hbar^2\frac{1}{|\ell'\tetra|}\int_{\SO(3)}\dd Q \dashint_{\RR^3}\dd a \dashint_{1-\alpha}^{1+\alpha}\frac{\dd t}{t^4} \int_{\SO(3)}\dd R \dashint_{C_{t(\ell+\delta)}}\dd\tau
\sum_{(z,j)\in J\setminus J_0} \times\\
	\times\Bigg[ C\int_{\RR^3}  (\iden_{t(\ell+\delta) T_j \beta\tetra} * \eta_{t\delta})(R\cdot - t(\ell+\delta) z - \tau)(\iden_{\ell'\tetra}*\eta_{\delta'})\zeta(Q(\cdot - a))^{5/3}\\
    + C \int_{\RR^3} \Bigl|\grad\sqrt{(\iden_{t(\ell+\delta) T_j \beta\tetra} * \eta_{t\delta})(R\cdot - t(\ell+\delta) z - \tau)(\iden_{\ell'\tetra}*\eta_{\delta'})\zeta(Q(\cdot - a))}\Bigr|^2 \Bigg]
\end{multline*}
Here, the gradient term may be bounded using the Cauchy--Schwarz inequality as
\begin{align*}
C\int_{\RR^3} \Bigl|\grad\sqrt{(\iden_{t(\ell+\delta) T_j \beta\tetra} * \eta_{t\delta})(R\cdot - t(\ell+\delta) z - \tau)}\Bigr|^2(\iden_{\ell'\tetra}*\eta_{\delta'})\zeta(Q(\cdot - a))\\
+C\int_{\RR^3} \Bigl|\grad\sqrt{\iden_{\ell'\tetra}*\eta_{\delta'}}\Bigr|^2
(\iden_{t(\ell+\delta) T_j \beta\tetra} * \eta_{t\delta})(R\cdot - t(\ell+\delta) z - \tau)\zeta(Q(\cdot - a))\\
+C\int_{\RR^3} \bigl|\grad\sqrt{\zeta(Q(\cdot - a))}\bigr|^2(\iden_{t(\ell+\delta) T_j \beta\tetra} * \eta_{t\delta})(R\cdot - t(\ell+\delta) z - \tau)(\iden_{\ell'\tetra}*\eta_{\delta'})
\end{align*}
After computing the averages with respect to $a$ and $Q$, we obtain by \cref{indgradlemma},
\begin{equation}\label{ebduest}
\mathrm{(Ib)}\le C\hbar^2 \frac{M_{t\ell}^\partial}{|\ell'\tetra|}\Bigg[ \dashint_{\RR^3} \zeta^{5/3} + \frac{1}{\ell\delta}\dashint_{\RR^3}\zeta + \dashint_{\RR^3} |\grad\sqrt{\zeta}|^2\Bigg] + \frac{C\hbar^2}{\ell'\delta'}\dashint_{\RR^3}\zeta.
\end{equation}
Combining this with the bound \cref{volbdbound}, we arrive at an estimate on (Ib) that vanishes in the thermodynamic limit $\ell'\delta'\to\infty$, $\delta'/\ell'\to 0$. 

Collecting our estimates and taking the limsup as $\ell'\delta'\to\infty$, $\delta'/\ell'\to 0$, we obtain
\begin{multline*}
\ol{e}_{\NUEG}(\zeta)\le \dashint_{1-\alpha}^{1+\alpha}\frac{\dd t}{t^4} e_{t\ell\tetra,\eta_{t\delta}}(\zeta)\\
+\frac{C\hbar^2\delta}{\ell} \dashint_{\RR^3}\zeta^{5/3} +  \frac{C\hbar^2}{\ell\delta} \dashint_{\RR^3} \zeta + \frac{C\hbar^2\delta}{\ell}\dashint_{\RR^3} |\grad\sqrt{\zeta}|^2+ C\delta^2\log(\alpha^{-1}) \dashint_{\RR^3}\zeta^2
\end{multline*}
as stated.

\vspace{.3em}
\noindent\textbf{Proof of (iii).} In order to get rid of the averaging in (ii), we use \cref{tetracomp} with $\ell\to t\ell$, $\delta\to t\delta$ and average over $t\in(1/2,3/2)$. We obtain
\begin{multline}\label{monoavg}
e_{\ell'\tetra,\eta_{\delta'}}(\zeta)\ge \Bigg( 1-C\sigma \frac{\delta}{\ell} -C\sigma\frac{\ell+\delta+\delta'}{\ell'}\Bigg) \dashint_{1-\alpha}^{1+\alpha}\frac{\dd t}{t^4} e_{t\ell\tetra,\eta_{t\delta}}(\zeta)\\
- C \frac{\ell+\delta+\delta'}{\ell'} \dashint_{\RR^3} \zeta^{4/3}-\frac{C}{\ell}\Bigg( (1+\hbar^2\delta^{-1}) \dashint_{\RR^3} \zeta + \delta^3\dashint_{\RR^3} \zeta^2 \Bigg).
\end{multline}
Using (ii), we may bound the $t$-average from below by 
$$
\ol{e}_{\NUEG}(\zeta) -\frac{C\hbar^2\delta}{\ell} \dashint_{\RR^3}\zeta^{5/3} -  \frac{C\hbar^2}{\ell\delta} \dashint_{\RR^3} \zeta - \frac{C\hbar^2\delta}{\ell}\dashint_{\RR^3} |\grad\sqrt{\zeta}|^2 -C\delta^2\dashint_{\RR^3}\zeta^2,
$$
to get after discarding the positive terms and rearranging,
\begin{align*}
&e_{\ell'\tetra,\eta_{\delta'}}(\zeta)\ge \ol{e}_{\NUEG}(\zeta) - C\frac{\ell+\delta+\delta'}{\ell'} \dashint_{\RR^3} \zeta^{4/3} \\
&-\frac{C}{\ell}\Biggl( (1+\hbar^2\delta^{-1}) \dashint_{\RR^3} \zeta + \delta^3\dashint_{\RR^3} \zeta^2 + \hbar^2\delta \dashint_{\RR^3} \zeta^{5/3} + \hbar^2\delta \dashint_{\RR^3} |\grad\sqrt{\zeta}|^2\Biggr) - C\delta^2 \dashint_{\RR^3}\zeta^2  \\
&\ge \ol{e}_{\NUEG}(\zeta) - C\frac{\ell+\delta+\delta'}{\ell'} \dashint_{\RR^3} \left(\zeta +\zeta^2\right) \\
&-C\frac{1+\delta^{-1}+\delta}{\ell} \dashint_{\RR^3} \zeta - C\left(\frac{\delta+\delta^3}{\ell}+\delta^2 \right)\dashint_{\RR^3} \zeta^2  -\frac{C\delta}{\ell} \dashint_{\RR^3} |\grad\sqrt{\zeta}|^2
\end{align*}
where in the last step we made use of $\zeta^\alpha\le \zeta+\zeta^2$ for $1\le\alpha\le 2$
Here, we choose $\delta=\ell^{-1/3}$ and then $\ell=(\ell')^{3/5}$ to obtain
\begin{align*}
&e_{\ell'\tetra,\eta_{\delta'}}(\zeta)\ge \ol{e}_{\NUEG}(\zeta) -\frac{C\delta'}{\ell'}\dashint_{\RR^3} \left(\zeta +\zeta^2\right) -\frac{C}{(\ell')^{2/5}}\dashint_{\RR^3} \left(\zeta +\zeta^2\right) - \frac{C}{(\ell')^{4/5}}\dashint_{\RR^3} |\grad\sqrt{\zeta}|^2
\end{align*}
for sufficiently large $\ell'$.

\vspace{.3em}
\noindent\textbf{Proof of (iv).} It only remains to explain why the limit is independent of the
regularization function $\eta$. Suppose that $\wt{\eta}$ is another regularization function. 
One can derive estimates similar to (i)--(iii) with finite-volume quantities involving $\wt{\eta}$ instead.
 This implies that the thermodynamic limit of $e_{\ell\tetra,\wt{\eta}_\delta}(\zeta)$ is the same as that of $e_{\ell\tetra,\eta_\delta}(\zeta)$.
\end{proof}

\subsection{The equivalence of the two definitions of $e_\NUEG$ for tetrahedra}
In a preliminary step towards general domain sequences, we prove that the thermodynamic limit of $\wt{e}_{\Omega,s}(\zeta)$ for a dilated reference tetrahedron exists, and is equal to the limit in \cref{tetrathermo}.

\begin{proposition}\label{tetraeq}
Fix an inhomogeneity $\zeta$ and a regularization function $\eta$. Then the limits
\begin{align*}
&\lim_{\substack{L_N\to\infty\\s_N/L_N\to 0}} \wt{e}_{L_N\tetra,s_N}(\zeta)=\lim_{\ell\to\infty} e_{\ell\tetra,\eta}(\zeta)=e_\NUEG(\zeta)
\end{align*}
exist and are independent of the sequences $L_N$ and $s_N$ and of $\eta$.
\end{proposition}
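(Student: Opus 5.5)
We compare $\wt{e}_{L\tetra,s}(\zeta)$ with the smeared energies $e_{\ell\tetra,\eta_\delta}(\zeta)$ from both sides, and then invoke \cref{tetrathermo}~(iv). Throughout, $L=L_N$, $s=s_N$ with $s\to\infty$ and $s/L\to 0$.

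\emph{Upper bound.} Choose $\delta$ with $b\delta\le s$ and $\delta\to\infty$, e.g.\ $\delta=s/2$. Then $\rho=(\iden_{L\tetra}*\eta_\delta)\zeta(R(\cdot-a))$ is admissible in the infimum of \cref{evoltrans}:
\begin{itemize}
\item by \cref{smeardindunity}, $\iden_{L\tetra}*\eta_\delta\equiv 1$ on $(L\tetra)^{s-}$;
\item $\supp\rho\subset (L\tetra)^{s+}$ and $0\le\iden_{L\tetra}*\eta_\delta\le 1$.
\end{itemize}
Averaging over $R$ and $a$ gives $\wt{e}_{L\tetra,s}(\zeta)\le e_{L\tetra,\eta_{s/2}}(\zeta)$. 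Since $\delta/L\to0$ and $L\delta\to\infty$, \cref{tetrathermo}~(iv) yields $\limsup\wt{e}_{L\tetra,s}\le e_\NUEG(\zeta)$.

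\emph{Lower bound.} This is the main obstacle, since the minimizing $\rho$ in the transition region is arbitrary and no smeared structure is available. I would mimic \cref{monolemm}. Fix any admissible $\rho$ (up to an $\epsilon$ of the infimum), with $\zeta(R(\cdot-a))\iden_{(L\tetra)^{s-}}\le\rho\le\zeta(R(\cdot-a))\iden_{(L\tetra)^{s+}}$. Apply the decoupling lower bound \cref{lboundthm} with small tetrahedra of fixed size $\ell$ and smearing $\delta$, averaged over $(R',\tau)$. This produces the pieces $E(\iden_{\ell\tetra_j}*\eta_\delta(R'\cdot-\ell z-\tau)\rho)$, and we split the index set in two:
\begin{itemize}
\item \emph{Interior pieces}, whose support lies inside $(L\tetra)^{s-}$. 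There $\rho=\zeta(R(\cdot-a))$ exactly. After averaging over $R,a,R',\tau$ and using isometry invariance as in the computation of (Ia), these give $\frac{|J_0||\ell\tetra|}{|L\tetra|}(1-C\delta/\ell)\,e_{\ell\tetra,\eta_\delta}(\zeta)$, with the same $\sigma$-trick if this energy is positive.
\item \emph{Boundary pieces}, meeting $(L\tetra)^{s+}\setminus(L\tetra)^{s-}$. Bound them from below by Lieb--Oxford, $E(\cdot)\ge -c_\LO\int(\cdot)^{4/3}$, and use $\rho\le\zeta(R(\cdot-a))$ there. Averaging gives at least $-C\frac{(s+\ell+\delta)L^2}{L^3}\dashint\zeta^{4/3}$.
\end{itemize}
The key point is that the lower bound needs only pointwise upper control of $\rho$, which the constraint supplies. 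The error terms $\frac{C}{\ell}\int((1+\hbar^2\delta^{-1})\rho+\delta^3\rho^2)$ are controlled by the same pointwise bound, contributing $\frac{C}{\ell}\dashint((1+\hbar^2\delta^{-1})\zeta+\delta^3\zeta^2)$ up to the volume factor $|(L\tetra)^{s+}|/|L\tetra|\to1$. One measurability caveat: the near-minimizer should be chosen measurably in $(R,a)$. This can be avoided by doing the bound pointwise in $(R,a)$ before averaging, since all bounds are uniform.

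\emph{Conclusion.} Letting $N\to\infty$ gives
\[
\liminf_{N\to\infty}\wt{e}_{L_N\tetra,s_N}(\zeta)\ge(1-C\sigma\delta/\ell)\,e_{\ell\tetra,\eta_\delta}(\zeta)-\frac{C}{\ell}(\cdots).
\]
Now let $\ell\to\infty$ with, say, $\delta=\ell^{1/3}$. By \cref{tetrathermo}~(iv) the right side tends to $e_\NUEG(\zeta)$, with all error terms vanishing. Combined with the upper bound, the limit exists and equals $e_\NUEG(\zeta)$. It is independent of $L_N$ and $s_N$, and independent of $\eta$ because $e_\NUEG$ is, again by \cref{tetrathermo}~(iv).
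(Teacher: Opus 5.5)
Your strategy is the paper's. For the upper bound you insert the smeared density as a competitor; you take $\delta=s_N/2$ where the paper takes $\delta=1$, and both choices are covered by \cref{tetrathermo}~(iv). For the lower bound you apply the decoupling lower bound of \cref{lboundthm}: on interior pieces $\rho=\zeta(R(\cdot-a))$ exactly, boundary pieces are handled by Lieb--Oxford, and the localization errors are controlled through the pointwise constraint $\rho\le\zeta(R(\cdot-a))\iden_{(L\tetra)^{s+}}$. All of that is fine, including your measurability remark.

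The step that fails as written is your final choice $\delta=\ell^{1/3}$. The localization error in \cref{lboundthm} contains the term $\frac{C}{\ell}\delta^3\dashint_{\RR^3}\zeta^2$. For $\delta=\ell^{1/3}$ this equals $C\dashint_{\RR^3}\zeta^2$, which does not vanish as $\ell\to\infty$. You would only obtain $\liminf_{N}\wt{e}_{L_N\tetra,s_N}(\zeta)\ge e_\NUEG(\zeta)-C\dashint_{\RR^3}\zeta^2$, so the claim that all error terms vanish is false for this choice.

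You need $\delta^3/\ell\to 0$ in addition to $\ell\delta\to\infty$ and $\delta/\ell\to 0$. The simplest fix is to keep $\delta$ fixed, as the paper does with $\delta=1$. Then every remaining error is $O(1/\ell)$, and $(1-C/\ell)\,e_{\ell\tetra,\eta}(\zeta)\to e_\NUEG(\zeta)$ by \cref{tetrathermo}~(iv). Alternatively, any choice such as $\delta=\ell^{1/4}$ works. With this correction your proof is complete.
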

Notice that in the second limit, we simply took the constant sequence $\delta=1$ of smearing scale, which according to \cref{tetrathermo} converges to $e_\NUEG(\zeta)$.
\begin{proof}
Recall definition \cref{evoltrans}, which in our case reads
$$
\wt{e}_{L_N\tetra,s_N}(\zeta)=\int_{\SO(3)}\dd  R \dashint_{\RR^3}\dd a\, \inf_{\substack{\sqrt{\rho}\in H^1(\RR^3)\\ \zeta(R(\cdot-a))\iden_{(L_N-s_N)\tetra}\le \rho \le \zeta(R(\cdot-a))\iden_{(L_N+s_N)\tetra} }} \frac{E(\rho)}{|L_N\tetra|}
$$
For sufficiently large $N$, the pointwise bound
$$
\zeta(R(\cdot-a))\iden_{(L_N-s_N)\tetra}\le (\iden_{L_N\tetra}*\eta)\zeta(R(\cdot-a)) \le \zeta(R(\cdot-a))\iden_{(L_N+s_N)\tetra}
$$
holds true. In other words, $(\iden_{L_N\tetra}*\eta)\zeta(R(\cdot-a))$ is a competing density in the above infimum.
 We obtain that for $N$ sufficiently large
\begin{equation}\label{etranscomp}
\wt{e}_{L_N\tetra,s_N}(\zeta)\le e_{L_N\tetra,\eta}(\zeta).
\end{equation}

For the lower bound on $\wt{e}_{L_N\tetra,s_N}(\zeta)$ let $\vGamma$ be a state such that
$$
\zeta(R(\cdot-a))\iden_{(L_N-s_N)\tetra}\le \rho_{\vGamma} \le \zeta(R(\cdot-a))\iden_{(L_N+s_N)\tetra}.
$$
Using the lower bound \cref{indlowbd} with $\delta=1$ and dividing by the volume we get
\begin{multline*}
\frac{\EC(\vGamma)}{|L_N\tetra|}\ge \left(1-\frac{C}{\ell}\right) \int_{\SO(3)}\dd Q  \dashint_{C_{\ell}}\dd \tau
\sum_{z\in\ZZ^3} \sum_{j=1}^{24} \frac{\EC\bigl({\vGamma\bigr|}_{\sqrt{\iden_{\ell\tetra_j} * \eta(Q\cdot - \ell z - \tau)}}\bigr)}{|L_N\tetra|}\\
 -\frac{C}{\ell} \frac{1}{|L_N\tetra|}\int_{\RR^3} (\rho_\vGamma + \rho_\vGamma^2 ).
\end{multline*}
As before, we split the $(z,j)$-sum to interior-, and boundary parts and estimate the boundary contributions using the Lieb--Oxford bound. 
Using our constraint on $\rho_\vGamma$ for the last term, optimizing over $\vGamma$ and then averaging over $R$ and $a$ we find
\begin{multline*}
\wt{e}_{L_N\tetra,s_N}(\zeta)\ge \left(1-\frac{C}{\ell}\right) \left(1-C\frac{\ell+s_n}{L_N}\right) e_{\ell\tetra,\eta}(\zeta)\\
- C\frac{\ell+s_n}{L_N} \dashint_{\RR^3} \zeta^{4/3}
- \frac{C}{\ell} \frac{|(L_N+s_N)\tetra|}{|L_N\tetra|}\dashint_{\RR^3}(\zeta+\zeta^2),
\end{multline*}
where we used estimates analogous to \cref{volbubound,volbdbound}. Taking the liminf as $N\to\infty$ and the limsup as $\ell\to\infty$, we arrive at
$$
\liminf_{N\to\infty} \wt{e}_{L_N\tetra,s_N}(\zeta)\ge \limsup_{\ell\to\infty} e_{\ell\tetra,\eta}(\zeta)= e_\NUEG(\zeta).
$$
Combining this with \cref{etranscomp}, we conclude that the two limits exist and equal.
\end{proof}

\subsection{Thermodynamic limit for general domains}
In this section, we prove \cref{thermolimthm}.  The proof is standard and the technique goes back to Fisher \cite{fisher1964free}.

\vspace{.3em}
\noindent\textbf{Proof of (i).} 
Let $\{\Omega_N\}\subset\RR^3$ and $\{\delta_N\}\subset\RR_+$ be sequences such that $\delta_N/|\Omega_N|^{1/3}\to 0$ and $\delta_N|\Omega_N|^{1/3}\to\infty$
and that the Fisher regularity condition $|\partial\Omega_N + B_r|\le Cr|\Omega_N|^{2/3}$ holds true for all $r\le|\Omega_N|^{1/3}/C$. 
It is straightforward to see that the monotonicity bound \cref{tetracomp} holds true with the large tetrahedron $\ell'\tetra$ replaced by a general domain $\Omega_N$,
\begin{multline*}
e_{\Omega_N,\eta_{\delta_N}}(\zeta)\ge \Bigg( 1-C\sigma \frac{\delta}{\ell} -C\sigma\frac{\ell+\delta+\delta_N}{|\Omega_N|^{1/3}}\Bigg) e_{\ell\tetra,\eta_{\delta}}(\zeta)
- C \frac{\ell+\delta+\delta_N}{|\Omega_N|^{1/3}} \dashint_{\RR^3} \zeta^{4/3}\\
-\frac{C}{\ell}\Bigg( (1+\hbar^2\delta^{-1}) \dashint_{\RR^3} \zeta + \delta^3\dashint_{\RR^3} \zeta^2 \Bigg).
\end{multline*}
Taking $\ell=|\Omega_N|^{1/6}$ and $\delta$ fixed, we find
$$
\liminf_{N\to\infty} e_{\Omega_N,\eta_{\delta_N}}(\zeta)\ge e_\NUEG(\zeta),
$$
where the r.h.s. is given by the limit for tetrahedra (see \cref{tetrathermo}).

The proof of the upper bound is similar to that of \cref{tetrathermo} (ii), but instead of the big tetrahedron $\ell'\tetra$, we use the general domain $\Omega_N$. 
We obtain
\begin{multline*}
e_{\Omega_N,\eta_{\delta_N}}(\zeta)\le \left(1+C\sigma\frac{\ell+\delta+\delta_N}{|\Omega_N|^{1/3}}\right)\dashint_{1-\alpha}^{1+\alpha}\frac{\dd t}{t^4} e_{t\ell\tetra,\eta_{t\delta}}(\zeta)\\
+C\hbar^2\frac{\ell+\delta+\delta_N}{|\Omega_N|^{1/3}}\Bigg[ \dashint_{\RR^3} \zeta^{5/3} + \frac{1}{\ell\delta}\dashint_{\RR^3}\zeta + \dashint_{\RR^3} |\grad\sqrt{\zeta}|^2\Bigg] 
+ \frac{C\hbar^2}{\delta_N|\Omega_N|^{1/3}}\dashint_{\RR^3}\zeta\\
+C\delta^2\log(\alpha^{-1}) \dashint_{\RR^3}\zeta^2.
\end{multline*}
Choosing $\ell=|\Omega_N|^{1/6}$ and $\delta=|\Omega_N|^{-1/12}$, applying \cref{monoavg} on the $t$-averaged energy and taking the limsup as $N\to\infty$, we obtain
$$
\limsup_{N\to\infty} e_{\Omega_N,\eta_{\delta_N}}(\zeta)\le e_\NUEG(\zeta).
$$

\vspace{.3em}
\noindent\textbf{Proof of (ii).} Similarly to the lower bound in the proof of \cref{tetraeq} above, we can deduce 
$$
\liminf_{N\to\infty} \wt{e}_{\Omega_N,s_N}(\zeta)\ge e_\NUEG(\zeta).
$$

Next we show the upper bound. Fix $N\in\NN$ large, $a\in\RR^3$ and $Q\in\SO(3)$.
Then by setting 
$$
\rho_N=(\iden_{\Omega_N}*\eta_\delta)\zeta(Q(\cdot-a))
$$
we have
$\zeta(Q(\cdot-a))\iden_{\Omega_N^{s_N-}}\le \rho_N \le \zeta(Q(\cdot-a))\iden_{\Omega_N^{s_N+}}$. Hence,
$$
\inf_{\substack{\sqrt{\rho}\in H^1(\RR^3)\\ \zeta(Q(\cdot-a))\iden_{\Omega_N^{s_N-}}\le \rho \le \zeta(Q(\cdot-a))\iden_{\Omega_N^{s_N+}} }} \frac{E(\rho)}{|\Omega_N|} \le \frac{E(\rho_N)}{|\Omega_N|}.
$$
Clearly, we have $\rho_N\equiv \zeta(Q(\cdot-a))$ on $\Omega_N^{s_N-}$.
Next, using \cref{impdecup}
\begin{align*}
&\wt{e}_{\Omega_N,s_N}(\zeta)\le \dashint_{\RR^3}\dd a\int_{\SO(3)}\dd Q\, \frac{E(\rho_N)}{|\Omega_N|} \\
&\le \frac{1}{|\Omega_N|}\dashint_{\RR^3}\dd a\int_{\SO(3)}\dd Q\dashint_{1-\alpha}^{1+\alpha}\frac{\dd t}{t^4} \int_{\SO(3)}\dd R \dashint_{C_{t(\ell+\delta)}}\dd\tau\\
&\quad\quad\times \sum_{z\in\ZZ^3}  \sum_{j=1}^{24} E(\iden_{t(\ell+\delta) T_j\beta\tetra} * \eta_\delta(R\cdot - t(\ell+\delta) z - \tau)\rho_N)\\
&+\frac{1}{|\Omega_N|} \dashint_{\RR^3}\dd a\int_{\SO(3)}\dd Q \Biggl[ \frac{C\hbar^2\delta}{\ell}\int_{\RR^3}\rho_N^{5/3} +  \frac{C\hbar^2}{\ell\delta} \int_{\RR^3} \rho_N + \frac{C\hbar^2\delta}{\ell}\int_{\RR^3} |\grad\sqrt{\rho_N}|^2\\
&\quad\quad+C\delta^2\log(\alpha^{-1})\int_{\RR^3} \rho_N^2 \Biggr]\\
&=: \mathrm{(I)} + \mathrm{(II)}
\end{align*}
where we used the notation $\beta=\ell/(\ell+\delta)$ as above. 
It is clear that
\begin{align*}
\frac{1}{|\Omega_N|}\dashint_{\RR^3}\dd a\int_{\SO(3)}\dd Q\int_{\RR^3}\rho_N^{\alpha}&\le \dashint_{\RR^3} \zeta^\alpha
\end{align*}
and also
\begin{align*}
&\frac{1}{|\Omega_N|}\dashint_{\RR^3}\dd a\int_{\SO(3)}\dd Q\int_{\RR^3}|\grad\sqrt{\rho_N}|^2
\le \frac{C}{|\Omega_N|^{1/3}\delta}\dashint_{\RR^3}\zeta + C \dashint_{\RR^3} |\grad\sqrt{\zeta}|^2
\end{align*}
using \cref{indgradlemma}. 

Define the index set
$$
J:=\left\{ \begin{aligned}(z,j)\in \ZZ^3\times\{1,\ldots,24\} : &\supp( \iden_{t(\ell+\delta) T_j\beta\tetra} * \eta_\delta(R\cdot - t(\ell+\delta) z - \tau)) \cap \Omega_N^{s_N+}\neq \emptyset,\\
&\text{for some}\;R\in\SO(3)\;\text{and}\; \tau\in C_{t(\ell+\delta)} \end{aligned} \right\},
$$
which collects all the small smeared tetrahedra that possibly intersect with the outer domain $\Omega_N^{s+}$. Furthermore, define
$$
J_0:=\left\{ \begin{aligned}(z,j)\in \ZZ^3\times\{1,\ldots,24\} : &\supp(\iden_{t(\ell+\delta) T_j\beta\tetra} * \eta_\delta(R\cdot - t\ell z - \tau)) \subset \Omega_N^{s_N-},\\
&\text{for all}\;R\in\SO(3)\;\text{and}\; \tau\in C_{t(\ell+\delta)} \end{aligned} \right\},
$$
which contains small smeared tetrahedra that lie inside the inner domain $\Omega_N^{s-}$.
The $J_0$-part of (I) reads
\begin{multline*}
\mathrm{(Ia)}=\frac{1}{|\Omega_N|}\dashint_{\RR^3}\dd a\int_{\SO(3)}\dd Q \dashint_{1-\alpha}^{1+\alpha}\frac{\dd t}{t^4} \int_{\SO(3)}\dd R \dashint_{C_{t(\ell+\delta)}}\dd\tau \\
\times \sum_{(z,j)\in J_0} E\bigl(\iden_{t(\ell+\delta) T_j\beta\tetra} * \eta_\delta(R\cdot - t(\ell+\delta) z - \tau)\zeta(Q(\cdot-a))\bigr),
\end{multline*}
which gives similarly as above,
$$
\frac{|J_0||\ell\tetra|}{|\Omega_N|} \dashint_{1-\alpha}^{1+\alpha}\frac{\dd t}{t^4} e_{t\ell\tetra,\eta_\delta}(\zeta).
$$
The $(J\setminus J_0)$-part of (I) is bounded similarly as \cref{ebduest}, and we find
$$
\mathrm{(Ib)}\le C\hbar^2 \frac{|J\setminus J_0||\tetra|}{|\Omega_N|}\Bigg[ \dashint_{\RR^3} \zeta^{5/3} + \frac{1}{\ell\delta}\dashint_{\RR^3}\zeta + \dashint_{\RR^3} |\grad\sqrt{\zeta}|^2\Bigg] + \frac{C\hbar^2}{|\Omega_N|^{1/3}\delta}\dashint_{\RR^3}\zeta.
$$
The set $J \setminus J_0$ contains tetrahedra at a distance $\OC(\ell + \delta + s_N)$ of $\partial\Omega_N$. Since $|\partial\Omega_N|\le C |\Omega_N|^{2/3}$ by Fisher regularity, the volume fraction is $\OC((\ell + \delta + s_N)/|\Omega_N|^{1/3}) \to 0$ as $N\to\infty$. 
Applying \cref{monoavg} on the $t$-averaged energy of (I), we obtain after taking the first the limsup as $N\to\infty$ and then $\delta/\ell\to 0$, $\ell\delta\to\infty$,
$$
\limsup_{N\to\infty} \wt{e}_{\Omega_N,s_N}(\zeta)\le e_\NUEG(\zeta).
$$
This completes the proof of \cref{thermolimthm}.

\subsection{Local density approximation}
In this section, we briefly sketch the proof of \cref{qlda}. It is more complicated than
the proof of the classical case, which we detailed in \cref{clldaproof}. Here,
we do not attempt to produce explicit constants.

Analogously to the classical case, our bound follows from \emph{a priori} estimates
of \cref{qnuegrmrk} (ii)--(iii) for large $\epsilon$. Hence, we may assume that $\epsilon$ is sufficiently small. 

To show the upper bound, we set $\ell=\epsilon^{-3/2}$ and $\delta=\sqrt{\epsilon}$.
 and start with \cite[Eq. (75)]{lewin2020local}, which implies
\begin{multline*}
E((\iden_{\ell\tetra}*\eta_\delta)\zeta)\le \int_{\RR^3} (\iden_{\ell\tetra}*\eta_\delta)(x) e_\UEG(\zeta(x))\,\dd x
+ C\epsilon\int_{\RR^3} (\iden_{\ell\tetra}*\eta_\delta)\bigl(\zeta + \zeta^2\bigr)\\
+\frac{C}{\epsilon^{4p-1}} \int_{\ell\tetra + B_\delta} |\grad\zeta^\theta|^p + \frac{C}{\epsilon}\int_{\RR^3} (\iden_{\ell\tetra}*\eta_\delta)|\grad\sqrt{\zeta}|^2 
+C\int_{\RR^3} \zeta \left|\grad\sqrt{\iden_{\ell\tetra}*\eta_\delta}\right|^2
\end{multline*}
Replacing $\zeta$ by $\zeta(R(\cdot-a))$ and averaging over all translations and rotations, we find after dividing by the volume
$$
e_{\ell\tetra,\eta_\delta}(\zeta)\le  \dashint_{\RR^3} e_\UEG(\zeta(x))\,\dd x + C\epsilon\dashint_{\RR^3} \bigl(\zeta + \zeta^2\bigr) + 
\frac{C}{\epsilon^{4p-1}} \dashint_{\RR^3} |\grad\zeta^\theta|^p
+ \frac{C}{\epsilon}\dashint_{\RR^3} |\grad\sqrt{\zeta}|^2,
$$
which is our stated upper bound.

The lower bound is more complicated. Replacing $\ell\to t\ell$, $\delta\to t\delta$ and averaging in \cref{lboundmv} of \cref{lboundthm} we obtain for a large smeared tetrahedron of scale $\ell'\gg\ell$
\begin{multline*}
E((\iden_{\ell'\tetra}*\eta_{\delta'})\zeta)\ge \left(1-C\epsilon\right) \dashint_{1/2}^{3/2}\frac{\dd t}{t^4}\sum_{z\in\ZZ^3} \sum_{j=1}^{24} E\bigl(\iden_{t\ell\tetra_j} * \eta_{t\delta}(\cdot - t\ell z)(\iden_{\ell'\tetra}*\eta_{\delta'})\zeta\bigr)\\
 -C\epsilon \int_{\RR^3} (\iden_{\ell'\tetra}*\eta_{\delta'})(\zeta + \epsilon^2\zeta^2 )
\end{multline*}
As before, we split the sum in the first term according to $J_0$ and $J\setminus J_0$,
where $J_0$ and $J$ are defined as in the proof of \cref{monolemm}. The contribution of the $J_0$-part is
\begin{equation}\label{qldainte}
(1-C\epsilon)\dashint_{1/2}^{3/2}\frac{\dd t}{t^4} \sum_{(z,j)\in J_0}E\bigl(\iden_{t\ell\tetra_j} * \eta_{t\delta}(\cdot-t\ell z)\zeta\bigr).
\end{equation}
We claim that if for any $\zeta$ the bound
\begin{multline}\label{qldalow}
\dashint_{1/2}^{3/2}\frac{\dd t}{t^4}\Biggl[ E\bigl(\iden_{t\ell\tetra_j} * \eta_{t\delta}(\cdot-t\ell z)\zeta) - \int_{\RR^3} \iden_{t\ell\tetra_j}*\eta_{t\delta}(x-t\ell z)e_\UEG(\zeta(x))\, \dd x\\
+C\int_{\RR^3} \left|\grad\sqrt{\iden_{t\ell\tetra_j}*\eta_{t\delta}(\cdot-t\ell z)}\right|^2 \zeta 
+ C\epsilon \int_{\RR^3} \iden_{t\ell\tetra_j}*\eta_{t\delta}(\cdot-t\ell z)(\zeta+\zeta^2) \\
+\frac{C}{\epsilon} \int_{\RR^3} \iden_{t\ell\tetra_j}*\eta_{t\delta}(\cdot-t\ell z) |\grad\sqrt{\zeta}|^2 \Biggr] \ge -\frac{C}{\epsilon^{4p-1}} \int_{(2\ell\tetra_j+\ell z)+B_{2\delta}} |\grad\zeta^\theta|^p
\end{multline}
holds true for all $(z,j)\in J_0$ with a universal constant $C>0$, then the proof is finished. In fact, by replacing $\zeta$ by $\zeta(Q(\cdot-a))$ in \cref{qldalow} and averaging over all translations and rotations,
and plugging into \cref{qldainte} we find the stated lower bound in the limit $\delta'/\ell'\to 0$, $\ell'\delta'\to 0$. The boundary contributions in the $J\setminus J_0$-part result in an error term $-C\epsilon \dashint_{\RR^3} (\zeta + \zeta^2 )$ using the Lieb--Oxford inequality. 
Also, the term corresponding to $-C\epsilon$ in \cref{qldainte} is bounded using \cref{llsebound} and merged into the other terms. 

To show \cref{qldalow} we distinguish two cases just like in the classical case.

\noindent\textbf{Case I. (Simple tetrahedra)} If $(z,j)\in J_0$ is such that
\begin{multline}\label{simpineq}
\int_{\RR^3} \iden_{t\ell\tetra_j}*\eta_{t\delta}(\cdot-t\ell z)(\zeta^{4/3}+\zeta^{5/3})\\\le C\epsilon\int_{\RR^3} \iden_{t\ell\tetra_j}*\eta_{t\delta}(\cdot-t\ell z)(\zeta+\zeta^2) + \frac{1}{\epsilon^{4p-1}} \int_{(2\ell\tetra_j+\ell z)+B_{2\delta}} |\grad\zeta^\theta|^p,
\end{multline}
then using the \emph{a priori} bound
\begin{multline*}
\Biggl| E\bigl(\iden_{t\ell\tetra_j} * \eta_{t\delta}(\cdot-t\ell z)\zeta) - \int_{\RR^3} \iden_{t\ell\tetra_j}*\eta_{t\delta}(x-t\ell z)e_\UEG(\zeta(x))\, \dd x \Biggr|\\
\le C\int_{\RR^3} \iden_{t\ell\tetra_j}*\eta_{t\delta}(\cdot-t\ell z)(\zeta^{4/3}+\zeta^{5/3}) + C\int_{\RR^3} \iden_{t\ell\tetra_j}*\eta_{t\delta}(\cdot-t\ell z) |\grad\sqrt{\zeta}|^2\\
+C\int_{\RR^3} \left|\grad\sqrt{\iden_{t\ell\tetra_j}*\eta_{t\delta}(\cdot-t\ell z)}\right|^2 \zeta,
\end{multline*}
which in turn follows from \cref{llsbound} and \cref{lobound}, \emph{implies} 
\cref{qldalow}.

\noindent\textbf{Case II. (Main tetrahedra)} If the opposite of \cref{simpineq} holds in 
a tetrahedron $(z,j)\in J_0$, then we proceed analogously to \cref{clldaproof}.
We omit the details of the rest of the argument for brevity.

\subsection{Semiclassical bound} \label{subsec:Semi-limit}
Here, we prove the bound of \cref{semiclassthm}. Obviously, we have
$E^\hbar(\rho)\ge E^\cl(\rho)$ for all $\sqrt{\rho}\in H^1(\RR^3)$. The issue is that when we put $\rho=(\iden_{\ell'\tetra}*\eta_{\delta})\zeta$, the classical NUEG energy will involve a smooth cutoff instead of the hard one used to our definition. But this smooth cutoff will 
introduce an error which is negligible in the thermodynamic limit.
Using the Graf--Schenker inequality,
\begin{align*}
&e_{\ell'\tetra,\eta_{\delta'}}^\hbar(\zeta)
\ge \int_{\SO(3)}\dd Q\dashint_{\RR^3}\dd a\,\frac{E^\cl\bigl((\iden_{\ell'\tetra}*\eta_{\delta'})\zeta(Q(\cdot-a))\bigr)}{|\ell'\tetra|} \\
&\ge \frac{1}{|\ell'\tetra|} \int_{\SO(3)}\dd Q\dashint_{\RR^3}\dd a \int_{\SO(3)}\dd R \dashint_{C_\ell}\dd\tau \sum_{z\in\ZZ^3}\sum_{j=1}^{24} 
E^\cl\bigl((\iden_{\ell'\tetra}*\eta_{\delta'})\iden_{\ell\tetra_j}(R\cdot-\ell z-\tau)\zeta(Q(\cdot-a))\bigr) \\
&\quad -\frac{C}{\ell} \frac{1}{|\ell'\tetra|} \int_{\SO(3)}\dd Q\dashint_{\RR^3}\dd a \int_{\RR^3} (\iden_{\ell'\tetra}*\eta_{\delta'})\zeta(Q(\cdot-a))=:\mathrm{(I)} + \mathrm{(II)}.
\end{align*}
Here, (II) is simply $-\frac{C}{\ell} \dashint_{\RR^3}\zeta$. To deal with (I), we define the index set
$$
J:=\left\{ \begin{aligned}(z,j)\in \ZZ^3\times\{1,\ldots,24\} : &\supp( \iden_{\ell\tetra_j}(R\cdot - \ell z - \tau)) \cap \supp(\iden_{\ell'\tetra}*\eta_{\delta'})\neq \emptyset,\\
&\text{for some}\;R\in\SO(3)\;\text{and}\; \tau\in C_\ell \end{aligned} \right\},
$$
which collects all the small (sharp) tetrahedra $R^\top(\ell\tetra_j + \ell z + \tau)$ that possibly intersect with the big smeared tetrahedron $\supp(\iden_{\ell'\tetra}*\eta_{\delta'})$. 
Furthermore, define
$$
J_0:=\left\{ \begin{aligned}(z,j)\in \ZZ^3\times\{1,\ldots,24\} : &\supp( \iden_{\ell\tetra_j}(R\cdot - \ell z - \tau)) \subset (\ell'-\delta')\tetra,\\
&\text{for all}\;R\in\SO(3)\;\text{and}\; \tau\in C_\ell \end{aligned} \right\},
$$
which contains the small tetrahedra that are well inside $\ell'\tetra$. Using this, the term (I) can be split into two parts. The first of which is
\begin{align*}
\mathrm{(Ia)}&= \frac{1}{|\ell'\tetra|} \int_{\SO(3)}\dd Q\dashint_{\RR^3}\dd a \int_{\SO(3)}\dd R \dashint_{C_\ell}\dd\tau \sum_{(z,j)\in J_0} 
E^\cl\bigl(\iden_{\ell\tetra_j}(R\cdot-\ell z-\tau)\zeta(Q(\cdot-a))\bigr) \\
&=\frac{|J_0|}{|\ell'\tetra|} \int_{\SO(3)}\dd Q\dashint_{\RR^3}\dd a\,
E^\cl\bigl(\iden_{\ell\tetra}\zeta(Q(\cdot-a))\bigr) \\
&=\frac{|J_0||\ell\tetra|}{|\ell'\tetra|} e_{\ell\tetra}^\cl(\zeta).
\end{align*}
The second part is
\begin{align*}
\mathrm{(Ib)}&= \frac{1}{|\ell'\tetra|} \int_{\SO(3)}\dd Q\dashint_{\RR^3}\dd a \int_{\SO(3)}\dd R \dashint_{C_\ell}\dd\tau \sum_{(z,j)\in J\setminus J_0} 
E^\cl\bigl(\iden_{\ell\tetra_j}(R\cdot-\ell z-\tau)\zeta(Q(\cdot-a))\bigr) \\
&\ge - \frac{|J\setminus J_0|}{|\ell'\tetra|} \int_{\SO(3)}\dd Q\dashint_{\RR^3}\dd a\,
c_\LO \int_{\RR^3} \iden_{\ell\tetra}\zeta(Q(\cdot-a))^{4/3} \\
&\ge - \frac{|J\setminus J_0||\ell\tetra|}{|\ell'\tetra|} c_\LO \dashint_{\RR^3} \zeta^{4/3},
\end{align*}
using the Lieb--Oxford inequality. The volume ratios obey similar bounds as in the proof of \cref{monolemm}, hence by
collecting the above estimates and taking the thermodynamic limit $\delta'/\ell'\to 0$, $\ell'\delta'\to\infty$ and $\ell\to\infty$, we obtain
$e_\NUEG^\hbar(\zeta)\ge e_\NUEG^\cl(\zeta)$,
and from this
$\liminf_{\mu\to 0} \wh{e}_\NUEG(\mu;\zeta_0)\ge e_\NUEG^\cl(\zeta_0)$ as stated.

\appendix

\section{Proof of \cref{permeanthm}}\label{furthersec}
Let first $a=0$ and consider the set of lattice vectors 
of tiles that lie completely inside $C_L$,
$$
\LC_0=\{ l\in\LC' : (\Lambda+l)\subset C_L \},
$$
where $\LC'=\{ l\in\LC : (\Lambda+l)\cap C_L\neq\emptyset\}$,
is the set of lattice vectors of tiles that intersect with $C_L$.
Then we have by the $\LC$-periodicity of $u$,
\begin{align*}
\dashint_{C_L} u&=\frac{1}{|C_L|} \sum_{l\in\LC_0} \int_{\Lambda+l} u + \frac{1}{|C_L|} \sum_{l\in \LC'\setminus\LC_0} \int_{(\Lambda+l)\cap C_L} u\\
&=\frac{|\LC_0||\Lambda|}{|C_L|} \dashint_{\Lambda} u + \frac{1}{|C_L|} \sum_{l\in \LC'\setminus\LC_0} \int_{(\Lambda+l)\cap C_L} u
\end{align*}
Noting that the cells corresponding to $\LC'\setminus\LC_0$ occupy a volume at most $C_\Lambda L^{d-1}|\Lambda|$ for some constant $C_\Lambda>0$ depending only on $\Lambda$, so
$$
\frac{|\LC_0||\Lambda|}{|C_L|}\ge 1 - \frac{C_\Lambda|\Lambda|}{L}.
$$
In summary, we obtain the estimate
\begin{align*}
\left|\dashint_{C_L} u-\dashint_{\Lambda} u\right|\le \left(1-\frac{|\LC_0||\Lambda|}{|C_L|} + \frac{|\LC'\setminus\LC_0||\Lambda|}{|C_L|}\right) \dashint_{\Lambda} |u|\le \frac{C_\Lambda|\Lambda|}{L} \dashint_{\Lambda} |u|
\end{align*}
which proves \cref{permean} for $a=0$. For the general case, we replace $u$ by $u(\cdot+a)$ and note that $\dashint_{\Lambda+a}u=\dashint_{\Lambda} u$ by periodicity. The proof is finished.

\section{Weak-* lower semicontinuity of $e_\NUEG^\hbar$}
This appendix is devoted to the proof of

\begin{theorem}\label{enueglsc}
The functional $\zeta\mapsto e_\NUEG^\hbar(\zeta)$ is weak-* lower semicontinuous in the following sense: for any sequence $\{\sqrt{\zeta_j}\}\subset H_\per^1(\LC)$ and $\sqrt{\zeta}\in H^1_\per(\LC)$ such that $\sqrt{\zeta_j}\wconv\sqrt{\zeta}$ in $\Hloc^1(\RR^3)$, we have
$$
e_\NUEG^\hbar(\zeta)\le \liminf_{j\to\infty} e_\NUEG^\hbar(\zeta_j).
$$
\end{theorem}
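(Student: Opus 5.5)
The plan is to write $e_\NUEG^\hbar(\zeta)$ as an approximate infimum over finite-volume quantities, up to errors that are \emph{uniform} along the sequence. Then lower semicontinuity of $E^\hbar$ on each fixed finite volume can be used, together with Fatou's lemma in the averaging variables $(R,a)$.

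\textbf{Uniform bounds and the reduction.} Weak convergence $\sqrt{\zeta_j}\wconv\sqrt{\zeta}$ in $\Hloc^1$ together with periodicity gives $\sup_j\|\sqrt{\zeta_j}\|_{H^1(\Lambda)}<\infty$. By Sobolev, this bounds $\dashint(\zeta_j+\zeta_j^{4/3}+\zeta_j^{5/3}+\zeta_j^2)$ and $\dashint|\grad\sqrt{\zeta_j}|^2$ uniformly in $j$. Now take \cref{tetrathermo}(iii) with a fixed $\delta$ (say $\delta=1$, or $\delta$ small but fixed). Its error terms depend on $\zeta_j$ only through these quantities, so
$$
e_\NUEG^\hbar(\zeta_j)\le e_{\ell\tetra,\eta_\delta}(\zeta_j)+\frac{C\delta}{\ell}M+\frac{C}{\ell^{2/5}}M+\frac{C}{\ell^{4/5}}M
$$
is \emph{not} the direction we need. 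We need instead a bound of the form $e_\NUEG(\zeta_j)\ge e_{\ell\tetra,\eta_\delta}(\zeta_j)-o_\ell(1)$, uniformly in $j$. This is exactly \cref{tetrathermo}(i), read with $\ul e_\NUEG=e_\NUEG$ by (iv). Its error is $\frac{C}{\ell}\bigl((1+\hbar^2\delta^{-1})\dashint\zeta_j+\delta^3\dashint\zeta_j^2+\hbar^2\delta\dashint\zeta_j^{5/3}+\hbar^2\delta\dashint|\grad\sqrt{\zeta_j}|^2\bigr)\le C_\delta M/\ell$. Hence
$$
\liminf_j e_\NUEG(\zeta_j)\ge \liminf_j e_{\ell\tetra,\eta_\delta}(\zeta_j)-\frac{C_\delta M}{\ell}.
$$
Once we show $\liminf_j e_{\ell\tetra,\eta_\delta}(\zeta_j)\ge e_{\ell\tetra,\eta_\delta}(\zeta)$ for each fixed $\ell$, we let $\ell\to\infty$ and use (iv) for $\zeta$ to conclude.

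\textbf{Finite-volume lower semicontinuity.} If the lattices differ, we interpret periodicity with respect to the common lattice $\LC$ given in the statement. By \cref{permeanthm},
$$
e_{\ell\tetra,\eta_\delta}(\zeta_j)=\int_{\SO(3)}\dd R\dashint_\Lambda\dd a\,\frac{E^\hbar(\rho_{j,R,a})}{|\ell\tetra|},\qquad \rho_{j,R,a}=(\iden_{\ell\tetra}*\eta_\delta)\zeta_j(R(\cdot-a)).
$$
Fix $(R,a)$. The functions $\sqrt{\rho_{j,R,a}}$ are the products of the fixed smooth compactly supported function $\sqrt{\iden_{\ell\tetra}*\eta_\delta}$ with $\sqrt{\zeta_j}(R(\cdot-a))$, so they are bounded in $H^1(\RR^3)$ and supported in a fixed compact set. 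Rellich compactness gives $\sqrt{\rho_{j,R,a}}\to\sqrt{\rho_{R,a}}$ strongly in $L^2\cap L^p$ for $p<6$. Therefore $\rho_{j,R,a}\to\rho_{R,a}$ in $L^1\cap L^{p/2}$, and in particular in $L^{6/5}$. Then $D(\rho_{j,R,a})\to D(\rho_{R,a})$ by Hardy--Littlewood--Sobolev. Also $F_\LL^\hbar$ is weakly lower semicontinuous \cite{lewin2019universal}, in the sense that $\rho_j\to\rho$ in $L^1$ with $\sqrt{\rho_j}$ bounded in $H^1$ gives $F_\LL^\hbar(\rho)\le\liminf F_\LL^\hbar(\rho_j)$. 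Hence $E^\hbar(\rho_{R,a})\le\liminf_j E^\hbar(\rho_{j,R,a})$ pointwise in $(R,a)$.

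\textbf{Main obstacle: Fatou.} To integrate this pointwise bound we need a lower bound $E^\hbar(\rho_{j,R,a})\ge -g$ with $g$ integrable and independent of $j$. The Lieb--Oxford inequality gives $E^\hbar(\rho_{j,R,a})\ge -c_\LO\int\rho_{j,R,a}^{4/3}\ge -c_\LO\int_{\ell\tetra+B_\delta}\zeta_j^{4/3}(R(\cdot-a))$. This quantity is bounded uniformly in $(j,R,a)$ by $C_\ell\sup_j\|\zeta_j\|_{L^{4/3}(\Lambda)}^{4/3}$, since $\ell\tetra+B_\delta$ meets boundedly many cells. A constant is integrable on the probability space $\SO(3)\times\Lambda$, so Fatou applies and gives $e_{\ell\tetra,\eta_\delta}(\zeta)\le\liminf_j e_{\ell\tetra,\eta_\delta}(\zeta_j)$. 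The remaining technical check is that the cited lower semicontinuity of $F_\LL^\hbar$ really holds under the convergence we have. If it is only stated for weak-$*$ convergence of states, we extract minimizing states $\vGamma_j$ with bounded kinetic energy and use the compact support to prevent loss of particles. This is where I expect the care to be needed.
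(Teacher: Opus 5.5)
Your proposal is correct and is essentially the paper's argument. You prove finite-volume lower semicontinuity of $e_{\ell\tetra,\eta_\delta}$ from the Lewin--Lieb--Seiringer lower semicontinuity of $F_\LL^\hbar$, continuity of $D$, and Fatou with a uniform Lieb--Oxford bound, and combine it with \cref{tetrathermo}~(i) for $\zeta_j$, whose error terms are uniform in $j$. The only differences are cosmetic: you recover $e_\NUEG^\hbar(\zeta)$ from (iv) at fixed $\delta$ rather than from the quantitative bound (iii), and you get convergence of $\sqrt{\rho_{j,R,a}}$ by Rellich on the fixed support rather than from the paper's $\dot H^1$-weak-convergence lemma (this only needs $\sqrt{\iden_{\ell\tetra}*\eta_\delta}$ to be Lipschitz, e.g.\ by Glaeser's inequality, not smooth).
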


We set $\hbar=1$ and drop it from the notation for clarity. We first recall the following weak-* lower semicontinuity property of the grand-canonical Levy--Lieb functional from \cite{lewin2019universal}.
\begin{theorem}[Lewin--Lieb--Seiringer]\label{Flsc}
For any sequence $\{\sqrt{\rho_j}\}\subset H^1(\RR^3)$ such that $\sqrt{\rho_j}\wconv \sqrt{\rho}$ in $\dot{H}^1(\RR^3)$, there holds
$F_\LL(\rho)\le \liminf_{j\to\infty} F_\LL(\rho_j)$.
\end{theorem}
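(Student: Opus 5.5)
The statement to prove is the Lewin--Lieb--Seiringer lower semicontinuity (\cref{Flsc}); since it is recalled from \cite{lewin2019universal}, I would reproduce their argument in outline. Let $\liminf_j F_\LL(\rho_j)<\infty$ and pass to a subsequence along which $F_\LL(\rho_j)$ converges to the liminf. Let $\vGamma_j\in\dm$ be the minimizers with $\rho_{\vGamma_j}=\rho_j$; these exist by \cite{lewin2019universal}, as recalled above. Since $\Cc\ge 0$, the kinetic energies $\TC(\vGamma_j)$ are uniformly bounded. By weak-$*$ compactness of grand-canonical states (trace-class Fock space operators), we extract a further subsequence with $\vGamma_j\to\vGamma$ weakly-$*$ in the geometric sense. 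Concretely, each $k$-particle reduced density matrix converges weakly-$*$: $\Gamma_j^{(k)}\wconv\Gamma^{(k)}$.

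\textbf{Step 1: identify the limit.} The limit $\vGamma$ is a grand-canonical state, possibly with lost mass; here the freedom in particle number is what makes the grand-canonical setting work. I would use the fact that weak-$*$ limits of the reduced density matrices of a sequence of Fock states are the reduced density matrices of some state, namely the limiting localized state (geometric convergence). Next, the bound on $\TC(\vGamma_j)$ gives local compactness. By the Rellich theorem, $\gamma_{\vGamma_j}\to\gamma_\vGamma$ locally strongly, so $\rho_{\vGamma_j}\to\rho_\vGamma$ in $\Lloc^1$. Since $\sqrt{\rho_j}\wconv\sqrt\rho$ in $\dot H^1$, Sobolev embedding gives $\sqrt{\rho_j}\to\sqrt\rho$ in $\Lloc^2$, hence $\rho_j\to\rho$ in $\Lloc^1$. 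Therefore $\rho_\vGamma=\rho$.

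\textbf{Step 2: lower semicontinuity of the energy.} The kinetic term $\TC(\vGamma)=\Tr(-\lapl\gamma_\vGamma)$ is weakly-$*$ lower semicontinuous in $\gamma$ by Fatou's lemma applied in Fourier space. The Coulomb term $\Cc(\vGamma)=\Tr(|x-y|^{-1}\Gamma^{(2)})$ is the trace of a positive operator against the two-body density matrix $\Gamma^{(2)}$. It is therefore lower semicontinuous under weak-$*$ convergence of $\Gamma_j^{(2)}$: truncate $|x|^{-1}$ to a bounded, compactly supported function $\min(|x|^{-1},M)\chi$, pass to the limit in the truncated term using the local strong convergence, and then let $M\to\infty$. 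Hence
$$
F_\LL(\rho)\le \TC(\vGamma)+\Cc(\vGamma)\le\liminf_{j\to\infty}\bigl(\TC(\vGamma_j)+\Cc(\vGamma_j)\bigr)=\liminf_{j\to\infty}F_\LL(\rho_j).
$$

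\textbf{Main obstacle.} The delicate step is Step 1: showing that the weak-$*$ limit is an honest element of $\dm$ whose density is exactly $\rho$, even though particles may escape to infinity and the particle number is not fixed. Here the grand-canonical framework is essential, since in the canonical setting the limit would be a mixture with fewer particles. I would handle it with a localization argument. For a cutoff $\chi_R$, localize $\vGamma_j$ to $B_R$ (fermionic localization, \cite[Appendix A]{hainzl2009thermodynamic}); the localized states converge strongly because of the kinetic bound. Consistency across $R$ then defines $\vGamma$, and since local densities converge, the density identity $\rho_\vGamma=\rho$ follows on each ball.
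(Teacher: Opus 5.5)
The paper gives no proof of this statement; it only cites \cite{lewin2019universal}. Your outline is the standard argument: minimizers, a uniform kinetic bound, localized (geometric) convergence, identification of the density by local compactness, and lower semicontinuity of the kinetic and truncated Coulomb terms. However, Step~1 has a genuine gap.

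To glue the consistent local limits $G_R$ of $\vGamma_j|_{B_R}$ into an element of $\dm$, you need control of the particle number, namely $\sup_R \Tr(\mathcal N_{B_R}G_R)=\int_{\RR^3}\rho<\infty$. The same control is needed for the trace-class compactness of $\gamma_{\vGamma_j}$ that you invoke. With it, $\Pr_{G_R}(\mathcal N_{B_R}>M)\le M^{-1}\int\rho$ uniformly in $R$, and this tightness is what makes the glued object a normal state on Fock space. Weak convergence in $\dot H^1$ gives neither $\sup_j\int\rho_j<\infty$ nor $\rho\in L^1$.

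Without such a bound the step fails, and so does the statement itself under the paper's convention $F_\LL(\rho)=+\infty$ for $\rho\notin L^1$. Here is a counterexample.
\begin{itemize}
\item Take $\phi\in C_c^\infty(B_1)$ with $\|\phi\|_2=1$, set $\phi_k=k^{-3/2}\phi\bigl((\cdot-x_k)/k\bigr)$ with $x_k=(e^k,0,0)$, so the supports are pairwise disjoint, and let $\rho_j=\sum_{k\le j}|\phi_k|^2$.
\item The Slater determinant of $\phi_1,\ldots,\phi_j$ has no exchange term, and gives
$$
F_\LL(\rho_j)\le \hbar^2\|\grad\phi\|_2^2\sum_{k\ge 1}k^{-2}+\sum_{k<l}\dist(\supp\phi_k,\supp\phi_l)^{-1}\le C
$$
uniformly in $j$.
\item Meanwhile $\sqrt{\rho_j}\to\sqrt{\rho}=\sum_k|\phi_k|$, even strongly in $\dot H^1$, and $\int\rho=\infty$, so $F_\LL(\rho)=+\infty$.
\item The geometric limit is an infinite Slater determinant, which is not a state on Fock space. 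This is exactly where your claim that ``consistency across $R$ defines $\vGamma$'' breaks down.
\end{itemize}

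So you must add a mass hypothesis: either $\sup_j\int\rho_j<\infty$ (weak convergence in $H^1$), or at least $\rho\in L^1$. With it, your argument goes through as you describe. This costs nothing in the paper's only application, \cref{evollsc}. There the densities $(\iden_{\ell\tetra}*\eta_\delta)\zeta_j(R(\cdot-a))$ are supported in a fixed bounded set and $\zeta_j$ is bounded in $\Lloc^1$, so in fact $\rho_j\to\rho$ in $L^1$.

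Three smaller points:
\begin{itemize}
\item For $k\ge 3$ the $k$-body density matrices need not be even locally trace-bounded; the energy only controls $\mathcal N_{B_R}$ up to second moments. Phrase the compactness through the localized Fock states, as in your last paragraph, rather than through all $\Gamma^{(k)}$.
\item The Coulomb truncation must include a spatial cutoff $\chi_R(x)\chi_R(y)$, since a cutoff in $x-y$ alone is not local.
\item The strong $\Lloc^2$ convergence of $\sqrt{\rho_j}$ comes from Rellich--Kondrachov, not from the Sobolev embedding.
\end{itemize}
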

The crucial point here is that $\int_{\RR^3}\rho_j$ need not converge to $\int_{\RR^3}\rho$, in particular mass may be lost. 
Still, the weak convergence of the gradient of the ``wavefunction'' $\sqrt{\rho_j}$ is sufficient to establish the lower semicontinuity of $F_\LL(\rho)$.

As a preparation, we need a simple lemma.
\begin{lemma}\label{wconvlemm}
Suppose that $\{\sqrt{\zeta_j}\}\subset H^1_\per(\LC)$ and $\sqrt{\zeta}\in H^1_\per(\LC)$ such that 
$\sqrt{\zeta_j}\wconv\sqrt{\zeta}$ in $\Hloc^1(\RR^3)$.
Then $\sqrt{(\iden_\Omega*\eta_\delta)\zeta_j}\wconv \sqrt{(\iden_\Omega*\eta_\delta)\zeta}$
in $\dot{H}^1(\RR^3)$.
\end{lemma}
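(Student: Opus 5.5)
The plan is to show that the functions $g_j=\sqrt{\chi}\sqrt{\zeta_j}$, where $\chi=\iden_\Omega*\eta_\delta$, are bounded in $H^1(\RR^3)$ and converge weakly in $L^2(\RR^3)$ to $g=\sqrt{\chi}\sqrt{\zeta}$. Weak convergence in $\dot{H}^1(\RR^3)$ then follows by the standard subsequence argument. Every subsequence of $\{g_j\}$ has a further subsequence converging weakly in $H^1(\RR^3)$ to some $h$. Since weak $H^1$ convergence implies weak $L^2$ convergence, we get $h=g$. In particular $\grad g_j\wconv\grad g$ in $L^2$, and since the limit is unique, the whole sequence converges.

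\textbf{Step 1: uniform bounds.} Here $\Omega$ is bounded (and, as in \cref{indgradlemma}, has $|\partial\Omega|<\infty$). Hence $\chi\in C_c^\infty$ with $0\le\chi\le1$, and $\supp\chi\subset K:=\ol{\Omega}+\ol{B_\delta}$ is compact. Weak convergence in $\Hloc^1$ gives $\sup_j\|\sqrt{\zeta_j}\|_{H^1(K')}<\infty$ for any bounded neighbourhood $K'$ of $K$. For the $L^2$ norm we have directly $\int g_j^2\le\int_K\zeta_j$. For the gradient we use \cref{gradprodbd}:
$$
|\grad g_j|^2\le 2|\grad\sqrt{\chi}|^2\zeta_j+2\chi|\grad\sqrt{\zeta_j}|^2 .
$$
The second term has integral bounded by $2\int_K|\grad\sqrt{\zeta_j}|^2$. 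The first term is the delicate one, because $\grad\sqrt{\chi}$ need not be bounded near the edge of $\supp\chi$. Nevertheless $|\grad\sqrt{\chi}|^2=|\grad\chi|^2/(4\chi)$ belongs to $L^1$ by \cref{indgradlemma}. If we only knew $\zeta_j\in L^3(K)$, we would need $|\grad\sqrt{\chi}|^2\in L^{3/2}$, which is not clear.

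\textbf{Main obstacle and how to handle it.} I would first try to avoid this issue by writing $\grad g_j$ as a product instead of expanding its square:
$$
\grad g_j=\sqrt{\zeta_j}\,\grad\sqrt{\chi}+\sqrt{\chi}\,\grad\sqrt{\zeta_j}.
$$
If $\grad\sqrt{\chi}\in L^\infty$, then the first term is bounded in $L^2$ because $\sqrt{\zeta_j}$ is bounded in $L^2(K)$. Since $\eta$ is smooth and compactly supported, $\chi$ vanishes to some order at the edge of its support, and one would check that $\sqrt{\chi}$ is Lipschitz. This holds when $\chi$ vanishes quadratically, which I expect from the flat-boundary picture: there $\chi(x)\sim\int_{t>d}\eta$, whose square root is Lipschitz if the one-dimensional marginal of $\eta$ vanishes at least quadratically at the end of its support. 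If this is not available for a general $\eta$, the fallback is to use the $H^1\subset L^6$ bound, which gives uniform boundedness of $\zeta_j$ in $L^3(K)$. One would then show $|\grad\sqrt{\chi}|^2\in L^{3/2}$ by an explicit computation near $\partial\Omega$, using the same decomposition as in the proof of \cref{indgradlemma}.

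\textbf{Step 2: identification of the limit.} By Rellich's theorem, $\sqrt{\zeta_j}\to\sqrt{\zeta}$ strongly in $L^2(K)$. Since $\sqrt{\chi}$ is bounded and supported in $K$, it follows that $g_j\to g$ strongly in $L^2(\RR^3)$. Combined with the uniform $H^1$ bound from Step 1, the subsequence argument described above yields $g_j\wconv g$ in $H^1(\RR^3)$, and in particular in $\dot{H}^1(\RR^3)$.
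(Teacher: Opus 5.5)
Your argument is correct but organized differently from the paper's. The paper tests directly against $\phi\in\dot{H}^1(\RR^3)$. It splits $\grad(\sqrt{\chi}\sqrt{\zeta_j})-\grad(\sqrt{\chi}\sqrt{\zeta})$ into $(\sqrt{\zeta_j}-\sqrt{\zeta})\grad\sqrt{\chi}$ and $\sqrt{\chi}(\grad\sqrt{\zeta_j}-\grad\sqrt{\zeta})$, where $\chi=\iden_\Omega*\eta_\delta$. Since $\grad\sqrt{\chi}\cdot\grad\phi$ and $\sqrt{\chi}\,\grad\phi$ are $L^2$ functions supported in the compact set $\supp\chi$, both pairings tend to zero directly by the weak $\Hloc^1$ convergence. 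So the paper needs no uniform $H^1$ bound, no Rellich compactness and no subsequence argument. Your route (uniform bound, strong $L^2$ convergence, uniqueness of the weak limit) is longer. It does give slightly more: $g_j\wconv g$ in the inhomogeneous $H^1(\RR^3)$, and $g_j\to g$ strongly in $L^2(\RR^3)$.

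Both arguments rest on the same fact, $\grad\sqrt{\chi}\in L^\infty$. The paper simply asserts it; you leave it open with a heuristic and a fallback. Your worry that $\grad\sqrt{\chi}$ might blow up at the edge of $\supp\chi$ is unfounded, and no assumption on how fast $\eta$ vanishes is needed. Take any nonnegative $f\in C^2(\RR^3)$ with $M=\sup|D^2f|<\infty$ and any point with $\grad f(x)\neq 0$. Put $e=\grad f(x)/|\grad f(x)|$ and $h=-|\grad f(x)|/M$ in $0\le f(x+he)\le f(x)+h\,e\cdot\grad f(x)+\frac{M}{2}h^2$. This gives $|\grad f|^2\le 2Mf$, hence $|\grad\sqrt{f+\epsilon}|^2=|\grad f|^2/(4(f+\epsilon))\le M/2$ for every $\epsilon>0$. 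Letting $\epsilon\to 0$ shows $\sqrt{f}$ is Lipschitz with constant $\sqrt{M/2}$. Apply this to $f=\chi$: since $D^2\chi=\iden_\Omega*D^2\eta_\delta$, $M$ is bounded by the $L^1$ norm of $D^2\eta_\delta$. This works for every bounded $\Omega$ and every smooth $\eta$, so the $L^{3/2}$ fallback is unnecessary. Your Step 1 then closes with $\|\grad g_j\|_{L^2}\le\|\grad\sqrt{\chi}\|_{L^\infty}\|\sqrt{\zeta_j}\|_{L^2(K)}+\|\grad\sqrt{\zeta_j}\|_{L^2(K)}$, which is bounded uniformly in $j$.
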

\begin{proof}
We have for for every $\phi\in \dot{H}^1(\RR^3)$,
\begin{align*}
&\int_{\RR^3} \left(\grad\sqrt{(\iden_\Omega*\eta_\delta)\zeta_j} - \grad\sqrt{(\iden_\Omega*\eta_\delta)\zeta}\right)\cdot\grad\phi\\
&=\int_{\RR^3} (\sqrt{\zeta_j}-\sqrt{\zeta})\grad\sqrt{\iden_\Omega*\eta_\delta}\cdot\grad\phi+
\int_{\RR^3} (\iden_\Omega*\eta_\delta)\left(\grad\sqrt{\zeta_j} -\grad\sqrt{\zeta}\right)\cdot\grad\phi,
\end{align*}
where in the first term, 
$$
\|\grad\sqrt{\iden_\Omega*\eta_\delta}\cdot\grad\phi\|_{L^2}\le \|\grad\sqrt{\iden_\Omega*\eta_\delta}\|_{\infty} \|\grad\phi\|_{L^2}<\infty.
$$
Since $\supp(\grad\sqrt{\iden_\Omega*\eta_\delta})$ is bounded, our hypothesis applies and we conclude that the r.h.s. tends to 0. This finishes the proof. 
\end{proof}

Next, we show the weak-* lower semicontinuity of the indirect energy per volume.

\begin{lemma}\label{evollsc}
Suppose that $\{\sqrt{\zeta_j}\}\subset H^1_\per(\LC)$ and $\sqrt{\zeta}\in H^1_\per(\LC)$ such that 
$\sqrt{\zeta_j}\wconv\sqrt{\zeta}$ in $\Hloc^1(\RR^3)$. Then
$$
e_{\ell\tetra,\eta_\delta}(\zeta)\le \liminf_{j\to\infty} e_{\ell\tetra,\eta_\delta}(\zeta_j).
$$
\end{lemma}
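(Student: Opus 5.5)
We need to prove $e_{\ell\tetra,\eta_\delta}(\zeta)\le\liminf_j e_{\ell\tetra,\eta_\delta}(\zeta_j)$. Recall that
$$
e_{\ell\tetra,\eta_\delta}(\zeta_j)=\int_{\SO(3)}\dd R\dashint_{\RR^3}\dd a\, f_j(R,a),\qquad f_j(R,a)=\frac{E\bigl((\iden_{\ell\tetra}*\eta_\delta)\zeta_j(R(\cdot-a))\bigr)}{|\ell\tetra|}.
$$
The plan has three steps: pointwise lower semicontinuity of the integrand in $(R,a)$, a uniform lower bound on the integrand, and Fatou's lemma. The mean value over $\RR^3$ is the average over a unit cell $\Lambda$ (\cref{permeanthm}), because $a\mapsto f_j(R,a)$ is $R^\top\LC$-periodic. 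After the change of variables $a\mapsto R^\top$-rotated unit cell, the whole expression becomes an integral of $f_j$ over the compact set $\SO(3)\times\Lambda_R$. We may therefore write it as an integral over a fixed finite measure space $X$; parametrizing $a=R^\top b$ with $b\in\Lambda$ does this.

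\textbf{Step 1: pointwise lsc.} Fix $(R,a)$. The rotated and translated functions $\sqrt{\zeta_j(R(\cdot-a))}$ still converge weakly in $\Hloc^1$ to $\sqrt{\zeta(R(\cdot-a))}$. By \cref{wconvlemm}, $\sqrt{\rho_j}\wconv\sqrt{\rho}$ in $\dot H^1(\RR^3)$, where $\rho_j=(\iden_{\ell\tetra}*\eta_\delta)\zeta_j(R(\cdot-a))$ and $\rho$ is defined analogously. \Cref{Flsc} then gives $F_\LL(\rho)\le\liminf_j F_\LL(\rho_j)$.

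For the direct term, the $\rho_j$ are supported in the fixed bounded set $\ell\tetra+B_\delta$. Weak $\Hloc^1$ convergence and Rellich compactness give $\sqrt{\zeta_j}\to\sqrt\zeta$ strongly in $L^2$ and $L^4$ on bounded sets, since $H^1\subset L^6$ compactly into $L^q$ for $q<6$. Hence $\rho_j\to\rho$ in $L^1\cap L^2$ on that set, and therefore in $L^{6/5}$. Since $D$ is continuous on $L^{6/5}$ by Hardy–Littlewood–Sobolev, $D(\rho_j)\to D(\rho)$. Combining the two limits yields $f(R,a)\le\liminf_j f_j(R,a)$.

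\textbf{Step 2: uniform lower bound.} By \cref{basicbdprop}(ii) with $\epsilon=3/5$, after dropping the positive $\zeta^{5/3}$ term,
$$
f_j(R,a)\ge -\frac{C}{|\ell\tetra|}\int_{\ell\tetra+B_\delta}\Bigl(\zeta_j^{4/3}+|\grad\sqrt{\zeta_j}|^2+|\grad\sqrt{\iden*\eta_\delta}|^2\zeta_j\Bigr)(R(\cdot-a)).
$$
A weakly convergent sequence is bounded, so $\sup_j\|\sqrt{\zeta_j}\|_{H^1(K)}<\infty$ for every compact $K$. Combined with Sobolev embedding and the periodicity of $\zeta_j$, this bounds the right-hand side by a constant $-M$, uniformly in $j$, $R$ and $a$. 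The expected obstacle is here: Fatou requires a uniform integrable minorant, and the bound must be uniform in $(R,a)$. Periodicity reduces the problem to finitely many unit cells covering $\ell\tetra+B_\delta$, which handles this.

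\textbf{Step 3: Fatou.} With $f_j\ge -M$ on the finite measure space $X$ and the pointwise liminf inequality from Step 1, Fatou's lemma gives
$$
e_{\ell\tetra,\eta_\delta}(\zeta)=\int_X f\le\int_X\liminf_j f_j\le\liminf_j\int_X f_j=\liminf_j e_{\ell\tetra,\eta_\delta}(\zeta_j),
$$
which proves \cref{evollsc}.

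One remaining subtlety is that the lattice is fixed, so $\Lambda$ is the same for all $j$ and the reparametrization $a=R^\top b$, $b\in\Lambda$, produces one common measure space $X$ for every $j$. Measurability of $f_j$ in $(R,a)$ is assumed here; it is implicit in the definition \cref{evoldef}.
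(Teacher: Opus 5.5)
Your proposal is correct and follows essentially the same route as the paper: pointwise lower semicontinuity of the $a$-integrand via \cref{wconvlemm}, \cref{Flsc} and continuity of $D$, then a uniform lower bound on the integrand, then Fatou's lemma over a single unit cell. The only differences are cosmetic. The paper obtains the minorant directly from the Lieb--Oxford inequality, since the kinetic term is nonnegative and so $E(\rho)\ge -c_\LO\int\rho^{4/3}$; this avoids the gradient terms that your use of \cref{basicbdprop}(ii) introduces. Conversely, your reparametrization $a=R^\top b$ and your Rellich/Hardy--Littlewood--Sobolev argument for $D$ spell out points the paper passes over.
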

\begin{proof}
Instead of $\dashint_{\RR^3}\dd a$ we might as well write $\dashint_\Lambda\dd a$ for a unit cell $\Lambda$ of $\LC$ in the definition of $e_{\ell\tetra,\eta_\delta}(\zeta)$. By \cref{Flsc} and the continuity
of $D(\rho)$, we have
\begin{equation}\label{evolscpr}
\begin{aligned}
e_{\Omega,\eta_\delta}(\zeta)
&\le \int_{\SO(3)}\dd  R \dashint_{\Lambda}\dd a\, \liminf_{j\to\infty} \frac{E\bigl((\iden_{\ell\tetra}*\eta_\delta)\zeta_j(R(\cdot-a))\bigr)}{|\ell\tetra|}\\
&=\int_{\SO(3)}\dd  R \dashint_{\Lambda}\dd a\, \liminf_{j\to\infty} f_{\ell\tetra,\eta_\delta,\zeta_j(R\cdot)}(a),
\end{aligned}
\end{equation}
where $f_{\Omega,\eta_\delta,\zeta}(a)$ was defined in \cref{fdef}.
Here, the sequence of functions $f_{\ell\tetra,\eta_\delta,\zeta_j(R\cdot)}$ can be bounded from below by Lieb--Oxford inequality as
\begin{align*}
f_{\ell\tetra,\eta_\delta,\zeta_j(R\cdot)}(a)&\ge - C\frac{1}{|\ell\tetra|}\int_{\RR^3} (\iden_{\ell\tetra}*\eta_\delta)\zeta_j^{4/3}(R(\cdot-a)).
\end{align*}
Consider the set of lattice vectors of the rotated tiles that lie well inside $\iden_{\ell\tetra}*\eta_\delta$,
$$
\LC_0=\{ l\in\LC' : R^\top(\Lambda+l)\subset (\ell-\delta)\tetra \},
$$
where
$\LC'=\{ l\in\LC : R^\top(\Lambda+l)\cap \supp(\iden_{\ell\tetra}*\eta_\delta)\neq\emptyset \}$.
We have
\begin{align*}
&\frac{1}{|\ell\tetra|}\int_{\RR^3} (\iden_{\ell\tetra}*\eta_\delta)\zeta_j^{4/3}(R(\cdot-a))\\
&=\frac{1}{|\ell\tetra|}\sum_{l\in\LC_0}\int_{R^\top(\Lambda+l)}\zeta_j^{4/3}(R(\cdot-a))
+\frac{1}{|\ell\tetra|}\sum_{l\in\LC'\setminus\LC_0} \int_{R^\top(l+\Lambda)} (\iden_{\ell\tetra}*\eta_\delta)\zeta_j^{4/3}(R(\cdot-a))\\
&\le \frac{|\LC_0||\Lambda|}{|\ell\tetra|}\dashint_{\Lambda}\zeta_j^{4/3}
+\frac{1}{|\ell\tetra|}\sum_{l\in\LC'\setminus\LC_0} \int_{R^\top(l+\Lambda)}\zeta_j^{4/3}(R(\cdot-a))\\
&\le \dashint_{\Lambda}\zeta_j^{4/3}
+\frac{|\LC'\setminus\LC_0||\Lambda|}{|\ell\tetra|}\dashint_{\Lambda}\zeta_j^{4/3}.
\end{align*}
Here, $|\LC'\setminus\LC_0||\Lambda|\le C_{\Lambda} \delta \ell^2$, so we obtain the bound
\begin{align*}
f_{\Omega,\eta_\delta,\zeta_j(R\cdot)}(a)&\ge - \frac{C_\Lambda\delta}{\ell} \dashint_{\Lambda}\zeta_j^{4/3}.
\end{align*}
By the weak convergence of $\sqrt{\zeta_j}$, we have $\int_\Lambda\zeta_j\le C$ and $\int_\Lambda |\grad\sqrt{\zeta_j}|^2\le C$. The Sobolev-, and H\"older inequalities imply that $\int_\Lambda \zeta_j^{4/3}\le C$. In conclusion, 
$f_{\ell\tetra,\eta_\delta,\zeta_j(R\cdot)}(a)\ge -C$ for $a\in\Lambda$, so Fatou's lemma may be used to interchange the liminf and the integrations in \cref{evolscpr}, 
which finishes the proof.
\end{proof}

By combining the above finite volume weak-* lower semicontinuity property with the convergence rate estimates from \cref{tetrathermo}, we can complete the proof of \cref{enueglsc}.

\begin{proof}[Proof of \cref{enueglsc}]
Using  our convergence rate estimates \cref{tetrathermo} and \cref{evollsc}, we find 
\begin{align*}
e_\NUEG(\zeta)&\le e_{\ell\tetra,\eta_\delta}(\zeta) + \text{(error term of \cref{tetrathermo} (iii))}_{\ell,\delta}\\
&\le \liminf_{j\to\infty} e_{\ell\tetra,\eta_\delta}(\zeta_j) + \text{(error term of \cref{tetrathermo} (iii))}_{\ell,\delta}\\
&\le \liminf_{j\to\infty} \Big[ e_\NUEG(\zeta_j)  + \text{(error term of \cref{tetrathermo} (i))}_{\ell,\delta,j} \Big]\\
&\quad\quad + \text{(error term of \cref{tetrathermo} (iii))}_{\ell,\delta}\\
&\xrightarrow[\delta/\ell\to 0,\,\ell\delta\to\infty]{} \liminf_{j\to\infty} e_\NUEG(\zeta_j)
\end{align*}
where in the last step we used the fact that the error terms in the convergence rate bound are all uniformly bounded in $j$.
\end{proof}

\bibliographystyle{plainnat}
\bibliography{dft}
\end{document}